\newcommand{\algrule}[1][.2pt]{\par\vskip.5\baselineskip\hrule height #1\par\vskip.5\baselineskip}
\def\eqref#1{equation~\ref{#1}}
\def\1{\bm{1}}
\def\vb{{\bm{b}}}
\def\vc{{\bm{c}}}
\def\vd{{\bm{d}}}
\def\vf{{\bm{f}}}
\def\vm{{\bm{m}}}
\def\vn{{\bm{n}}}
\def\vq{{\bm{q}}}
\def\vr{{\bm{r}}}
\def\vs{{\bm{s}}}
\def\vt{{\bm{t}}}
\def\vu{{\bm{u}}}
\def\vv{{\bm{v}}}
\def\vx{{\bm{x}}}
\def\vz{{\bm{z}}}
\def\mA{{\bm{A}}}
\def\mB{{\bm{B}}}
\def\mC{{\bm{C}}}
\def\mD{{\bm{D}}}
\def\mF{{\bm{F}}}
\def\mH{{\bm{H}}}
\def\mI{{\bm{I}}}
\def\mJ{{\bm{J}}}
\def\mM{{\bm{M}}}
\def\mP{{\bm{P}}}
\def\mQ{{\bm{Q}}}
\def\mR{{\bm{R}}}
\def\mT{{\bm{T}}}
\def\mU{{\bm{U}}}
\def\mV{{\bm{V}}}
\def\mW{{\bm{W}}}
\def\mX{{\bm{X}}}
\def\mY{{\bm{Y}}}
\def\mZ{{\bm{Z}}}
\DeclareMathAlphabet{\mathsfit}{\encodingdefault}{\sfdefault}{m}{sl}
\SetMathAlphabet{\mathsfit}{bold}{\encodingdefault}{\sfdefault}{bx}{n}
\def\gA{{\mathcal{A}}}
\def\gC{{\mathcal{C}}}
\def\gD{{\mathcal{D}}}
\def\gE{{\mathcal{E}}}
\def\gH{{\mathcal{H}}}
\def\gK{{\mathcal{K}}}
\def\gN{{\mathcal{N}}}
\def\gO{{\mathcal{O}}}
\def\gQ{{\mathcal{Q}}}
\def\gR{{\mathcal{R}}}
\def\gS{{\mathcal{S}}}
\def\gT{{\mathcal{T}}}
\def\gU{{\mathcal{U}}}
\def\gV{{\mathcal{V}}}
\def\gZ{{\mathcal{Z}}}
\def\sN{{\mathbb{N}}}
\def\sQ{{\mathcal{Q}}}
\def\sR{{\mathbb{R}}}
\DeclareMathOperator{\Tr}{Tr}
\newtheorem{theorem}{Theorem}[section]
\newtheorem{remark}{Remark}[section]
\newtheorem{corollary}[theorem]{Corollary}
\newtheorem{lemma}[theorem]{Lemma}
\newtheorem{proposition}[theorem]{Proposition}
\newtheorem{definition}{Definition}[section]
\theoremstyle{plain}
\newenvironment{theoremcopy}[1]{\innercustomthm}{\endinnercustomthm}
\newenvironment{propositioncopy}[1]{\innercustomprop}{\endinnercustomprop}
\newcommand\red[1]{\textcolor{red}{#1}}  
\newcommand\blue[1]{\textcolor{blue}{#1}}  
\newcommand\gray[1]{\textcolor{gray}{#1}}  
\newcommand{\stdfont}{}
\newcommand\tinysmall{\@setfontsize\tinysmall{8}{10}}
\DeclareMathOperator{\diag}{diag}
\newcommand{\ip}[2]{\left\langle#1,#2\right\rangle}
\newcommand{\abs}[1]{\left|#1\right|}
\newcommand{\norm}[1]{\left\|#1\right\|}
\newcommand\stepl{{\left(\ell\right)}}
\newcommand\steplplus{{\left(\ell + 1\right)}}
\newcommand{\cutnorm}[1]{\norm{#1}_{\square}}
\newcommand{\Qcutnorm}[1]{\norm{#1}_{\square;\mQ}}
\newcommand{\Fnorm}[1]{\norm{#1}_{\rm F}}
\newcommand{\QFnorm}[1]{\norm{#1}_{{\rm F}; \mQ}}
\newcommand{\QAFnorm}[1]{\norm{#1}_{{\rm F}; \mQ_\mA}}
\newcommand{\sigmoid}{\mathrm{Sigmoid}}
\newcommand{\icg}{\textsc{ICG}\xspace}
\newcommand{\icgs}{\textsc{ICG}\text{s}\xspace}
\newcommand{\icgnn}{\textsc{ICG-NN}\xspace}
\newcommand{\icgnns}{\textsc{ICG-NN}\text{s}\xspace}
\newcommand{\ibg}{\textsc{IBG}\xspace}
\newcommand{\ibgs}{\textsc{IBG}\text{s}\xspace}
\newcommand{\ibgnn}{\textsc{IBG-NN}\xspace}
\newcommand{\ibgnns}{\textsc{IBG-NN}\text{s}\xspace}
\newcommand{\ibgE}{IbgE\xspace}
\newcommand{\ER}{\ensuremath{\mathrm{ER}}}
\newcommand{\vphi}{\boldsymbol{\phi}}
\newcommand{\vpsi}{\boldsymbol{\psi}}
\newcommand{\tikzMotivateGraph}[2]{
\begin{tikzpicture}[
    every node/.style={circle, draw, fill=black, inner sep=2pt, line width=0.5pt}, 
    line width=1pt, 
    >=stealth,
    scale=#2,
]
\definecolor{cartoonred}{RGB}{190,80,80}
\definecolor{cartoonblue}{RGB}{80,80,190}
    \node[fill=white, opacity=0.0] at (3.2,1) {};
    \node (1) at (0.3,0) {};
    \node (2) at (-0.1,0.1) {};
    \node (3) at (-0.1,0.55) {};
    \node (4) at (0.3,0.65) {};
    \node (5) at (0,1) {};
    \node (6) at (-0.3,1.2) {};
    \node (7) at (-0.9,1) {};
    \node (8) at (-1.2,1.4) {};
    \node (9) at (-0.7,1.4) {};
    \node (10) at (-0.1,1.7) {};
    \node (11) at (0.3,1.6) {};
    \node (12) at (-0.1,2.3) {};
    \node (13) at (0.3,2.2) {};

    \node (14) at (1.8,0) {};
    \node (15) at (2.4,0.5) {};
    \node (16) at (1.8,0.6) {};
    \node (17) at (2.3,0) {};
    \node (18) at (2.2,1) {};
    \node (19) at (1.9,1.35) {};
    \node (20) at (2.3,1.8) {};
    \node (21) at (1.9,2.2) {};
    \node (22) at (2,1.7) {};

    \ifthenelse{\equal{#1}{true}}{
        \draw[->,opacity=0.5] (1) to (14);
        \draw[->,opacity=0.5] (1) to (15);
        \draw[->,opacity=0.5] (2) to (19);
        \draw[->,opacity=0.5] (3) to (14);
        \draw[->,opacity=0.5] (4) to (17);
        \draw[->,opacity=0.5] (5) to (16);
        \draw[->,opacity=0.5] (6) to (19);
        
        \draw[->,opacity=0.5] (5) to (17);
        \draw[->,opacity=0.5] (6) to (14);
        \draw[->,opacity=0.5] (7) to (1);
        \draw[->,opacity=0.5] (7) to (16);
        \draw[->,opacity=0.5] (8) to (14);
        \draw[->,opacity=0.5] (9) to (1);
        \draw[->,opacity=0.5] (10) to (1);
        \draw[->,opacity=0.5] (10) to (15);
        \draw[->,opacity=0.5] (11) to (18);

        \draw[->,opacity=0.5] (5) to (22);
        \draw[->,opacity=0.5] (6) to (21);
        \draw[->,opacity=0.5] (10) to (20);
        \draw[->,opacity=0.5] (11) to (22);
        \draw[->,opacity=0.5] (12) to (21);
        \draw[->,opacity=0.5] (13) to (22);
        }{
        
        \draw[draw = none, fill = cartoonred, fill opacity=0.3, rounded corners=10pt] (-0.6,-0.2) rectangle (0.6,1.4);
        \draw[draw = none, fill = cartoonred, fill opacity=0.3, rounded corners=10pt] (1.5,-0.3) rectangle (2.6,1.5);
        \draw[thick, cartoonred, line width=5pt, ->, opacity=0.25] (0.6,0.6) to (1.5,0.6);
        
        \draw[draw = none, fill = green, fill opacity=0.3, rounded corners=10pt] (-0.55,0.4) rectangle (0.7,2.5);
        \draw[draw = none, fill = green, fill opacity=0.3, rounded corners=10pt] (1.6,0.8) rectangle (2.5,2.4);
        \draw[thick, draw=green, line width=5pt, opacity=0.25, ->] (0,2.5) to[in=140, out=50] (1.9,2.38);
    
        \draw[draw = none, fill = cartoonblue, fill opacity=0.3, rounded corners=10pt, rotate=12.5] (-1.2,0.7) rectangle (1.1,1.9);
        \draw[thick, draw=cartoonblue, line width=5pt, opacity=0.25, ->] (-1.45,1) to[out=-170, in=180] (-0.6,0);
        
        \draw[draw = none, fill = cartoonblue, fill opacity=0.3, thick, rounded corners=10pt] 
            (-0.5,-0.2) -- 
            (2.7,-0.2) --  
            (2.7,1.2) --
            (1.6,1.2) --
            (1.6,0.3) --
            (-0.5,0.3) --
            cycle;
    }

\end{tikzpicture}
}
\theoremstyle{plain}
\theoremstyle{definition}
\theoremstyle{remark}
\title{Efficient Learning on Large Graphs using a Densifying Regularity Lemma}
\author{
  Jonathan Kouchly\NoHyper \thanks{Equal contribution. Correspondence to: kjonathan@campus.technion.ac.il}\endNoHyper  \\
  Technion – Israel Institute of Technology
  \And
  Ben Finkelshtein\footnotemark[1] \\
  University of Oxford
  \And
  Michael Bronstein \\
  University of Oxford / AITHYRA
  \And
  Ron Levie \\
  Technion – Israel Institute of Technology
}
\begin{document}

\maketitle

\begin{abstract}
Learning on large graphs presents significant challenges, with traditional Message Passing Neural Networks suffering from computational and memory costs scaling linearly with the number of edges. We introduce the Intersecting Block Graph (IBG), a low-rank factorization of large directed graphs based on combinations of intersecting bipartite components, each consisting of a pair of communities, for source and target nodes. By giving less weight to non-edges, we show how an IBG can efficiently approximate any graph, sparse or dense. Specifically, we prove a constructive version of the weak regularity lemma: for any chosen accuracy, every graph can be approximated by a dense IBG whose rank depends only on that accuracy. This improves over prior versions of the lemma, where the rank depended on the number of nodes for sparse graphs. Our method allows for efficient approximation of large graphs that are both directed and sparse, a crucial capability for many real-world applications. We then introduce a graph neural network architecture operating on the IBG representation of the graph and demonstrating competitive performance on node classification, spatio-temporal graph analysis, and knowledge graph completion, while having memory and computational complexity linear in the number of nodes rather than edges.
\end{abstract}

\section{Introduction}
\label{sec:intro}

Graphs are a powerful representation for structured data, with applications spanning social networks \citep{GraphSAGE, zeng2019graphsaint}, biological systems \citep{Hamilton2017RepresentationLO}, traffic modeling \citep{li2017diffusion}, and knowledge graphs \citep{kok2007statistical}, to name a few. As graph sizes continue to grow in application, learning on such large-scale graphs presents computational and memory challenges. Traditional Message Passing Neural Networks (MPNNs), which form the backbone of most graph signal processing architectures, scale their computational and memory requirements linearly with the {\em number of edges}. This edge-dependence limits their scalability in some situations, e.g., when processing social networks that can typically have $10^8\sim 10^9$ nodes and $10^2\sim 10^3$ as many edges \citep{sign_icml_grl2020}.

Several strategies, called {\em graph reduction methods}, have been proposed to alleviate these challenges. These include {\em graph sparsification}, where a smaller graph is randomly sampled from the large graph \citep{GraphSAGE, zeng2019graphsaint, chen2018fastgcn}; {\em graph condensation}, where a new small graph is created \citep{jin2022condensing, wang2024fast, zheng2024structure}, representing structures in the large graph; and {\em graph coarsening}, where sets of nodes are grouped into super nodes  \citep{diffpool2018, BianchiSpectralPooling2020, huang2021scaling}. However, with the exception of graph sparsification, graph reduction methods typically do not address the problem of processing a graph that is too large to fit at once in memory (e.g., on the GPU). For an extended related work, see \Cref{sec:related_work}. 

Recently, \citet{finkelshtein2024learning} proposed using a low-rank approximation of the graph, called {\em Intersecting Community Graph} (ICG), instead of the graph itself, for processing the data. When training a model on the ICG representation, the computational complexity is reduced from linear in the number of edges (as in MPNNs) to linear in the {\em number of nodes}. 
However, ICG approximation quality deteriorates as graphs become sparser -- a significant limitation for domains like fraud detection, recommendation systems, and social networks where large graph size naturally results in sparse connectivity. Moreover, ICGs are restricted to undirected graphs, limiting their use in applications where edge directionality is essential, such as atmospheric flow in weather forecasting and causal reasoning in algorithms \citep{oskarsson2024probabilistic,smith2025scheduling}. In these settings, both empirical and theoretical work has shown that directionality is key and can significantly boost GNN performance and expressive power \citep{rossi2023edgedirectionalityimproveslearning,bechler2025position}

\begin{figure*}[t]
    \centering
    \begin{subfigure}[t]{0.27\textwidth}
        \centering
        \tikzMotivateGraph{true}{0.9}
        \captionsetup{justification=centering}
        \caption{A directed graph.}
    \end{subfigure}
    \hfill
    \hspace{-0.8cm}
    \begin{subfigure}[t]{0.27\textwidth}
        \centering
        \tikzMotivateGraph{false}{0.9}
        \captionsetup{justification=centering, margin=0.0cm}
        \caption{Approximating 3-IBG.}
    \end{subfigure}
    \hfill
    \begin{subfigure}[t]{0.37\textwidth}
        \centering
        \includegraphics[width=0.67\linewidth]{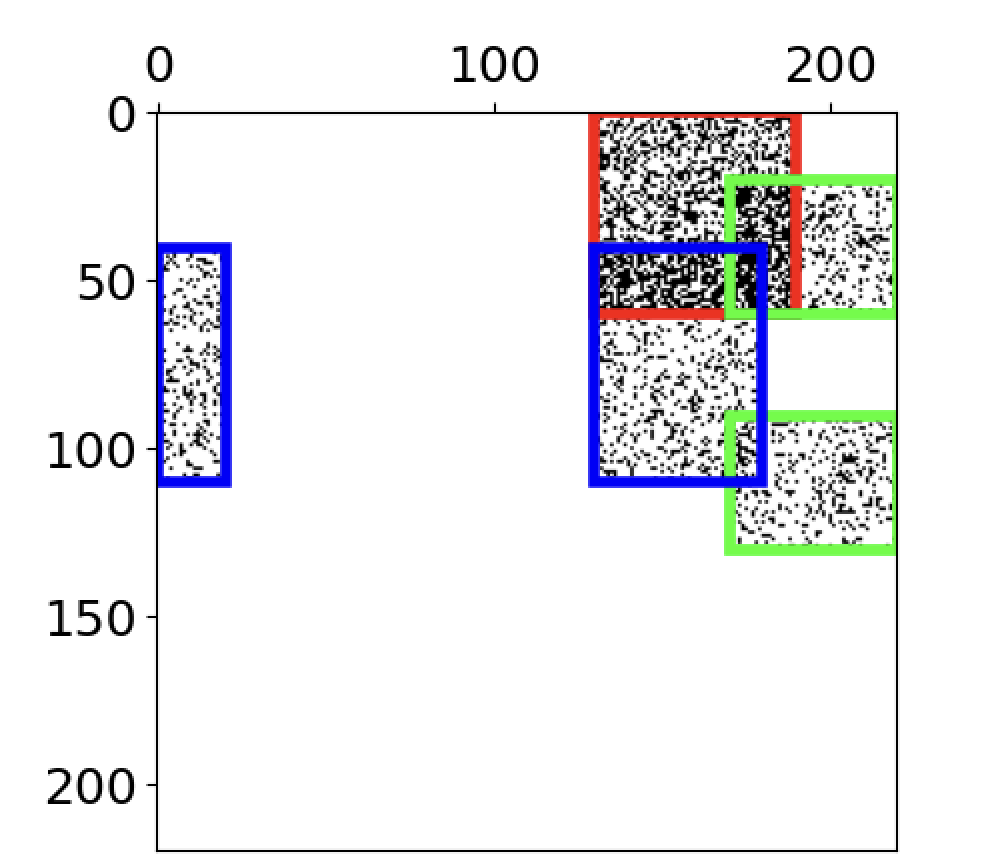}
        \captionsetup{justification=centering, skip=17pt}
        \caption{The intersecting blocks of the 3-IBG.}
    \end{subfigure}
    \caption{(\textbf{a}) Directed graph sampled from a stochastic block model (SBM); (\textbf{b}) The approximating 3-IBG, where source and target community pairs are represented by the same color; (\textbf{c})
    Adjacency matrix of a graph sampled from the same SBM, with the 3-IBG overlaid on the adjacency matrix.}
\label{fig:ibg_diagram}
\end{figure*}

\textbf{Our contribution.}
We introduce a new procedure for approximating general directed graphs $G$ with adjacency matrices $\mA\in \{0,1\}^{N\times N}$ by low-rank matrices $\mC$ that have a special interpretation: the approximating graph consists of a set of overlapping bipartite components. Namely, there is a set of $K\ll N$ pairs of node communities $(\mathcal{U}_i,\mathcal{V}_i)$,  $i=1,\ldots, K$, and each pair defines a weighted bipartite component, in which edges connect each node of $\mathcal{U}_i$ to each node of $\mathcal{V}_i$ with some weight $r_i$ (that can be negative). The full graph $\mC$ is defined as the sum of all of these components, called \emph{blocks} or \emph{directed communities}, where the different communities can overlap (see \cref{fig:ibg_diagram} for a visualization of the approximating graph).
We demonstrate how processing $\mC$ instead of $\mA$ leads to models that solve downstream task in linear time and space complexity with respect to the number of nodes, as opposed to standard MPNNs that are linear in the number of edges.

To fit $\mC$ to $\mA$, we consider a loss function $L_{\mA}(\mC)$ defined as a \emph{weighted norm} of $\mA-\mC$, namely, a standard norm weighted element-wise by  
a weight matrix $\mQ\in (0, \infty)^{N\times N}$.
The goal of using weights is to balance the contributions of edges and non-edges. The weight matrix $\mQ$ is chosen adaptively, depending on the target adjacency matrix $\mA$. 
We consider a {\em weighted cut norm}, denoted by $\sigma_{\square}(\mA||\mC)$, as the approximation metric.
The cut norm is a well-established graph similarity measure that quantifies the maximum discrepancy in their connectivity structure. It enables graph approximations with rank independent of the number of nodes for dense graphs. However, for sparse graphs, the standard cut-metric is dominated by non-edges, which degrades approximation quality. This motivates the use of a weighted cut-norm that balances the contributions of edges and non-edges.
The cut norm also has a probabilistic interpretation that we discuss in \Cref{sec:cutnorm}, and Appendices \ref{app:weak_reg} and \ref{app:graphons}.
Computing the cut norm is NP-hard, which prohibits explicitly optimizing it. To solve this issue, we prove that it is possible to minimize a weighted Frobenius norm $\norm{\mA-\mC}_{\rm F}$ instead of the cut norm, and guarantee that the Frobenius minimizer $\mC^*$ has a small cut error $\sigma_{\square}(\mA||\mC^*)$, even if the minimum $\norm{\mA-\mC^*}_{\rm F}$ itself is large. For that, we formulate a version of the Weak Regularity Lemma (WRL) \citep{frieze1999quick} that we call the \emph{semi-constructive densifying directional soft weak regularity lemma}, or in short the
\emph{Densifying Weak Regularity Lemma}. 

The WRL asserts that one can approximate any graph with $E$ edges and $N$ nodes up to error $\epsilon$ w.r.t. the cut metric by a low-rank graph consisting of $N/(\sqrt{E}\epsilon^2)$ intersecting communities. For more details on the standard WRL see \Cref{app:weak_reg}. 
Our approach, which is an extension of the \icg method \citep{finkelshtein2024learning},  is different from other forms of the WRL in a number of ways:

\begin{itemize}[leftmargin=0.5cm, itemsep=0.2cm, topsep=0.0cm]
    \item 
    While some variants of the WRL only prove existence 
    \citep{Szemeredi_analyst}, we find the approximating low-rank graph as the solution to an ``easy to optimize'' loss function (hence our approach is \emph{constructive}). \vspace{-0.1cm}
    \item 
    While some versions of the WRL \citep{frieze1999quick}  propose an algorithm that provably obtains the approximating low-rank matrix, these algorithms are exponentially slow and not applicable in practice (for example, see Section 7 of \cite{finkelshtein2024learning}). Instead, the loss function we introduce can be efficiently optimized via gradient descent. While the optimization procedure is not guaranteed to find the global minimum since the loss is non-convex, it nevertheless produces high-quality approximations in practice (hence the term \emph{semi}-constructive). To facilitate a gradient descent-based optimization, we relax the combinatorial problem (hence the term \emph{soft}).
    \item 
    Previous versions of the WRL consider undirected graphs, while we treat general directed graphs (hence the term \emph{directional}).
    \item 
    While previous versions of the WRL required $N/(\sqrt{E}\epsilon^2)$ communities for $\epsilon$ error w.r.t. the standard cut metric, we guarantee an $\epsilon$ error in weighted cut metric with only $1/\epsilon^2$ communities. Hence, the number of communities in our method \emph{is independent of any property of the graph, including the number of nodes and sparsity level}. This capability directly follows from balancing the importance of edges and non-edges in our optimization target. As a result, it leads to a formal approach for approximating \emph{sparse large graphs} by \emph{dense low rank graphs}, justifying the term \emph{densifying}.  We emphasize that this independence of the number of communities on the sparsity level is not merely an artifact of carefully renormalizing the loss to artificially facilitate the desired error bound. Rather, the loss function is deliberately designed to promote denseness when approximating graphs. The ability to efficiently densify a given graph can improve downstream tasks like node classification, as, in some sense, the densified version  $\mC^*$ of the graph $\mA$ strengthens the connectivity patterns of the graph. We stress that as opposed to naive densification approaches, our method improves \emph{both computational complexity and accuracy}.
\end{itemize}

In this paper, we introduce a new graph similarity measure that enables efficient approximation of any graph, including sparse and directed ones. We develop a non-trivial extension of the ICG method called the {\em Intersecting Blocks Graph (\ibg)}. Our central theoretical result shows that any graph, sparse or dense, can be approximated with an error that depends only on the desired accuracy, independent of graph size or sparsity level.
This advancement enables the design of IBG Neural Networks (IBG-NNs), which operate directly on the IBG representation of any graph. IBG-NNs allow solving downstream tasks such as node classification, spatio-temporal graph analysis, and knowledge graph completion in $\gO(N)$ operations rather than $\gO(E)$. We demonstrate that our approach achieves state-of-the art accuracy on standard benchmarks, while being very efficient.
For background on the predecessor of our method, \icg \citep{finkelshtein2024learning}, see  Appendix~\ref{sec:related_work}.  For a comparison of our method with \icg, see \Cref{app:ibg_vs_icg,app:exp_num_communities}.

\vspace{-0.2cm}
\section{Basic definitions and notations}
\label{sec:prelims}

\vspace{-0.2cm}
We denote matrices by boldface uppercase letters, e.g., $\mD$, vectors by boldface lowercase $\vd$, and their scalar entries by the same lowercase letter $d_i$ with subscript for the index.

\textbf{Graph signals.} We consider \emph{directed} (unweighted) graphs $G$ with sets of $N$ nodes 
$\mathcal{V}=[N] = \{1, \dots, N\}$, 
$E$ edges $\gE\subseteq \mathcal{V}\times\mathcal{V}$, adjacency matrix $\mA=(a_{i,j})_{i,j=1}^N\in\{0,1\}^{N\times N}$, and node feature matrix $\mX=(x_{i,j})_{i,j=1}^{N,D}\in [-1,1]^{N \times D}$, called the \emph{signal}. We follow the graph signal processing convention and represent the data as \emph{graph-signals} $G=(\mA,\mX)$.  
We emphasize that signals are always normalized to have values in $[-1,1]$, which does not limit generality as the units of measurement can always be linearly changed. All constructions also apply to signals with values in $\sR$, but for simplicity of the analysis we limit the values to $[-1,1]$.

We denote the $j$-th column of the matrix $\mQ$ by $\mQ_{:,j}$, and the $i$-th row by $\mQ_{i,:}$. We often also denote the $i$-th row by $\vq_i^\top$  and respectively  denote $\mQ=(\vq_i)_{i=1}^M$. 
We identify vectors $\vv=(v_1)_{i=1}^N$ with corresponding functions $i\mapsto v_i$. Similarly, we treat $\mX$ as  a function $\mX : [N] \to \sR^D$, with $\mX(n) = \vx_n$. We  denote by $\diag(\vr)\in\sR^{K\times K}$  the diagonal matrix with diagonal elements  $\vr\in\sR^K$.

\textbf{Frobenius norm.} 
The \emph{weighted Frobenius norm} of a square matrix $\mD\in\sR^{N\times N}$ with respect to the \emph{weight} $\mQ\in(0,\infty)^{N\times N}$  is defined to be
$\QFnorm{\mD}:=\big(\frac{1}{\sum_{i,j=1}^N q_{i,j}}\sum_{i,j=1}^{N} d_{i,j}^2q_{i,j}\big)^{1/2}$.  
Denote $\Fnorm{\mD}:=\norm{\mD}_{\rm F; \mathbf{1}}$, where $\mathbf{1}$ is the all-1 matrix. The Frobenius norm of a signal $\mY\in\sR^{N\times D}$ is defined by  $\Fnorm{\mY}:=\sqrt{\frac{1}{ND}\sum_{j=1}^D \sum_{i=1}^N y_{i,j}^2}$. The \emph{weighted Frobenius norm with weights} $\alpha,\beta>0$ of a matrix-signal $(\mD,\mY)$ is defined by $\QFnorm{(\mD,\mY)}=\norm{(\mD,\mY)}_{{\rm F}; \mQ,\alpha,\beta}:= \sqrt{\alpha\QFnorm{\mD}^2 + \beta \Fnorm{\mY}^2}$.

\vspace{-0.3cm}
\section{Weighted graph similarity measures}
\label{sec:cutnorm}

\vspace{-0.1cm}
\textbf{Weighted cut-metric.} The \emph{cut-metric} is a graph similarity measure based on the \emph{cut-norm}. Below, we define it for graphs of the same size; extensions to arbitrary graphs use graphons  (see \citep{Szemeredi_analyst} for graphons, and \citep{Levie2023} for graphon-signals). 

\begin{definition}
    The weighted \emph{matrix cut-norm} of $\mD \in \sR^{N \times N}$ with weights $\mQ \in (0,\infty)^{N \times N}$, is defined to be
    \vspace{-0.1cm}
        \begin{equation*}
            \Qcutnorm{\mD} = \frac{1}{\sum_{i,j}q_{i,j}} \max_{\gU,\gV \subset [N]} \Big| \sum_{i \in \gU} \sum_{j \in \gV} d_{i,j}q_{i,j} \Big|.
        \end{equation*}
    The \emph{signal cut-norm} of $\mY\in \sR^{N \times D}$ is defined to be 
        \begin{equation*}
            \cutnorm{\mY} = \frac{1}{DN} \sum_{j=1}^D \max_{\gU \subset [N]} \Big| \sum_{i \in \gU} y_{i,j} \Big|.
        \end{equation*}
    The \emph{weighted matrix-signal cut-norm} of $(\mD, \mY)$, with weights $\alpha, \beta > 0$, is defined to be
    \begin{equation}
\label{eq:def_graphon_signal_cut_norm}
  \Qcutnorm{(\mD, \mY)} = \norm{(\mD, \mY)}_{\square;\mQ,\alpha,\beta}  := \alpha \Qcutnorm{\mD} + \beta \cutnorm{\mY}.   
    \end{equation}
\end{definition}

Note that \cite{finkelshtein2024learning} used the weighted cut-norm   $\cutnorm{\mD}:=(N^2/E)\norm{\mD}_{\square;\mathbf{1}}$. 

\paragraph{Densifying cut similarity.} A key limitation of the cut-metric in sparse graphs arises from the dominance of non-edges in the graph structure. Sparse graphs, such as those common in link prediction and knowledge graph completion \citep{dettmers2018convolutional}, often have a small number of edges relative to non-edges. This imbalance causes the cut-metric to be dominated by non-edges unless they are properly weighted.  
We believe that this imbalance significantly impacted the quality of the approximating ICG in \cite{finkelshtein2024learning}, which uses the unweighted cut-metric, i.e. with $\mQ=\mathbf{1}$, and consequently the underperformance of downstream tasks which operate on the ICG. 
 
Motivated by this limitation, we define a similarity measure that better addresses the structural imbalance inherent in sparse graphs. 
We propose the \emph{densifying cut similarity}, a modification of the cut-metric that lowers the contributions of non-edges w.r.t. the standard cut-metric. We define a weighted adjacency matrix $\mQ$ that assigns a weight $e$ to non-edges and $1$ to edges. The parameter $e$ is chosen based on the desired balance, controlled by a factor $\Gamma$, which determines the proportion of non-edges relative to edges. For a detailed derivation of the definition from motivating guidelines, and its relation to negative sampling in knowledge graph completion and link prediction, see \cref{app:graphons}.

\vspace{0.05cm}
\begin{definition}
\label{def:densifying}
    Let $\mA\in\{0,1\}^{N\times N}$ be an unweighted adjacency matrix, and $\Gamma>0$. The \emph{densifying cut similarity} between the target $\mA$ and any adjacency matrix $\mB\in \mathbb{R}^{N\times N}$ is defined to be
    \[ \sigma_{\square}(\mA|| \mB) = \sigma_{\square;\Gamma}(\mA|| \mB) := (1+\Gamma)\norm{\mA-\mB}_{\square;\mQ_{\mA}},\]
    where the weight matrix $\mQ_\mA$ is
    \begin{equation}
    \label{eq:def_Q_A}
        \mQ_{\mA}=\mQ_{\mA,\Gamma}:= e_{E,\Gamma}\mathbf{1} + (1-e_{E,\Gamma})\mA, \quad \text{with} \ \  e_{E,\Gamma} = \frac{\Gamma E/N^2}{1 - (E/N^2)}. 
    \end{equation}

    Given $\alpha,\beta>0$ such that $\alpha+\beta=1$, the \emph{densifying cut similarity} between the target graph-signal $(\mA,\mX)$ and the  graph-signal $(\mA',\mX')$ is defined to be
\[\sigma_{\square}\big((\mA,\mX)|| (\mA',\mX')\big)  =\sigma_{\square;\alpha,\beta,\Gamma}\big((\mA,\mX)|| (\mA',\mX')\big)   
        :=  \alpha\sigma_{\square;\Gamma}(\mA|| \mB) +\beta\norm{\mX-\mX'}_{\square}. \]
\end{definition}

We stress that the weighted Frobenius norm with the weight from Definition \ref{def:densifying}  is well normalized, i.e., $(1+\Gamma)\norm{\mA}_{{\mathrm F};\mQ_{\mA}}=1$, suggesting that the norm $\sigma_{\square}\big((\mA,\mX)|| (\mA',\mX')\big)$ is meaningfully standardized. Namely, we expect $\sigma_{\square}\big((\mA,\mX)|| (\mA',\mX')\big)$ to have magnitude of the order of 1 when $\mB$ is a ``bad'' approximation of $\mA$, and to be $\ll 1$ when $\mB$ is a ``good'' approximation. Also not that the standard cut metric is retrieved when $\Gamma=N^2/E-1$.

\section{Approximations by intersecting blocks}
\vspace{-0.15cm}
\subsection{Intersecting block graphs}
\vspace{-0.2cm}
For any subset of nodes $\gU \subset [N]$, the indicator function $\boldsymbol{\mathbbm{1}}_{\gU}$ is defined as $\boldsymbol{\mathbbm{1}}_{\gU}(i) = 1$ if $i \in \gU$ and $0$ otherwise. As explained above, we treat $\boldsymbol{\mathbbm{1}}_{\gU}$ as a vector in $\sR^N$. Denote by $\chi$ the set of all such indicator functions.  
We define an \emph{Intersecting Block Graph (\ibg)} with $K$ classes ($K$-\ibg) as a low-rank graph-signal $(\mC,\mP)$ with adjacency matrix and signals given respectively by
\begin{equation*}
    \mC = \sum_{j=1}^K r_j \boldsymbol{\mathbbm{1}}_{\gU_j}\boldsymbol{\mathbbm{1}}_{\gV_j}^\top ,\quad
    \mP = \sum_{j=1}^K  \boldsymbol{\mathbbm{1}}_{\gU_j}\vf_j^\top + \boldsymbol{\mathbbm{1}}_{\gV_j}\vb_j^\top
\end{equation*}
where $r_j\in\sR$, $\vf_j,\vb_j\in\sR^D$, and $\gU_j,\gV_j \subset[N]$.
Next, we relax the $\{0,1\}$-valued hard indicator functions $\boldsymbol{\mathbbm{1}}_{\gU}, \boldsymbol{\mathbbm{1}}_{\gV}$ to \emph{soft affiliation functions} with values in $\sR$, as defined next, to allow continuously optimizing IBGs. Definition \ref{def:soft_affiliation_model} is taken from \cite{finkelshtein2024learning}.

\vspace{0.05cm}
\begin{definition}
    \label{def:soft_affiliation_model}
    A set $\sQ$ of vectors $\vu:[N]\rightarrow\sR$  that contains $\chi$ is called a \emph{soft affiliation model}.
\end{definition}

\vspace{0.05cm}
\begin{definition}
    Let $d\in\mathbb{N}$, and let $\gQ$ be a soft affiliation model. We define $[\gQ]\subset\mathbb{R}^{N\times N}\times \mathbb{R}^{N\times D}$ to be the set of all elements 
    of the form $(r\vu\vv^\top,\vu\vf^\top + \vv\vb^\top)$, with $\vu,\vv\in\sQ$, $r\in\sR$ and $\vf,\vb\in\sR^D$. 
    We call $[\gQ]$ the \emph{soft rank-1 intersecting block graph (\ibg) model} corresponding to $\sQ$.
    Given $K\in\sN$, the subset $[\sQ]_K$ of  $\sR^{N\times N}\times \sR^{N\times D}$ of all linear combinations of $K$ elements of $[\sQ]$ is called the \emph{soft rank-$K$ \ibg model} corresponding to $\sQ$. 
\end{definition}

In matrix form, an \ibg $(\mC,\mP)\in \sR^{N\times N} \times \sR^{N\times D}$ in $[\sQ]_K$ can be written as
\begin{equation}
    \label{eq:LA_IBG}
    \mC= \mU\diag(\vr)\mV^\top \quad \text{and} \quad \mP=\mU\mF+\mV\mB
\end{equation}
via the \emph{target community affiliation matrix} $\mU\in\sR^{N\times K}$, the \emph{source community affiliation matrix}  $\mV\in\sR^{N\times K}$, the \emph{community magnitude vector}  $\vr\in\sR^{K}$, the \emph{target  community feature matrix} $\mF\in\sR^{K\times D}$ and the \emph{source  community feature matrix} $\mB\in\sR^{K\times D}$.

\subsection{The densifying  regularity lemma}
Directly minimizing the densifying cut similarity (or cut metric) is both numerically unstable and computationally difficult since it involves a maximization step, making the optimization a min-max problem. To overcome this, we introduce a middle-ground solution, providing an efficient semi-constructive version of the weak regularity lemma for intersecting blocks.
The approach is termed semi-constructive because it formulates the approximating graph as the solution to an "easy-to-solve" optimization problem that can be efficiently handled using standard gradient descent techniques. 

The theorem generalizes the semi-constructive WRL based on intersecting communities of \cite{finkelshtein2024learning} in three main ways: (1) extending the theorem to directed graphs and the densifying graph similarity, instead of undirected graphs and the cut norm, (2) introducing a certificate for testing that the high probability event in which the cut similarity error is small occurred, and the key novelty of this theorem -- (3) it addresses a major limitation of the previous work by providing a bound that is independent of the graph size for both sparse and dense graphs, whereas the bound in \cite{finkelshtein2024learning} depended on the graph size for sparse graphs.  For a  more extensive comparison between our approach and \cite{finkelshtein2024learning} see Appendix~\ref{app:ibg_vs_icg}.

\vspace{0.2cm}
\begin{theorem}
\label{thm:G_regularity}
    Let $(\mA,\mX)$ be a graph-signal,  $K\in\mathbb{N}$,  $\delta>0$, and let $\mathcal{Q}$ be a soft affiliation model. Let  $\alpha,\beta>0$ such that $\alpha+\beta=1$.  Let $\Gamma>0$ and let $\mQ_{\mA}$ be the weight matrix defined in \cref{def:densifying}. 
    Let $R\geq 1$ such that $K/R\in\mathbb{N}$. 
    For every $k\in\mathbb{N}$, 
    let
    \[\eta_k= (1+\delta)\min_{(\mC,\mP)\in[\mathcal{Q}]_k}\norm{(\mA,\mX)- (\mC,\mP)}^2_{{\rm F} ; \mQ_{\mA},\alpha(1+\Gamma),\beta}.\]
Then,  
    \begin{enumerate}[leftmargin=0.5cm]\itemsep-0.1em
        \item  \label{item:main_graph_deterministic}For every $m\in\mathbb{N}$, any \ibg $(\mC^*,\mP^*)\in[\mathcal{Q}]_m$
    that gives a close-to-best weighted Frobenius approximation of $(\mA,\mX)$ in the sense that 
    \begin{equation}
        \label{eq:main_reg_item1_cond}
         \norm{(\mA,\mX )- (\mC^*,\mP^*)}^2_{{\rm F} ; \mQ_{\mA},\alpha(1+\Gamma),\beta}\leq \eta_m,
    \end{equation}
    also satisfies
    \begin{equation} 
    \label{eq:main_reg_item1_then}\sigma_{\square;\alpha,\beta,\Gamma}\big((\mA,\mX)||(\mC^*,\mP^*) \big) \leq(\sqrt{\alpha(1+\Gamma)}+\sqrt{\beta})\sqrt{\eta_m-\frac{\eta_{m+1}}{1+\delta} }.
    \end{equation}
    \item \label{item:main_graph_probablistic}
    If $m$ is uniformly randomly sampled from $[K]$, then in probability $1-\frac{1}{R}$,
    \begin{equation}
    \label{eq:certificate}
      \sqrt{\eta_m-\frac{\eta_{m+1}}{1+\delta} }\leq \sqrt{\delta + \frac{R(1+\delta)}{K}}.
    \end{equation}
    Specifically,
     in probability $1-\frac{1}{R}$, any $(\mC^*,\mP^*)\in[\mathcal{Q}]_m$ which satisfies (\ref{eq:main_reg_item1_cond}), also satisfies 
     \begin{equation}
     \label{eq:main_reg_item2}
        \sigma_{\square,\alpha,\beta}\big( (\mA,\mX)-(\mC^*,\mP^*) \big)\leq \sqrt{2+\Gamma}\sqrt{\delta+ \frac{R(1+\delta)}{K}}.
     \end{equation}
    \end{enumerate}
\end{theorem}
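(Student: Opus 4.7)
The plan is to follow a semi-constructive regularity strategy. I prove item~\ref{item:main_graph_deterministic} by a one-step improvement argument: if the densifying cut similarity of the residual were large, then one could exhibit a rank-$1$ \ibg whose addition to $(\mC^*,\mP^*)$ reduces the weighted Frobenius-squared error by enough to contradict (\ref{eq:main_reg_item1_cond}) combined with the definition of $\eta_{m+1}$. Item~\ref{item:main_graph_probablistic} then follows by a telescoping/Markov argument on the non-increasing sequence $(\eta_k)$.

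For item~\ref{item:main_graph_deterministic}, write $\mD=\mA-\mC^*$, $\mY=\mS-\mP^*$, $a=\norm{\mD}_{\square;\mQ_\mA}$, and $b=\cutnorm{\mY}$. I construct two candidate rank-$1$ \ibg additions. The \emph{graph candidate} is $(r^*\boldsymbol{\mathbbm{1}}_{\gU_g}\boldsymbol{\mathbbm{1}}_{\gV_g}^\top,0)$, where $(\gU_g,\gV_g)$ realizes the weighted matrix cut-norm of $\mD$ and $r^*$ is the optimal scalar; completing the square and bounding $\sum_{i\in\gU_g,\,j\in\gV_g}(\mQ_\mA)_{i,j}\leq\sum_{i,j}(\mQ_\mA)_{i,j}$ gives weighted Frobenius-squared reduction at least $\alpha(1+\Gamma)\,a^2$. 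The \emph{signal candidate} is $(0,\boldsymbol{\mathbbm{1}}_{\gU_s}\vf^\top)$, where $(\gU_s,\vf)$ is chosen (after reformulating $\cutnorm{\mY}$ through an equivalent single-$\gU$ form via per-column signs absorbed into $\vf$) to achieve signal Frobenius-squared reduction at least $\beta\,b^2$. Each candidate added to $(\mC^*,\mP^*)$ lies in $[\mathcal{Q}]_{m+1}$, so its error is an upper bound on $\eta_{m+1}/(1+\delta)$. Combining with (\ref{eq:main_reg_item1_cond}) yields $\eta_m-\eta_{m+1}/(1+\delta)\geq\max(\alpha(1+\Gamma)a^2,\beta b^2)$. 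The elementary inequality $(px+qy)^2\leq(p+q)^2\max(x^2,y^2)$, applied with $p=\sqrt{\alpha(1+\Gamma)}$, $q=\sqrt\beta$, $x=\sqrt{\alpha(1+\Gamma)}\,a$, $y=\sqrt\beta\,b$, then delivers (\ref{eq:main_reg_item1_then}) after a square root.

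For item~\ref{item:main_graph_probablistic}, the inclusion $[\mathcal{Q}]_k\subset[\mathcal{Q}]_{k+1}$ makes $(\eta_k)$ non-increasing, hence $g_m:=\eta_m-\eta_{m+1}/(1+\delta)\geq 0$. The trivial element $(\mC,\mP)=(0,0)$, the normalization $(1+\Gamma)\norm{\mA}_{\mathrm F;\mQ_\mA}=1$, and $\Fnorm{\mS}\leq 1$ together yield $\eta_k\leq(1+\delta)(\alpha/(1+\Gamma)+\beta)\leq 1+\delta$ and, correspondingly, an unscaled-minimum bound of $1$. I then decompose $g_m=\tfrac{\delta}{1+\delta}\eta_m+\tfrac{\eta_m-\eta_{m+1}}{1+\delta}$ and telescope $\sum_m(\eta_m-\eta_{m+1})\leq\eta_1\leq 1+\delta$; Markov's inequality applied to the drop term over uniform $m\in[K]$ (the divisibility $K/R\in\mathbb{N}$ is used to run the Markov bound exactly on $K/R$ indices) bounds it by $R(1+\delta)/K$ with probability $\geq 1-1/R$, while the curvature term $\tfrac{\delta}{1+\delta}\eta_m$ is deterministically at most $\tfrac{\delta}{1+\delta}$. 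Adding the two gives (\ref{eq:certificate}); substituting into item~\ref{item:main_graph_deterministic} and using the bound $(\sqrt{\alpha(1+\Gamma)}+\sqrt\beta)^2\leq 2+\Gamma$ (obtained by a short case analysis on whether $\alpha(1+\Gamma)$ or $\beta$ dominates) produces (\ref{eq:main_reg_item2}).

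\textbf{Main obstacle.} The most delicate step is the signal-side rank-$1$ reduction of $\beta\,b^2$: because $\cutnorm{\mY}$ takes a per-column max with \emph{column-dependent} optimizing sets, a single pair $(\vu,\vv)$ of affiliations cannot in general simultaneously realize all columns' optimizers. The resolution is to recast the signal cut-norm as a single-$\gU$ supremum by absorbing the per-column sign into the vector $\vf\in\sR^D$, together with the flexibility of the soft-affiliation model $\mathcal{Q}\supset\chi$. A secondary technical difficulty is recovering the precise constants $\delta/(1+\delta)$ and $R(1+\delta)/K$ in item~\ref{item:main_graph_probablistic}: a direct Markov bound on $\sum_m g_m$ yields only a looser constant of order $R\delta$, so the explicit decomposition of $g_m$ into the "curvature" term $\tfrac{\delta}{1+\delta}\eta_m$ and the "drop" term $\tfrac{\eta_m-\eta_{m+1}}{1+\delta}$ is essential to separate the two contributions cleanly.
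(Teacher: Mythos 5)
Your architecture coincides with the paper's. Item~\ref{item:main_graph_deterministic} is proved there via a Hilbert-space lemma whose discriminant computation is exactly your one-step-improvement bound (the optimal step $t$ along a rank-$1$ direction $w$ reduces the squared error by $\ip{w}{g-g^*}^2/\norm{w}^2$, which cannot exceed $\eta_m-\eta_{m+1}/(1+\delta)$), and item~\ref{item:main_graph_probablistic} is the same telescoping argument (the paper counts bad indices by pigeonhole rather than Markov, and incidentally shares your constant slippage in the curvature term, since only $\eta_m\leq(1+\delta)$ is available rather than $\eta_m\leq 1$). Your graph-part candidate, the final combination of the two cut-norm terms, and the Cauchy--Schwarz step $\sqrt{\alpha(1+\Gamma)}+\sqrt{\beta}\leq\sqrt{2+\Gamma}$ all match.

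The gap is in the signal step. You require a \emph{single} element of $[\mathcal{Q}]$ whose addition reduces the weighted squared Frobenius error by at least $\beta b^2$, where $b=\cutnorm{\mS-\mP^*}$ is the column-averaged signal cut norm, and you propose to obtain it by recasting $\cutnorm{\cdot}$ as a single-$\gU$ supremum with per-column signs absorbed into $\vf$. This cannot work: the obstruction is the column-dependence of the optimizing \emph{sets}, not of the signs. Writing $\mY=\mS-\mP^*$ and $m_j=\max_{\gU}|\sum_{i\in\gU}y_{i,j}|$, the reduction achievable by $(0,\boldsymbol{\mathbbm{1}}_{\gU}\vf^\top)$ with the optimal $\vf$ is $\tfrac{\beta}{ND\,|\gU|}\sum_j\big(\sum_{i\in\gU}y_{i,j}\big)^2=\tfrac{\beta}{ND\,|\gU|}\|\mY^\top\boldsymbol{\mathbbm{1}}_{\gU}\|_2^2$, whereas $\beta b^2=\tfrac{\beta}{D^2N^2}\big(\sum_j m_j\big)^2$. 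Taking $\mY\in\{-1,1\}^{N\times D}$ with orthogonal columns (e.g.\ Hadamard columns, $D=N$) gives $m_j\approx N/2$ for every $j$, hence $b\approx 1/2$, while $\|\mY^\top\boldsymbol{\mathbbm{1}}_{\gU}\|_2^2\leq\|\mY\|_{\mathrm{op}}^2|\gU|=O((N+D)|\gU|)$, so the best achievable single-candidate reduction is $O(\beta/D+\beta/N)\ll\beta b^2$. The theorem is saved because the inequality $\beta b^2\leq\eta_m-\eta_{m+1}/(1+\delta)$ does not need one candidate to realize the whole reduction: for each column $j$ and each set $\gU$ \emph{separately}, the candidate $(0,\boldsymbol{\mathbbm{1}}_{\gU}\,f\,\vb^\top)$ with $\vb$ the $j$-th standard basis vector yields $\tfrac1N\big|\sum_{i\in\gU}y_{i,j}\big|\leq\sqrt{(\eta_m-\eta_{m+1}/(1+\delta))/\beta}$, a bound uniform in $j$ and $\gU$; averaging these identical per-column bounds over $j$ bounds $b$ itself by the same quantity. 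This column-by-column testing is exactly what the paper's proof does; substituting it for your single-candidate construction repairs the argument without changing anything else.
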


The proof of Theorem \ref{thm:G_regularity} is in Appendix \ref{Appendix:regularity_proof}.

Specifically, the three main ways in which \cref{thm:G_regularity} generalizes \icg to \ibg are depicted in: (1) The use of directional \ibgs, (2) the deterministic certificate for the high probability event given by \cref{item:main_graph_deterministic}, and (3) the approximation bound in \cref{item:main_graph_probablistic} which is independent of graph sparsity.

\subsection{Optimizing IBGs with oracle Frobenius minimizers}
\vspace{-0.2cm}
Suppose there exists an oracle optimization method that solves (\ref{eq:main_reg_item1_cond}) in $T_K$ operations for any $m \leq K$. 
\cref{thm:G_regularity} motivates the following algorithm for approximating a graph-signal by an \ibg.  
\vspace{-0.23cm}
\begin{itemize}[leftmargin=0.4cm]\itemsep-0.1em
    \item Randomly sample $m \in [K]$. By \cref{item:main_graph_probablistic} of \cref{thm:G_regularity}, the approximation bound (\ref{eq:main_reg_item2}) is  satisfied in high probability $(1-1/R)$.
    \item To verify that (\ref{eq:main_reg_item2}) really happened for the given realization of $m$, we estimate the left-hand-side of (\ref{eq:certificate}), using the oracle optimizer in $2T_K$ operations (computing both $\eta_m$ and $\eta_{m+1}$), checking if (\ref{eq:certificate}) is satisfied. If it is, it guarantees (\ref{eq:main_reg_item2}) by (\ref{eq:main_reg_item1_then}).
    \item If the bound (\ref{eq:certificate}) is not satisfied (in probability less than $\frac{1}{R}$), resample $m$ and repeat.  
\end{itemize}
\vspace{-0.23cm}
The expected number of resamplings of $m$ is $R/(R-1)$, so the algorithm's expected runtime to find an \ibg satisfying (\ref{eq:main_reg_item2}) is $2T_KR/(R-1)$.
 In practice, instead of an oracle optimizer, we apply gradient descent to estimate the optimum of the left-hand side of (\ref{eq:main_reg_item1_cond}), which requires $T_K=\gO(E)$ operations by \cref{prop:efficient_sparse_loss}. This makes the algorithm as efficient as message passing in practice.

\subsection{Fitting intersecting blocks using gradient descent}
\label{sec:fit}
\vspace{-0.2cm}
In this section, we propose an efficient computation for fitting \ibgs to directed graphs based on Theorem \ref{thm:G_regularity} (minimizing the left-hand-side of (\ref{eq:main_reg_item1_cond}) via gradient descent). As the soft affiliation model, we consider all vectors in $[0,1]^{N}$.
In the notations of (\ref{eq:LA_IBG}), we optimize the parameters $\mU,\mV\in[0,1]^{N \times K}$,  $\vr\in\sR^K$ and $\mF,\mB\in\sR^{K \times D}$ to minimize the weighted Frobenius norm
\vspace{-0.1cm}
\begin{equation}
\label{eq:dir_inefficient_loss} 
    L(\mU,\mV,\vr,\mF,\mB) = \alpha(1+\Gamma) \norm{\mA -  \mU\diag(\vr)\mV^\top}^2_{{\rm F} ; \mQ_{\mA}} + \beta\Fnorm{\mX - \mU\mF - \mV\mB}^2.
\end{equation}
In practice, we implement $\mU,\mV\in [0,1]^{N\times K}$ by applying a sigmoid activation function to learned matrices $\mU',\mV' \in \sR^{N \times K}$, setting $\mU=\sigmoid(\mU')$ and $\mV=\sigmoid(\mV')$. 
 
Optimizing (\ref{eq:dir_inefficient_loss}) na\"ively  requires $\gO(N^2)$ operations, as the matrix $\mA - \mU\diag(\vr)\mV^\top \in \sR^{N\times N}$ is not sparse nor low-rank. However, we can exploit the sparsity of $\mA$ and the low-rank structure of $\mU\diag(\vr)\mV^\top$ separately to enable an efficient computation with time and space complexities of $\gO(K^2N + KE)$ and $\gO(KN + E)$, respectively.

\begin{proposition}
\label{prop:efficient_sparse_loss}
   Let $\mA=(a_{i,j})_{i,j=1}^N$ be an adjacency matrix of an unweighted graph with $E$ edges. The graph part of the sparse Frobenius loss (\ref{eq:dir_inefficient_loss}) can be written as
     \begin{align*}
        & \QAFnorm{\mA -  \mU \diag(\vr)\mV^\top}^2 = \QAFnorm{A}^2 + \frac{e_{E,\Gamma}}{\left(1+\Gamma\right) E}\Tr\left((\mV^\top\mV)\diag(\vr)(\mU^\top\mU)\diag(\vr)\right) \\
        & - \frac{2}{\left(1+\Gamma\right) E}\sum_{i=1}^N\sum_{j\in \gN(i)}\mU_{i,:}\diag(\vr)\left(\mV^\top\right)_{:,j}a_{i,j} 
         + \frac{1-e_{E,\Gamma}}{\left(1+\Gamma\right) E}\sum_{i=1}^N\sum_{j\in \gN(i)}(\mU_{i,:}\diag(\vr)\left(\mV^\top\right)_{:,j})^2
    \end{align*}
    where $\mQ_\mA$ and $e_{E,\Gamma}$ are defined in (\ref{eq:def_Q_A}). Computing the right-hand-side and its gradients with respect to $\mU$, $\mV$ and $\vr$ has a time complexity of $\gO(K^2N + KE)$, 
    and a space complexity of $\gO(KN+E)$.
\end{proposition}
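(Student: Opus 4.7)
}

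The plan is to expand the weighted Frobenius square directly and regroup the sums so that only two types of quantities appear: (i) a trace involving the small Gram matrices $\mU^\top\mU$ and $\mV^\top\mV$, which captures the ``dense'' contribution, and (ii) sums indexed by the edges $\{(i,j):a_{i,j}=1\}$, which capture the ``sparse correction.'' Because $\mQ_\mA$ takes only two distinct values on the two-colouring $\{a_{i,j}=0\}$ vs. $\{a_{i,j}=1\}$, the weighted sum naturally splits along these lines.

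First I would compute the normaliser. From $\mQ_\mA = e_{E,\Gamma}\mathbf{1} + (1-e_{E,\Gamma})\mA$ and $e_{E,\Gamma}=\Gamma E/(N^2-E)$, a direct calculation gives
\[
\sum_{i,j} q_{i,j} \;=\; e_{E,\Gamma}(N^2 - E) + E \;=\; \Gamma E + E \;=\; (1+\Gamma)E,
\]
so the prefactor is $\tfrac{1}{(1+\Gamma)E}$, matching the right-hand side. In particular, $\QAFnorm{\mA}^2 = \tfrac{E}{(1+\Gamma)E} = \tfrac{1}{1+\Gamma}$, which will absorb the constant part.

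Next, writing $c_{i,j}=\mU_{i,:}\diag(\vr)(\mV^\top)_{:,j}$ and using $a_{i,j}^2=a_{i,j}$, I would expand
\[
(a_{i,j}-c_{i,j})^2 q_{i,j} \;=\; a_{i,j} q_{i,j} - 2 a_{i,j} c_{i,j} q_{i,j} + c_{i,j}^2 q_{i,j},
\]
and observe that (a) $q_{i,j}=1$ whenever $a_{i,j}=1$, so $a_{i,j}q_{i,j}=a_{i,j}$ and $a_{i,j}c_{i,j}q_{i,j}=a_{i,j}c_{i,j}$; and (b) $c_{i,j}^2 q_{i,j} = e_{E,\Gamma}c_{i,j}^2 + (1-e_{E,\Gamma})a_{i,j}c_{i,j}^2$. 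Summing over all $(i,j)$ and using $\sum_{i,j}c_{i,j}^2 = \Fnorm{\mU\diag(\vr)\mV^\top}_{\mathrm F}^2 = \Tr\!\bigl((\mV^\top\mV)\diag(\vr)(\mU^\top\mU)\diag(\vr)\bigr)$, and noting that $\sum_{i,j}a_{i,j}c_{i,j}=\sum_{i=1}^{N}\sum_{j\in\gN(i)}c_{i,j}a_{i,j}$ (and similarly for $a_{i,j}c_{i,j}^2$), dividing by $(1+\Gamma)E$ yields the claimed identity.

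Finally, for the complexity statement I would analyse each term separately. The Gram matrices $\mU^\top\mU,\mV^\top\mV\in\sR^{K\times K}$ are formed in $\gO(K^2 N)$ time and $\gO(K^2+KN)$ space; the subsequent $K\times K$ trace computation costs $\gO(K^3)\subseteq\gO(K^2 N)$. For each edge $(i,j)$ the scalar $\mU_{i,:}\diag(\vr)(\mV^\top)_{:,j}$ is computed in $\gO(K)$ time, giving $\gO(KE)$ total for the two edge-indexed sums, with $\gO(E)$ storage for the sparse adjacency. Summing gives $\gO(K^2 N + KE)$ time and $\gO(KN + E)$ memory. The gradient with respect to $\mU$, $\mV$, $\vr$ inherits the same asymptotics: differentiating the trace is a constant number of $K\times K$ and $N\times K$ matrix products, while backpropagating through the edge sums visits each edge $\gO(1)$ times at cost $\gO(K)$ per visit, preserving both bounds. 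The only mild subtlety I anticipate is keeping track of the $e_{E,\Gamma}$ coefficients through the split and making sure the $\QAFnorm{\mA}^2$ constant is matched correctly; once the $q_{i,j}\in\{1,e_{E,\Gamma}\}$ case split is made explicit, the rest is a straightforward rearrangement.
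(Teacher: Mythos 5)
Your proposal is correct and follows essentially the same route as the paper: expand the weighted square, exploit that $q_{i,j}$ takes only the two values $1$ (edges) and $e_{E,\Gamma}$ (non-edges) to split the sum into a dense quadratic term handled via $\Tr\bigl((\mV^\top\mV)\diag(\vr)(\mU^\top\mU)\diag(\vr)\bigr)$ plus edge-indexed corrections, and then count $\gO(K^2N)$ for the Gram-matrix part and $\gO(KE)$ for the message-passing part. The only cosmetic difference is that you substitute $q_{i,j}=e_{E,\Gamma}+(1-e_{E,\Gamma})a_{i,j}$ pointwise rather than first splitting the inner sum over $j\in\gN(i)$ versus $j\notin\gN(i)$, which is an equivalent rearrangement.
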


We prove Proposition \ref{prop:efficient_sparse_loss} in \cref{app:efficient_sparse_loss}. The parametres of the \ibg, $\mU,\mV,\vr,\mF$, and $\mB$, are optimized efficiently using gradient descent on \cref{eq:dir_inefficient_loss}, but restructured like Proposition \ref{prop:efficient_sparse_loss}.

\subsection{The learning pipeline with IBGs}
\vspace{-0.2cm}

When learning on large graphs using IBGs, the first step is fitting an \ibg to the given graph. This is done once with little to no hyperparameter tuning in $\gO(E)$ time and memory complexity. The second step is solving the task, e.g., node classification. This step typically involves an extensive hyperparameter search. In our pipeline, the neural network processes the \ibg representation of the data, instead of the standard graph representation. This improves time complexity from $\gO(E)$ to $\gO(N)$. Thus, when searching through $S\in\mathbb{N}$ hyperparameter configurations, the whole search takes $\gO(SN)$ time, while learning directly on the graph would take $\gO(SE)$. 
The efficiency of our pipeline is even more pronounced in spatio-temporal prediction, where the graph remains fixed while node features evolve over time. Here, the IBG is fitted to the graph only once, regardless of the time steps.

\textbf{Initialization of IBG.} In Appendix \ref{app:init} we explain how to use a low-rank SVD of the graph to efficiently initialize a rank $K$ \ibg,  before the gradient descent minimization of \cref{eq:dir_inefficient_loss}. When the graph is too large to fit in memory,  we also propose an efficient randomized SVD algorithm for approximating the SVD while only loading a fraction of the graph into memory (Appendix \ref{app:monte_carlo_svd}).

\textbf{SGD for fitting \ibgs to large graphs.} Fitting an IBG to a graph requires $\gO(E)$ memory complexity, which may exceed the GPU capacity in some situations. To solve this,  in Appendix \ref{Learning ICG with subgraph SGD} we propose a sampling approach for optimizing the IBG, which reduces the memory complexity, allowing fitting IBGs to large graphs on hardware with limited memory.

\section{Processing IBGs with neural networks}

\paragraph{Graph signal processing with \ibg.} 
\vspace{-0.2cm}
\citet{finkelshtein2024learning} 
proposed a signal processing paradigm for learning on \icgs. In this section we provide an extended paradigm for learning on \ibgs, which runs in $\gO(NK)$ operations per layer, which is often  faster than  the $\gO(E)$ complexity of MPNNs.
Let $\mU,\mV\in\sR^{N\times K}$ be target and source community affiliation matrices. We call $\sR^{N\times D}$ the \emph{node space}  and $\sR^{K\times D}$ the \emph{community space}. 
We use the following operations to process signals:
\vspace{-0.2cm}
\begin{itemize}[leftmargin=0.5cm]\itemsep-0.1em
    \item \emph{Target synthesis} and \emph{source synthesis} are the respective mappings $\mF\mapsto\mU\mF, \mB\mapsto\mV\mB$ from the community space to the node space, in $O(NKD)$. 
    \item \emph{Target analysis} and \emph{source analysis} are the respective mappings $\mX\mapsto\mU^\dagger\mX$ or $\mX\mapsto\mV^\dagger\mX$ from the node space to the community space, in $O(NKD)$. 
    \item \emph{Community processing} refers to any operation that manipulates the community feature vectors $\mF$ and $\mB$ (e.g., an MLP) in $O(K^2D^2)$ operations (or less). 
    \item \emph{Node processing} is any function that operates on node features in $O(ND^2)$ operations.
\end{itemize}

\looseness=-1
\paragraph{\ibg Neural Networks.}
We propose an \ibg-based architecture (IBG-NN) defined as follows. Let $D^{(\ell)}$ denote the dimension of the node features at layer $\ell$, and set the initial node representations as $\mH^{(0)} = \mX$. Then, for layers $0 \leq \ell \leq L - 1$, the node features are defined by
\[\mH_s^\steplplus = \sigma\left(\Theta_1^s\left(\mH^\stepl_s\right) + \Theta_2^s\left(\mV\mB^\stepl\right)\right),\quad \mH_t^\steplplus = \sigma\left(\Theta_1^t\left(\mH^\stepl_t\right) + \Theta_2^t\left(\mU\mF^\stepl\right)\right),\]
and require $O(D(NK +KD+ND))$ operations. The final representation, used for node-level predictions, is taken as 
    $\mH^{(L)} = \mH_s^{(L)} + \mH_t^{(L)}$,
where $\Theta_1$ and $\Theta_2$ are MLPs or multiple layers of deepsets, $\mF^\stepl,\mB^\stepl\in\sR^{K\times D^{(\ell)}}$ are taken as trainable parameters, and $\sigma$ is a non-linearity. 
See \cref{Additional details on IBG-NN,Complexity comparison,app:ibg_vs_icg} for further details and comparisons with MPNNs and ICG-NNs.

\vspace{-0.15cm}
\paragraph{\ibgnns for spatio-temporal graphs.} Given a graph with fixed connectivity and time-varying node features, we fit the \ibg to the graph once. We then train a model on the frozen \ibg to predict the next-step signal from past time steps. Thus, given $T$ training signals, an \ibgnn requires $\gO(TNKD^2)$ operations per epoch, compared to $\gO(TED^2)$ for MPNNs, with the preprocessing time remaining independent of $T$. Thus, as the number of training signals increases, the efficiency gap between \ibgnns and MPNNs becomes more pronounced.

\vspace{-0.2cm}
\section{Experiments}
\vspace{-0.2cm}

In this section, we conduct experiments addressing the core research questions:

\begin{enumerate}[leftmargin=2em, topsep=0em]
    \item[\textbf{Q1}]  Does \ibgnn’s computational efficiency match theoretical expectations in practice when compared to traditional MPNNs on dense and sparse graphs?
    \item[\textbf{Q2}] Do IBG-NN's theoretical guarantees for approximating arbitrary directed and sparse graphs translate to improved empirical performance compared to traditional GNNs and ICG-NN?
    \item[\textbf{Q3}] How does \ibgnn perform against other graph condensation methods, across varying condensation ratios on large-scale graph benchmarks?
\end{enumerate}

In \cref{app:additional_experiments}, we further expand our evaluation with a series of additional experiments and ablations. We conduct an ablation study across additional domains, showcasing the versatility and applicability of \ibgnn on \textbf{Spatio-temporal tasks} (\Cref{app:spatio_temporal}), \textbf{Knowledge graph completion} (\cref{app:exp_kg}), and \textbf{node classification using subgraph SGD} (\Cref{subsec:node_sgd_exp}), achieving state-of-the-art performance.
We further validate \textbf{the importance of densification} (\cref{app:lemma_validation}) with additional comparisons to \icgnn.
We analyze the computational advantages of \ibgnn with \textbf{memory complexity} experiments (\cref{app:memory}), and provide an \textbf{ablation on the number of communities}  (\cref{app:num_communities}), showcasing how adding communities can lead to performance improvements and better approximations. 
Lastly, we \textbf{test our SVD initialization method} (\cref{exp:convergence}), demonstrating improvements in convergence time of \ibg approximation.

Full hyperparameter details are provided in \cref{app:hyperparameters} and our codebase is publicly available at: \url{https://anonymous.4open.science/r/IBGNN}.

\vspace{-0.2cm}
\subsection{The efficient run-time of \ibgnns}
\label{subsec:runtime}

\vspace{-0.1cm}
\textbf{Setup.} To evaluate the efficiency of \ibgnn (\textbf{Q1}), we measure the forward pass runtimes of \ibgnn and DirGNN~\citep{rossi2023edgedirectionalityimproveslearning} -- a simple and efficient method for directed graphs. We then compare it on dense \emph{Erd\H{o}s-R{\'e}nyi} $\ER(n, p=0.5)$ graphs and sparse $\ER(n, p=25/n)$ graphs with up to $7,000$ nodes. We sample $128$ node features per node, independently from $U[0, 1]$. Both models use a hidden and output dimension of $128$, and $3$ layers.

\textbf{Results.} \Cref{fig:ibg_vs_dirgnn_sparse_and_dense} shows that the runtime of \ibgnn is consistently faster than DirGNN across both dense and sparse graph settings. For dense graphs, IBG-NN runtime exhibits a strong square root relationship when compared to DirGNN. This matches our theoretical expectations given that IBG-NN and MPNNs have $\gO(N)$ and $\gO(E)$ complexity respectively. For sparse graphs, the scaling relationship appears linear. This still aligns with our theoretical expectations, as in this setting the number of edges scales linearly with $N$. Still, compared to DirGNN, IBG-NN achieves significant speedups of $5.68\times$ and $5.26\times$ for $K=10$ and $K=100$ respectively. Notably, even when using $K=100$ communities, which exceeds the average degree of $25$ used in the experiments, \ibgnn still maintains faster performance. This advantage stems from \ibgnn's simple and efficient operations compared to the complex message-passing computations required by DirGNN.
\begin{figure*}[h]
\centering
\scriptsize
\begin{subfigure}{0.43\textwidth}
   \centering
   \raisebox{0.1ex}{\includegraphics[width=\linewidth]{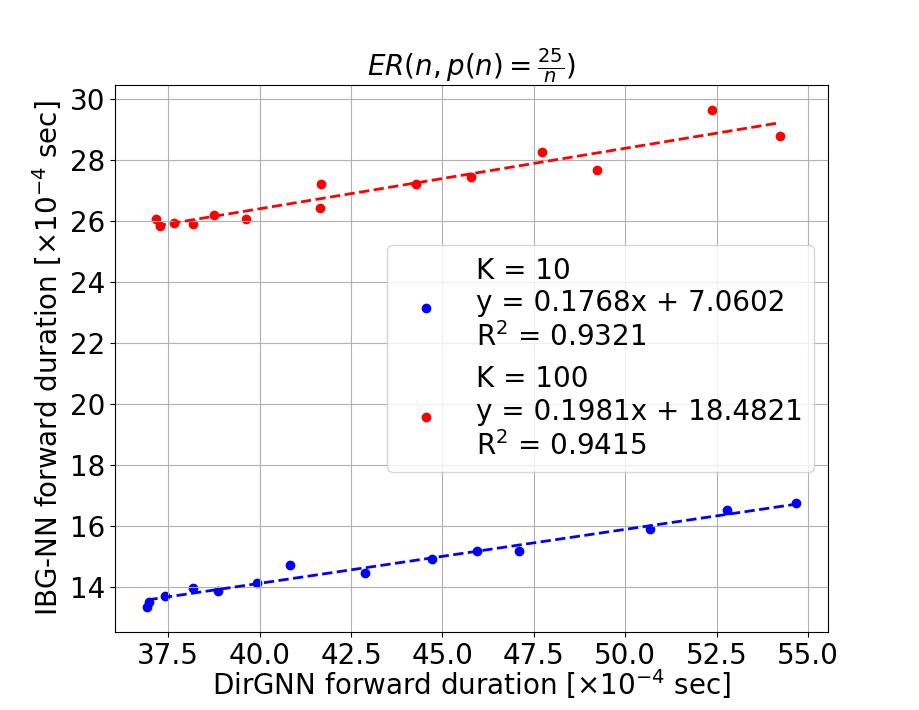}}
\end{subfigure}
\begin{subfigure}{0.4\textwidth}
   \centering
   \includegraphics[width=\linewidth]{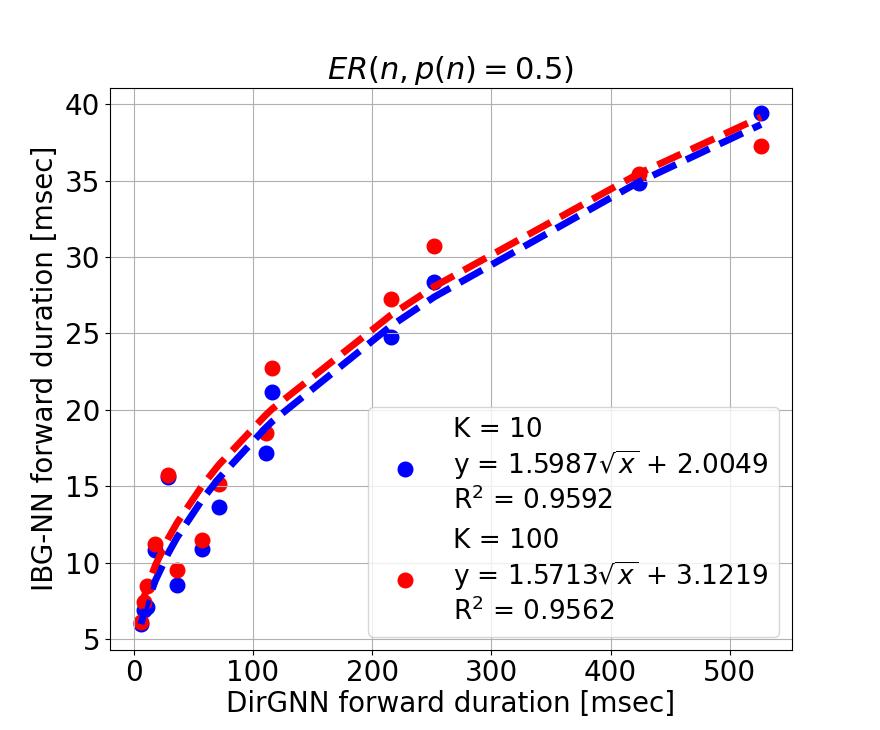}
\end{subfigure}
\caption{Runtime of K-\ibgnn as a function of DirGNN forward pass duration on sparse $\ER(n, p=25/n)$ graphs \textbf{(left)} and dense $\ER(n, p=0.5)$ graphs \textbf{(right)}  for K=10, 100.}
\label{fig:ibg_vs_dirgnn_sparse_and_dense}
\end{figure*}

\subsection{The impact of directionality and densification}
\label{subsec:node_classification}

\begin{table}[t]
    \caption{Results on directed node classification benchmarks; top models colored \red{First}, \blue{Second}, \gray{Third}.}
    \vspace{-0.3cm}
    \label{tab:dir_graphs}
    \scriptsize
    \begin{center}
    \begin{small}
    \begin{tabular}{lcccc}
    \toprule
    Model & Squirrel & Chameleon & Tolokers \\
    \midrule
    MLP    &  28.77 $\pm$ 1.56& 46.21$\pm$ 2.99 & 72.95$\pm$ 1.06\\
    GCN \citep{Kipf16}   & 53.43 $\pm$ 2.01& 64.82 $\pm$ 2.24 & 83.64$\pm$ 0.67\\
    GAT \citep{velivckovic2017graph}   & 40.72 $\pm$ 1.55& 66.82 $\pm$ 2.56 & \gray{83.70}$\pm$ 0.47\\
    \midrule
    H$_2$GCN  \citep{zhu2020beyond}  & 61.90 $\pm$ 1.40 & 46.21 $\pm$ 2.99 & 73.35$\pm$ 1.01\\
    GPR-GNN  \citep{chien2020adaptive}  & 74.80$\pm$ 0.50& 78.30$\pm$ 0.60 & 72.94$\pm$ 0.97\\
    FSGNN  \citep{maurya2021improvinggraphneuralnetworks}    & 74.10 $\pm$ 1.89& 78.27 $\pm$ 1.28 & --\\
    GloGNN  \citep{li2022finding}    & 57.88 $\pm$ 1.76& 71.21 $\pm$ 1.84 & 73.39$\pm$ 1.17\\
    DirGNN  \citep{rossi2023edgedirectionalityimproveslearning}  & \gray{75.13} $\pm$ 1.95&  \gray{79.74} $\pm$ 1.40 & --\\
    FaberNet \citep{koke2023holonets}  & \blue{76.71} $\pm$ 1.92& \red{80.33} $\pm$ 1.19 & --\\
    \midrule
    \icgnn \citep{finkelshtein2024learning}  & 64.02 $\pm$ 1.67 & 63.9 $\pm$ 2.13  & \blue{83.73}$\pm$ 0.78\\
    \ibgnn (undirected)   & 70.02 $\pm$ 1.34 & 75.15 $\pm$ 1.33 & \red{83.76} $\pm$ 0.51\\
    \textbf{\ibgnn}   & \red{77.63}$\pm$ 1.79& \blue{80.15}$\pm$ 1.13 & \red{83.76} $\pm$ 0.75\\
    \bottomrule
    \end{tabular}
    \end{small}
    \end{center}
\end{table}

\vspace{-0.1cm}
\paragraph{Setup.} To evaluate the impact of our theoretical guarantees (\textbf{Q2}), enabled by directionality and densification, we compare \ibgnn to \icgnn \citep{finkelshtein2024learning}, and \ibgnn with an undirected optimization setting. This allows us to isolate the contribution of each addition.
We evaluate \ibgnn on several directed benchmark datasets: Tolokers \citep{platonov2023critical}, Squirrel, and Chameleon \citep{pei2020geom}, following the 10 splits of \cite{platonov2023critical,pei2020geom}. We report average ROC AUC and standard deviation for Tolokers, and average accuracy and standard deviation for Squirrel and Chameleon.
For Tolokers, we report the baselines MLP, GCN \citep{Kipf16}, GAT \citep{velivckovic2017graph}, $\text{H}_2$GCN \citep{zhu2020beyond}, GPR-GNN \citep{chien2020adaptive}, FSGNN \citep{maurya2021improvinggraphneuralnetworks}, GloGNN \citep{li2022finding}  and \icgnn taken from \cite{finkelshtein2024learning}.
For Squirrel and Chameleon, we report the same baselines, as well as FSGNN \citep{maurya2021improvinggraphneuralnetworks}, DirGNN \citep{rossi2023edgedirectionalityimproveslearning} and FaberNet \citep{koke2023holonets}, taken from \citep{koke2023holonets}.

\vspace{-0.2cm}
\paragraph{Results.} \cref{tab:dir_graphs} establishes \ibgnns as state-of-the-art for directed graphs, surpassing GNNs specifically tailored for directed graphs, despite their quadratic scaling compared to \ibgnns. The results reveal that both directionality and densification contribute significantly to the strong performance of \ibgnn. Specifically, \ibgnn without directionality surpasses \icgnn by $6\%$ on Squirrel and $11.2\%$ on Chameleon, improvements attributed to densification. Adding directionality provides additional improvements of $7.6\%$ and $5\%$ respectively, achieving state-of-the-art performance on both datasets. For Tolokers, both \ibgnn variants achieve nearly identical performance with minimal improvement over \icgnn. We believe this could be the result of Tolokers already being very dense, minimizing the benefits provided by densification. Similarly, the graph's directed structure may not contain meaningful directional patterns that improve node classification performance.

\vspace{-0.2cm}
\subsection{\ibgnn on large-scale graph benchmarks}
\label{subsec:exp_sgd}

\textbf{Setup.} To evaluate \ibgnn's performance on large graphs (\textbf{Q3}), we compare \ibgnns and their predecessor \icgnn on the large graphs Reddit \citep{GraphSAGE}, Flickr \citep{zeng2019graphsaint}, Arxiv and Products datasets \citep{hu2020open}, following the data split provided in \citep{zheng2024structure}. Accuracy and standard deviation are reported for experiments conducted with $5$ different seeds over varying condensation ratios $r = \frac{M}{N^2}$, where $N$ is the total number of nodes, and $M$ is the number of sampled entries of the graph adjacency matrix.  
The graph coarsening baselines Random \citep{huang2021scaling}, Herding \citep{welling2009herding}, K-Center \citep{sener2017active} and the graph condensation baselines GCOND \citep{jin2021graph}, SFGC \citep{zheng2024structure}, GC-SNTK \citep{wang2024fast} and SimGC \citep{xiao2024simple} are taken from \citep{zheng2024structure,wang2024fast,xiao2024simple}. We note that a condensation ratio of $100\%$ corresponds to a standard GCN for the baseline methods.

\begin{table*}[t]
\footnotesize
    \centering
    \caption{Results on graph densification benchmarks; top models colored \red{First}, \blue{Second}, \gray{Third}.}
    \scriptsize
    \resizebox{\linewidth}{!}{%
    \begin{tabular}{l|ccc|ccc|c|c} 
        \toprule
        \multicolumn{1}{l|}{} & \multicolumn{3}{c|}{Flickr} & \multicolumn{3}{c|}{Reddit} & \multicolumn{1}{c|}{Arxiv} & \multicolumn{1}{c}{Products} \\     
        Condensation (\%) & 0.5\% & 1\% & 100\% & 0.1\% & 0.2\% & 100\% & 0.05\% & 0.02\%\\
        \midrule
        Random & 44.0 \stdfont{$\pm$ 0.4} & 44.6 \stdfont{$\pm$ 0.2} & 47.2 \stdfont{$\pm$ 0.1} & 58.0 \stdfont{$\pm$ 2.2} & 66.3 \stdfont{$\pm$ 1.9} & 93.9 \stdfont{$\pm$ 0.0} &  47.1 \stdfont{$\pm$ 3.9}  & 53.5 \stdfont{$\pm$ 1.3} \\
        Herding & 43.9 \stdfont{$\pm$ 0.9} & 44.4 \stdfont{$\pm$ 0.6} & 47.2 \stdfont{$\pm$ 0.1} & 62.7 \stdfont{$\pm$ 1.0} & 71.0 \stdfont{$\pm$ 1.6} & 93.9 \stdfont{$\pm$ 0.0} & 52.4 \stdfont{$\pm$ 1.8} & 55.1 \stdfont{$\pm$ 0.3} \\
        K-Center & 43.2 \stdfont{$\pm$ 0.1} & 44.1 \stdfont{$\pm$ 0.4} & 47.2 \stdfont{$\pm$ 0.1} & 53.0 \stdfont{$\pm$ 3.3} & 58.5 \stdfont{$\pm$ 2.1} & 93.9 \stdfont{$\pm$ 0.0} & 47.2 \stdfont{$\pm$ 3.0} & 48.5 \stdfont{$\pm$ 0.2} \\
        \midrule
        GCOND & \gray{47.1} \stdfont{$\pm$ 0.1} & \gray{47.1} \stdfont{$\pm$ 0.1} & 47.2 \stdfont{$\pm$ 0.1} & 89.6 \stdfont{$\pm$ 0.7} & 90.1 \stdfont{$\pm$ 0.5} & 93.9 \stdfont{$\pm$ 0.0} & 61.3 \stdfont{$\pm$ 0.5} & 55.0 \stdfont{$\pm$ 0.8} \\
        SFGC & 47.0 \stdfont{$\pm$ 0.1} & \gray{47.1} \stdfont{$\pm$ 0.1} & 47.2 \stdfont{$\pm$ 0.1} & \gray{90.0} \stdfont{$\pm$ 0.3} & 89.9 \stdfont{$\pm$ 0.4} & 93.9 \stdfont{$\pm$ 0.0} & \red{65.5} \stdfont{$\pm$ 0.7} & \gray{61.7} \stdfont{$\pm$ 0.5} \\
        GC-SNTK & 46.8 \stdfont{$\pm$ 0.1} & 46.5 \stdfont{$\pm$ 0.2} & 47.2 \stdfont{$\pm$ 0.1} & -- & -- & -- & \blue{64.4} \stdfont{$\pm$ 0.2} & -- \\
        SimGC & 45.6 \stdfont{$\pm$ 0.4} & 43.8 \stdfont{$\pm$ 1.5} & \gray{47.2} \stdfont{$\pm$ 0.1} & \blue{91.1} \stdfont{$\pm$ 1.0} & \blue{92.0} \stdfont{$\pm$ 0.3} & \blue{93.9} \stdfont{$\pm$ 0.0} & 63.6 \stdfont{$\pm$ 0.8} & \red{63.3} \stdfont{$\pm$ 1.1} \\
        \midrule
        ICG-NN & \blue{50.1} \stdfont{$\pm$ 0.2} & \blue{50.8} \stdfont{$\pm$ 0.1} & \blue{52.7} \stdfont{$\pm$ 0.1} & 89.7 \stdfont{$\pm$ 1.3} & \gray{90.7} \stdfont{$\pm$ 1.5} & \gray{93.6} \stdfont{$\pm$ 1.2} & -- & -- \\
        \textbf{\ibgnn} & \red{50.5}\stdfont{$\pm$ 0.1} & \red{51.3} \stdfont{$\pm$ 0.2} & \red{53.0} \stdfont{$\pm$ 0.1} & \red{92.3} \stdfont{$\pm$ 1.1} & \red{92.6} \stdfont{$\pm$ 0.6} & \red{94.1} \stdfont{$\pm$ 0.5} & \blue{64.4} \stdfont{$\pm$ 1.1} & \blue{61.9} \stdfont{$\pm$ 0.3} \\
        \bottomrule
    \end{tabular}
    }
    \label{tab:coarsen}
\end{table*}

\vspace{1.0cm}

\textbf{Results.} \Cref{tab:coarsen} demonstrates that subgraph SGD \ibgnn achieves state-of-the-art performance, surpassing other coarsening and condensation methods that operate on the full graph in memory, while also improving upon the performance of its predecessor, \icgnn. A comparison between graph coarsening methods and \ibgnns can be found in \cref{sec:related_work}.

\vspace{-0.1cm}
\section{Conclusion}
\vspace{-0.1cm}

\looseness=-1
We proved a new semi-constructive version of the weak regularity lemma, in which the required number of communities for a given approximation error is independent of any property of the graph, including size and sparsity. This contrasts previous formulations of the lemma, where the number of communities increased with graph size for sparse graphs. Our formulation is achieved by introducing the densifying cut similarity, which, when optimized, leads the approximating \ibg to effectively densify the target graph. This enables fitting \ibgs of very low rank ($K=\gO(1)$) to large sparse graphs, while previous works required the target graph to be dense for low rank approximations.  
We introduced \ibgnns --  a network which operates on the \ibg instead of the graph, and has $\gO(N)$ time and memory complexity.
\ibgnns demonstrate state-of-the-art performance in multiple domains: node classification on directed graphs, spatio-temporal graph analysis, and knowledge graph completion.

 \section*{Acknowledgments}

 This research was supported by a grant from the United States-Israel Binational Science Foundation (BSF), Jerusalem, Israel, and the United States National Science Foundation (NSF), (NSF-BSF, grant No. 2024660) , and  by the Israel Science Foundation (ISF grant No. 1937/23).
\section*{Reproducibility Statement}

We provide a public codebase with a complete implementation of all methods presented. All datasets used in the experiments are publicly available benchmarks. We report full experimental settings, computational resources, and hyperparameters for all conducted experiments. Theoretical results are fully supported with rigorous proofs provided in the appendices.

\section*{Ethics Statement}

This paper presents work whose goal is to advance the field of Machine Learning, and involves no human subjects or sensitive information. Our work is mostly theoretical, and poses no potentially harmful societal consequences.  

\newpage
\bibliography{iclr2026_conference}

@inproceedings{finkelshtein2024learning,
  title={Learning on Large Graphs using Intersecting Communities},
  author={Finkelshtein, Ben and Ceylan, {\.I}smail {\.I}lkan and Bronstein, Michael and Levie, Ron},
  booktitle={NeurIPS},
  year={2024}
}

@article{borgs2008convergent,
  title={Convergent sequences of dense graphs I: Subgraph frequencies, metric properties and testing},
  author={Borgs, Christian and Chayes, Jennifer T and Lov{\'a}sz, L{\'a}szl{\'o} and S{\'o}s, Vera T and Vesztergombi, Katalin},
  journal={Advances in Mathematics},
  year={2008},
}

@article{drineas2006fast,
  title={Fast Monte Carlo algorithms for matrices I: Approximating matrix multiplication},
  author={Drineas, Petros and Kannan, Ravi and Mahoney, Michael W},
  journal={SIAM Journal on Computing},
  year={2006},
}

@inproceedings{geisler2023transformers,
  title={Transformers meet directed graphs},
  author={Geisler, Simon and Li, Yujia and Mankowitz, Daniel J and Cemgil, Ali Taylan and G{\"u}nnemann, Stephan and Paduraru, Cosmin},
  booktitle={ICML},
  year={2023},
}

@inproceedings{rossi2023edgedirectionalityimproveslearning,
  title={Edge directionality improves learning on heterophilic graphs},
  author={Rossi, Emanuele and Charpentier, Bertrand and Di Giovanni, Francesco and Frasca, Fabrizio and G{\"u}nnemann, Stephan and Bronstein, Michael M},
  booktitle={LoG},
  year={2024},
}

@article{MASKEYtrans2023,
title = {Transferability of graph neural networks: An extended graphon approach},
journal = {Applied and Computational Harmonic Analysis},
year = {2023},
author = {Sohir Maskey and Ron Levie and Gitta Kutyniok}
}

@article{Hamilton2017RepresentationLO,
  title={Representation Learning on Graphs: Methods and Applications},
  author={Hamilton, William L and Ying, Rex and Leskovec, Jure},
  journal={IEEE Data Engineering Bulletin},
  year={2017},
}

@inproceedings{sign_icml_grl2020,
    title={SIGN: Scalable Inception Graph Neural Networks},
    author={Rossi, Emanuele and Frasca, Fabrizio and Chamberlain, Ben and Eynard, Davide and Bronstein, Michael and Monti, Federico},
    booktitle={ICML 2020 Workshop on Graph Representation Learning and Beyond},
    year={2020}
}

@inproceedings{maskey2024fractional,
  title={A fractional graph laplacian approach to oversmoothing},
  author={Maskey, Sohir and Paolino, Raffaele and Bacho, Aras and Kutyniok, Gitta},
  booktitle={NeurIPS},
  year={2024}
}

@article{Kipf2016VariationalGA,
  title={Variational Graph Auto-Encoders},
  author={Kipf, Thomas N and Welling, Max},
  journal={arXiv},
  year={2016},
}

@book{NLA_Lanczos,
  title={Numerical methods for large eigenvalue problems: revised edition},
  author={Saad, Yousef},
  year={2011},
  publisher={SIAM}
}

@inproceedings{cut-homo3,
  title={Large Networks and Graph Limits},
  author={L{\'a}szl{\'o} Mikl{\'o}s Lov{\'a}sz},
  booktitle={volume 60 of Colloquium Publications},
  year={2012},
}

@article{maurya2021improvinggraphneuralnetworks,
  title={Improving graph neural networks with simple architecture design},
  author={Maurya, Sunil Kumar and Liu, Xin and Murata, Tsuyoshi},
  journal={arXiv},
  year={2021}
}

@article{frieze1999quick,
  title={Quick approximation to matrices and applications},
  author={Frieze, Alan and Kannan, Ravi},
  journal={Combinatorica},
  year={1999},
}

@inproceedings{bordes2013translating,
  title={Translating embeddings for modeling multi-relational data},
  author={Bordes, Antoine and Usunier, Nicolas and Garcia-Duran, Alberto and Weston, Jason and Yakhnenko, Oksana},
  booktitle={NeurIPS},
  year={2013}
}

@article{Fang2023RelationawareGC,
  title={Relation-aware Graph Convolutional Networks for Multi-relational Network Alignment},
  author={Yujie, Fang and Xin, Li and Rui, Ye and Xiaoyan, Tan and Peiyao, Zhao and Mingzhong, Wang},
  journal={ACM Transactions on Intelligent Systems and Technology},
  year={2023},
}

@article{Wang2020Edge2vec,
  title={Edge2vec},
  author={Changping, Wang and Chaokun, Wang and Zheng, Wang and Xiaojun, Ye and Philip, S. Yu},
  journal={ACM Transactions on Knowledge Discovery from Data (TKDD)},
  year={2020},
}

@inproceedings{cheng2022rlogic,
  title={Rlogic: Recursive logical rule learning from knowledge graphs},
  author={Cheng, Kewei and Liu, Jiahao and Wang, Wei and Sun, Yizhou},
  booktitle={KDD},
  year={2022}
}

@inproceedings{qu2021rnnlogiclearninglogicrules,
  title={Rnnlogic: Learning logic rules for reasoning on knowledge graphs},
  author={Qu, Meng and Chen, Junkun and Xhonneux, Louis-Pascal and Bengio, Yoshua and Tang, Jian},
  booktitle={ICLR},
  year={2020}
}

@inproceedings{yang2017differentiablelearninglogicalrules,
  title={Differentiable learning of logical rules for knowledge base reasoning},
  author={Yang, Fan and Yang, Zhilin and Cohen, William W},
  booktitle={NeurIPS},
  year={2017}
}

@inproceedings{sun2019rotate,
  title={Rotate: Knowledge graph embedding by relational rotation in complex space},
  author={Sun, Zhiqing and Deng, Zhi-Hong and Nie, Jian-Yun and Tang, Jian},
  booktitle={ICLR},
  year={2019}
}

@inproceedings{trouillon2016complexembeddingssimplelink,
  title={Complex embeddings for simple link prediction},
  author={Trouillon, Th{\'e}o and Welbl, Johannes and Riedel, Sebastian and Gaussier, {\'E}ric and Bouchard, Guillaume},
  booktitle={ICML},
  year={2016},
}

@inproceedings{dettmers2018convolutional,
  title={Convolutional 2d knowledge graph embeddings},
  author={Dettmers, Tim and Minervini, Pasquale and Stenetorp, Pontus and Riedel, Sebastian},
  booktitle={AAAI},
  year={2018}
}

@inproceedings{zaheer2017deep,
  title={Deep sets},
  author={Manzil, Zaheer and Satwik, Kottur and Siamak, Ravanbakhsh and Barnab{\'a}s, P{\'o}czos and Ruslan, Salakhutdinov and Alex, Smola},
  booktitle={NeurIPS},
  year={2017}
}

@inproceedings{zhu2021neural,
  title={Neural bellman-ford networks: A general graph neural network framework for link prediction},
  author={Zhu, Zhaocheng and Zhang, Zuobai and Xhonneux, Louis-Pascal and Tang, Jian},
  booktitle={NeurIPS},
  year={2021}
}

@inproceedings{yan2021link,
  title={Link prediction with persistent homology: An interactive view},
  author={Yan, Zuoyu and Ma, Tengfei and Gao, Liangcai and Tang, Zhi and Chen, Chao},
  booktitle={ICML},
  year={2021},
}

@inproceedings{li2022house,
  title={House: Knowledge graph embedding with householder parameterization},
  author={Li, Rui and Zhao, Jianan and Li, Chaozhuo and He, Di and Wang, Yiqi and Liu, Yuming and Sun, Hao and Wang, Senzhang and Deng, Weiwei and Shen, Yanming and others},
  booktitle={ICML},
  year={2022},
}

@inproceedings{yang2015embeddingentitiesrelationslearning,
  title={Embedding entities and relations for learning and inference in knowledge bases},
  author={Yang, Bishan and Yih, Wen-tau and He, Xiaodong and Gao, Jianfeng and Deng, Li},
  booktitle={ICLR},
  year={2014}
}

@inproceedings{cheng2023neuralcompositionalrulelearning,
  title={Neural compositional rule learning for knowledge graph reasoning},
  author={Cheng, Kewei and Ahmed, Nesreen K and Sun, Yizhou},
  booktitle={ICLR},
  year={2023}
}

@inproceedings{kok2007statistical,
  title={Statistical predicate invention},
  author={Kok, Stanley and Domingos, Pedro},
  booktitle={ICML},
  year={2007}
}

@inproceedings{Levie2023,
    author = {Ron Levie},
    title = {A graphon-signal analysis of graph neural networks},
    booktitle = {NeurIPS},
    year = {2023}
}

@inproceedings{koke2023holonets,
  title={HoloNets: Spectral Convolutions do extend to Directed Graphs},
  author={Koke, Christian and Cremers, Daniel},
  booktitle={ICLR},
  year={2024}
}

@article{Szemeredi_analyst,
  title={Szemer{\'e}di’s Lemma for the Analyst},
  author={L{\'a}szl{\'o} Mikl{\'o}s Lov{\'a}sz, Bal{\'a}zs Szegedy},
  journal={GAFA Geometric And Functional Analysis},
  year={2007},
}

@inproceedings{BigCLAM2013,
author = {Yang, Jaewon and Leskovec, Jure},
title = {Overlapping community detection at scale: a nonnegative matrix factorization approach},
year = {2013},
booktitle = {Association for Computing Machinery},
}

@inproceedings{pieclam2024,
  author={Zilberg, Daniel and Levie, Ron},
  title={PieClam: A Universal Graph Autoencoder Based on Overlapping Inclusive and Exclusive Communities}, 
  booktitle={ICML},
  year={2025},
}

@inproceedings{GraphSAGE,
author = {Hamilton, William L. and Ying, Rex and Leskovec, Jure},
title = {Inductive Representation Learning on Large Graphs},
year = {2017},
booktitle={NeurIPS}
}

@inproceedings{
chen2018fastgcn,
title={Fast{GCN}: Fast Learning with Graph Convolutional Networks via Importance Sampling},
author={Jie Chen and Tengfei Ma and Cao Xiao},
booktitle={ICLR},
year={2018}
}

@book{TrefethenNLA,
author = {Trefethen, Lloyd N. and Bau, David},
title = {Numerical Linear Algebra, Twenty-fifth Anniversary Edition},
publisher = {Society for Industrial and Applied Mathematics},
year = {2022},
}

@inproceedings{diffpool2018,
 author = {Ying, Zhitao and You, Jiaxuan and Morris, Christopher and Ren, Xiang and Hamilton, Will and Leskovec, Jure},
 booktitle = {NeurIPS},
 title = {Hierarchical Graph Representation Learning with Differentiable Pooling},
 year = {2018}
}

@inproceedings{BianchiSpectralPooling2020,
  title = 	 {Spectral Clustering with Graph Neural Networks for Graph Pooling},
  author =       {Bianchi, Filippo Maria and Grattarola, Daniele and Alippi, Cesare},
  booktitle = 	 {ICML},
  year = 	 {2020},
}

@inproceedings{velivckovic2017graph,
  title={Graph Attention Networks},
  author={Veli{\v{c}}kovi{\'c}, Petar and Cucurull, Guillem and Casanova, Arantxa and Romero, Adriana and Li{\`o}, Pietro and Bengio, Yoshua},
  booktitle={ICLR},
  year={2018}
}

@inproceedings{li2017diffusion,
  title={Diffusion convolutional recurrent neural network: Data-driven traffic forecasting},
  author={Li, Yaguang and Yu, Rose and Shahabi, Cyrus and Liu, Yan},
  booktitle={ICLR},
  year={2018}
}

@inproceedings{
platonov2023critical,
    title={A critical look at the evaluation of {GNN}s under heterophily: Are we really making progress?},
    author={Platonov, Oleg and Kuznedelev, Denis and Diskin, Michael and Babenko, Artem and Prokhorenkova, Liudmila},
    booktitle={ICLR},
    year={2023},
}

@inproceedings{li2022finding,
  title={Finding global homophily in graph neural networks when meeting heterophily},
  author={Li, Xiang and Zhu, Renyu and Cheng, Yao and Shan, Caihua and Luo, Siqiang and Li, Dongsheng and Qian, Weining},
  booktitle={ICML},
  year={2022},
}

@inproceedings{lim2021large,
  title={Large scale learning on non-homophilous graphs: New benchmarks and strong simple methods},
  author={Lim, Derek and Hohne, Felix and Li, Xiuyu and Huang, Sijia Linda and Gupta, Vaishnavi and Bhalerao, Omkar and Lim, Ser Nam},
  booktitle={NeurIPS},
  year={2021}
}

@inproceedings{pei2020geom,
  title={Geom-gcn: Geometric graph convolutional networks},
  author={Pei, Hongbin and Wei, Bingzhe and Chang, Kevin Chen-Chuan and Lei, Yu and Yang, Bo},
  booktitle={ICLR},
  year={2020}
}

@inproceedings{Kipf16,
Author = {Kipf, Thomas and Welling, Max},
Booktitle = {ICLR},
Title = {Semi-supervised classification with graph convolutional networks},
Year = {2017},
}

@inproceedings{zhu2020beyond,
  title={Beyond homophily in graph neural networks: Current limitations and effective designs},
  author={Zhu, Jiong and Yan, Yujun and Zhao, Lingxiao and Heimann, Mark and Akoglu, Leman and Koutra, Danai},
  booktitle={NeurIPS},
  year={2020}
}

@inproceedings{chien2020adaptive,
  title={Adaptive universal generalized pagerank graph neural network},
  author={Chien, Eli and Peng, Jianhao and Li, Pan and Milenkovic, Olgica},
  booktitle={ICLR},
  year={2020}
}

@inproceedings{cini2024taming,
  title={Taming local effects in graph-based spatiotemporal forecasting},
  author={Cini, Andrea and Marisca, Ivan and Zambon, Daniele and Alippi, Cesare},
  booktitle={NeurIPS},
  year={2024}
}

@inproceedings{Wu2019GraphWF,
  title={Graph WaveNet for Deep Spatial-Temporal Graph Modeling},
  author={Zonghan Wu and Shirui Pan and Guodong Long and Jing Jiang and Chengqi Zhang},
  booktitle={IJCAI},
  year={2019},
}

@inproceedings{bai2020adaptive,
  title={Adaptive graph convolutional recurrent network for traffic forecasting},
  author={Bai, Lei and Yao, Lina and Li, Can and Wang, Xianzhi and Wang, Can},
  booktitle={NeurIPS},
  year={2020}
}

@inproceedings{zeng2019graphsaint,
  title={Graphsaint: Graph sampling based inductive learning method},
  author={Zeng, Hanqing and Zhou, Hongkuan and Srivastava, Ajitesh and Kannan, Rajgopal and Prasanna, Viktor},
  booktitle={ICLR},
  year={2019}
}

@inproceedings{zheng2024structure,
  title={Structure-free graph condensation: From large-scale graphs to condensed graph-free data},
  author={Zheng, Xin and Zhang, Miao and Chen, Chunyang and Nguyen, Quoc Viet Hung and Zhu, Xingquan and Pan, Shirui},
  booktitle={NeurIPS},
  year={2024}
}

@inproceedings{wang2024fast,
  title={Fast graph condensation with structure-based neural tangent kernel},
  author={Wang, Lin and Fan, Wenqi and Li, Jiatong and Ma, Yao and Li, Qing},
  booktitle={WWW},
  year={2024}
}

@inproceedings{huang2021scaling,
  title={Scaling up graph neural networks via graph coarsening},
  author={Huang, Zengfeng and Zhang, Shengzhong and Xi, Chong and Liu, Tang and Zhou, Min},
  booktitle={KDD},
  year={2021}
}

@inproceedings{xiao2024simple,
  title={Simple graph condensation},
  author={Xiao, Zhenbang and Wang, Yu and Liu, Shunyu and Wang, Huiqiong and Song, Mingli and Zheng, Tongya},
  booktitle={ECML-PKDD},
  year={2024},
}

@inproceedings{welling2009herding,
  title={Herding dynamical weights to learn},
  author={Welling, Max},
  booktitle={ICML},
  year={2009}
}

@inproceedings{sener2017active,
  title={Active learning for convolutional neural networks: A core-set approach},
  author={Sener, Ozan and Savarese, Silvio},
  booktitle={ICLR},
  year={2017}
}

@inproceedings{jin2022condensing,
  title={Condensing graphs via one-step gradient matching},
  author={Jin, Wei and Tang, Xianfeng and Jiang, Haoming and Li, Zheng and Zhang, Danqing and Tang, Jiliang and Yin, Bing},
  booktitle={KDD},
  year={2022}
}

@inproceedings{jin2021graph,
  title={Graph condensation for graph neural networks},
  author={Jin, Wei and Zhao, Lingxiao and Zhang, Shichang and Liu, Yozen and Tang, Jiliang and Shah, Neil},
  booktitle={ICLR},
  year={2021}
}

@article{fey2020hierarchical,
  title={Hierarchical inter-message passing for learning on molecular graphs},
  author={Fey, Matthias and Yuen, Jan-Gin and Weichert, Frank},
  journal={arXiv},
  year={2020}
}

@inproceedings{hu2020open,
  title={Open graph benchmark: Datasets for machine learning on graphs},
  author={Hu, Weihua and Fey, Matthias and Zitnik, Marinka and Dong, Yuxiao and Ren, Hongyu and Liu, Bowen and Catasta, Michele and Leskovec, Jure},
  booktitle={NeurIPS},
  year={2020}
}

@inproceedings{zhou2023provable,
  title={A provable framework of learning graph embeddings via summarization},
  author={Zhou, Houquan and Liu, Shenghua and Koutra, Danai and Shen, Huawei and Cheng, Xueqi},
  booktitle={AAAI},
  year={2023}
}

@inproceedings{rong2020dropedgedeepgraphconvolutional,
  title={DropEdge: Towards Deep Graph Convolutional Networks on Node Classification}, 
  author={Yu Rong and Wenbing Huang and Tingyang Xu and Junzhou Huang},
  year={2020},
  booktitle={ICLR},
}

@inproceedings{finkelshtein2024cooperativegraphneuralnetworks,
    title={Cooperative Graph Neural Networks}, 
    author={Ben Finkelshtein and Xingyue Huang and Michael Bronstein and İsmail İlkan Ceylan},
    year={2024},
    booktitle={ICML},
}

@inproceedings{oskarsson2024probabilistic,
    title={Probabilistic Weather Forecasting with Hierarchical Graph Neural Networks},
    author={Joel Oskarsson and Tomas Landelius and Marc Peter Deisenroth and Fredrik Lindsten},
    booktitle={NeurIPS},
    year={2024},
}

@article{smith2025scheduling,
    title = {Graph neural networks for job shop scheduling problems: A survey},
    author = {Igor G. Smit and Jianan Zhou and Robbert Reijnen and Yaoxin Wu and Jian Chen and Cong Zhang and Zaharah Bukhsh and Yingqian Zhang and Wim Nuijten},
    journal = {Computers \& Operations Research},
    year = {2025},
}

@inproceedings{bechler2025position,
  title={Position: Graph learning will lose relevance due to poor benchmarks},
  author={Bechler-Speicher, Maya and Finkelshtein, Ben and Frasca, Fabrizio and M{\"u}ller, Luis and T{\"o}nshoff, Jan and Siraudin, Antoine and Zaverkin, Viktor and Bronstein, Michael M and Niepert, Mathias and Perozzi, Bryan and others},
  booktitle={ICML},
  year={2025}
}
\bibliographystyle{iclr2026_conference}

\newpage
\appendix
\startcontents[appendices]
\printcontents[appendices]{}{1}{\section*{Appendix Contents}\setcounter{tocdepth}{2}}

\newpage

\section{Related work}
\label{sec:related_work}

\textbf{Intersecting Community Graphs (ICG).} Our work continues the ICG setting of  \cite{finkelshtein2024learning}, which introduces a weak regularity lemmas for practical graph computations. The work of \cite{finkelshtein2024learning} presented a pipeline for operating on undirected, non-sparse graphs. Similarly to our work, \cite{finkelshtein2024learning} follow a two stage procedure.  In the first stage, the graph is approximated by learning a factorization into undirected communities, forming what they call an Intersecting Community Graph (\icg). In the second stage, the \icg is used to enrich a neural network operating on the node features and community graph without using the original edge connectivity. This setup allows for more efficient computation, both in terms of runtime and memory, since the full edge structure of the graph, used in standard GNNs, can be replaced with a much smaller community-level graph—especially useful for graphs with a high average degree. 
The constructive weak regularity lemma presented in \citep{finkelshtein2024learning} shows any graph can be approximated in \emph{cut-norm}, regardless of its size, by minimizing the easy to compute Frobenius error. 

More concretely, an ICG with $K$ communities is just like an IBG with only node features (no edge-features), but with the transmitting and receiving communities being equal $\mathbf{U}=\mathbf{V}$. Namely, an ICG can be represented by a triplet of \emph{community affiliation matrix} $\mQ\in\sR^{N\times K}$, \emph{community magnitude vector}  $\vr\in\sR^{K}$, and \emph{community feature matrix} $\mathbf{F}\in\sR^{K\times D}$. An ICG $(\mC,\mP)$ with adjacency matrix $\mC$ and signal $\mP$ is then given by 
\[\mC= \mQ\diag(\vr)\mQ^\top \quad \text{and} \quad \mP=\mQ\mF,\]
where $\diag(\vr)$ is the diagonal matrix in $\sR^{K\times K}$ with $\vr$ as its diagonal elements. Here, $K$ is the number of communities, $N$ is the number of nodes, and $E$ is the number of edges.

When approximating a graph-signal $(\mA,\mX)$, the measure of accuracy, or error, in \citep{finkelshtein2024learning} is defined to be the standard (unweighted) cut metric $ \norm{(\mA,\mX)-(\mC,\mP)}_{\square}$. The semi-constructive regularity lemma of \cite{finkelshtein2024learning} states that it is enough to minimize the standard Forbenius error $\norm{(\mA,\mX)-(\mC,\mP)}_{\rm F}$  in order to guarantee 
\begin{equation}
\label{eq:ICG_REG_LEMMA}
  \norm{(\mA,\mX)-(\mC,\mP)}_{\square} = \gO(N/\sqrt{KE}),  
\end{equation}
Looking at (\ref{eq:ICG_REG_LEMMA}) it is clear that in order to guarantee a small approximation error in cut metric, the number of communities must increase as $N/\sqrt{E}$ becomes larger. Specifically, the number of communities $K$ is independent of the size of the graph only when $E=N^2$, i.e., the graph is dense. Hence, the ICG method falls short for sparse graphs

Our IBG method solves this shortcoming and more. For example, a main contribution of our method is a densification mechanism, supported by our novel densifying cut similarity measure and our densifying regularity lemma, which is a non-trivial continuation and extension of the semi-constructive weak regularity lemma of  \cite{finkelshtein2024learning}. 
Please see Appendix~\ref{app:ibg_vs_icg} for a detailed comparison of our \ibg method to \icg.

\textbf{Cluster Affiliation models (BigClam and PieClam).} A similar work is PieClam \citep{pieclam2024}, extending the well known BigClam model \citep{BigCLAM2013}, which builds a probabilistic model of graphs as intersections of overlapping communities. While BigClam only allows communities with positive coefficients, which limits the ability to approximate many graphs, like bipartite graphs, PieClam formulates a graph probabilistic autoencoder that also includes negative communities. This allows approximately encoding any dense graph with a fixed budget of parameters per node.  
We note that as opposed to ICG and PieClam, which can only theoretically approximate dense symmetric graphs with $\gO(1)$ communities, our IBG method can approximate both sparse and dense (non-symmetric in general) graphs with $\gO(1)$ communities via the densification mechanism (the densifying constructive regularity lemma with respect to the densifying cut similarity).

\textbf{GNNs for directed graphs.} The standard practice in GNN design is to assume that the graph is undirected \citep{Kipf16}. However, this assumption not only alters the input data by discarding valuable directional information, but also overlooks the empirical evidence demonstrating that leveraging edge directionality can significantly enhance performance \citep{rossi2023edgedirectionalityimproveslearning}. For instance, DirGNN \citep{rossi2023edgedirectionalityimproveslearning} extends message-passing neural networks (MPNNs) to directed graphs, while \cite{geisler2023transformers} adapts transformers for the same purpose. FaberNet \citep{koke2023holonets} generalizes spectral convolutions to directed graphs, all of which have led to improved performance. Co-GNN \citep{finkelshtein2024cooperativegraphneuralnetworks} demonstrates the advantage of learning edge directionality over using conventional undirected graph representations. Furthermore, the proper handling of directed edges has enabled \cite{maskey2024fractional} to extend the concept of oversmoothing to directed graphs, providing deeper theoretical insights.
\ibgnns also capitalize on edge directionality, achieving notable performance improvements over their predecessor \icgnns \citep{finkelshtein2024learning}, as demonstrated in \cref{app:exp_num_communities}. When compared to existing GNNs designed for directed graphs, \ibgnns offer a more efficient approach to signal processing. Specifically, for \ibgnns to outperform message-passing-based GNNs in terms of efficiency, the condition 
$KN<E$ must hold. Such a choice of $K$  typically produces good performance for most graphs. This efficiency advantage allows \ibgnns to make better use of the input edges while being more efficient than traditional GNNs.

\textbf{Graph Pooling GNNs.} Graph pooling GNNs generate a sequence of increasingly coarsened graphs by aggregating neighboring nodes into "super-nodes" \citep{diffpool2018,BianchiSpectralPooling2020}, where standard message-passing is applied on the intermediate coarsened graphs. Similarly, in \ibgnns, the signal is projected, but onto overlapping blocks rather than disjoint clusters, with several additional key distinctions: (1) The blocks in \ibgnns are overlapping and cover large regions of the graph, allowing the method to preserve fine-grained, high-frequency signal details during projection, unlike traditional graph pooling approaches. (2) Operations on community features in \ibgnns possess a global receptive field, enabling the capture of broader structural patterns across the graph -- an extremely difficult task for local graph pooling approach. (3) \ibgnns diverge from the conventional message-passing framework: the flattened community feature vector, which lacks symmetry, is processed by a general multilayer perceptron (MLP), whereas message-passing neural networks (MPNNs) apply the same function uniformly to all edges. 
(4) \ibgnns operate exclusively on an efficient data structure, offering both theoretical guarantees and empirical evidence of significantly improved computational efficiency compared to graph pooling methods.

\textbf{Graph reduction methods.} Graph reduction aims to reduce the size of the graph while preserving key information. It can be categorized into three main approaches: graph sparsification, graph coarsening and graph condensation. Graph sparsification methods \citep{GraphSAGE,zeng2019graphsaint,chen2018fastgcn,rong2020dropedgedeepgraphconvolutional} approximate a graph by retaining only a subset of its edges and nodes, often employing random sampling techniques. Graph coarsening \citep{fey2020hierarchical,huang2021scaling} clusters sets of nodes into super-nodes while aggregating inter-group edges into super-edges, aiming to preserve structural properties such as the degree distribution \citep{zhou2023provable}. Graph condensation \citep{jin2021graph} generates a smaller graph with newly created nodes and edges, designed to maintain the performance of GNNs on downstream tasks.

While subgraph SGD in \ibgnns also involves subsampling, it differs fundamentally by providing a provable approximation of the original graph. This contrasts with graph sparsification for example -- where some, hopefully good heuristic-based sampling is often employed. More importantly, subgraph \ibgnns offer a subgraph sampling approach for cases where the original graph is too large to fit in memory. This contrasts with the aforementioned coarsening and condensation methods, which lack a strategy for managing smaller data structures during the computation of the compressed graph.

Graph reduction methods generally rely on locality, applying message-passing on the reduced graph. In particular, condensation techniques require $\gO(EM)$ operations to construct a smaller graph \cite{jin2021graph,zheng2024structure,wang2024fast}, where $E$ is the number of edges in the original graph and $M$ is the number of nodes in the condensed graph. In contrast, \ibgnns estimate the \ibg with only $\gO(E)$ operations.

Furthermore, while conventional reduction methods process representations on either an iteratively coarsened graph or mappings between the full and reduced graphs, \ibgnns incorporate fine-grained node information at every layer, leading to richer 
representations.

\textbf{Cut metric in graph machine learning.} The cut metric is a useful similarity measure, which  can separate any non-isomorphic graphons \cite{cut-homo3}. This makes the cut metric particularity useful in deriving new theoretical insights for graph machine learning. For instance, \citep{Levie2023} demonstrated that GNNs with normalized sum aggregation cannot separate graph-signals that are close to each other in cut metric. Using the cut distance as a theoretical tool, \citep{MASKEYtrans2023} proves that spectral GNNs 
with continuous filters are transferable between graphs in sequences of that converge in homomorphism density. 
\cite{finkelshtein2024learning} introduced a semi constructive weak regularity lemma and used it to build new algorithms on large undirected non-sparse graphs. In this work we introduce a new graph similarity measure -- the densifying cut similarity, which gives higher importance to edges than non-edges in a graph. This allows us to approximate any graph using a set of overlapping bipartite components, where the size of the set only depends on the error tolerance. Similarly to \cite{finkelshtein2024learning}, we present a semi constructive weak regularity lemma. As oppose to \cite{finkelshtein2024learning}, using our novel similarity measure, our regularity lemma can be used to build new algorithms on large directed graphs which are sparse.

\section{The weak regularity lemma}
\label{app:weak_reg}
Consider a graph $G$ with a node set $\mathcal{V}=[N] = \{1, \dots, N\}$ and an edge set
$\gE\subseteq \mathcal{V}\times\mathcal{V}$. We define an equipartition $\mathcal{P}=\{\gV_1,\ldots \gV_k\}$ as a partition of $\gV$ into $k$ sets where $\abs{\abs{\gV_i}-\abs{\gV_j}}\leq 1$ for every $1\leq i,j\leq k$. For any pair of subsets $\mathcal{U}, \mathcal{S} \subset \gV$ denote by $e_G(\mathcal{U},\mathcal{S})$ the number of edges between $\mathcal{U}$ and $\mathcal{S}$.
Now, consider two node subsets $\mathcal{U}, \mathcal{S} \subset \gV$.
If the edges between $\gV_i$ and $\gV_j$ were to be uniformly and independently distributed, then the expected number of edges between $\mathcal{U}$ and $\mathcal{S}$ would be
\[
    e_{\mathcal{P}(\mathcal{U},\mathcal{S})}:=\sum_{i=1}^k\sum_{j=1}^k \frac{e_G(\gV_i,\gV_j)}{\abs{\gV_i}\abs{\gV_j}} \abs{\gV_i\cap \mathcal{U}}\abs{\gV_j\cap \mathcal{S}}.  
\]
Using the above, we define the \emph{irregularity}:
\begin{equation}
    \label{eq:irreg}
    \mathrm{irreg}_G(\mathcal{P}) = \max_{\mathcal{U},\mathcal{S}\subset \gV}\abs{e_G(\mathcal{U},\mathcal{S}) - e_{\mathcal{P}}(\mathcal{U},\mathcal{S})}/\abs{\gV}^2.
\end{equation}
The \emph{irregularity} measures how non-random like the edges between $\{\gV_j\}$ behave.

We now present the weak regularity lemma.

\begin{theorem}[Weak Regularity Lemma \cite{frieze1999quick}]
    \label{Theorem:WRL}
    For every $\epsilon>0$ and every graph $G=(\gV,\mathcal{E})$, there is an equipartition $\mathcal{P}=\{\gV_1,\ldots,\gV_k\}$ of $\gV$ into $k\leq 2^{c/\epsilon^2}$ classes such that $\mathrm {irreg}_G(\mathcal{P}) \leq \epsilon$. Here, $c$ is a universal constant that does not depend on $G$ and $\epsilon$.
\end{theorem}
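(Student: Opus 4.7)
The plan is to prove this by the classical Frieze–Kannan energy-increment argument, followed by a short equipartition cleanup. First I would drop the equipartition requirement and prove the weaker statement that every graph $G$ admits a (possibly unbalanced) partition $\mathcal{P}$ into at most $2^{c_0/\epsilon^2}$ parts with $\mathrm{irreg}_G(\mathcal{P}) \leq \epsilon/2$, and then convert it to an equipartition at a controlled cost.

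For the non-equipartition statement I run a greedy refinement loop. Initialize $\mathcal{P}_0 = \{\gV\}$. At step $t$, if $\mathrm{irreg}_G(\mathcal{P}_t) > \epsilon/2$, extract witness subsets $\mathcal{U}_t, \mathcal{S}_t \subset \gV$ that (approximately) attain the maximum in the definition (\ref{eq:irreg}), and refine $\mathcal{P}_t$ into $\mathcal{P}_{t+1}$ by intersecting every class $\gV_i$ with the four Venn cells determined by $\mathcal{U}_t$ and $\mathcal{S}_t$. This multiplies the number of classes by at most $4$ per step.

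The driving quantity is the \emph{energy}
\[
q(\mathcal{P}) \;:=\; \frac{1}{|\gV|^2}\sum_{i,j} \frac{e_G(\gV_i,\gV_j)^2}{|\gV_i|\,|\gV_j|},
\]
which lies in $[0,1]$. Associating with $\mathcal{P}$ the piecewise-constant matrix $A_{\mathcal{P}} \in [0,1]^{\gV \times \gV}$ whose entry at $(u,v) \in \gV_i \times \gV_j$ equals $e_G(\gV_i,\gV_j)/(|\gV_i|\,|\gV_j|)$, one has $q(\mathcal{P}) = \|A_{\mathcal{P}}\|_2^2 / |\gV|^2$, and $A_{\mathcal{P}}$ is the $L^2$-projection of the $\{0,1\}$-adjacency $A_G$ onto matrices constant on the product cells of $\mathcal{P} \times \mathcal{P}$. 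Because $\mathbf{1}_{\mathcal{U}_t}\mathbf{1}_{\mathcal{S}_t}^{\top}$ is constant on each product cell of $\mathcal{P}_{t+1} \times \mathcal{P}_{t+1}$, a short Cauchy–Schwarz / Pythagoras computation yields
\[
q(\mathcal{P}_{t+1}) - q(\mathcal{P}_t) \;\geq\; \frac{\bigl\langle A_G - A_{\mathcal{P}_t},\, \mathbf{1}_{\mathcal{U}_t}\mathbf{1}_{\mathcal{S}_t}^{\top}\bigr\rangle^2}{|\mathcal{U}_t|\,|\mathcal{S}_t|\,|\gV|^2} \;\geq\; \frac{\epsilon^2}{4},
\]
using that the numerator equals $\bigl(e_G(\mathcal{U}_t,\mathcal{S}_t) - e_{\mathcal{P}_t}(\mathcal{U}_t,\mathcal{S}_t)\bigr)^2 \geq (\epsilon|\gV|^2/2)^2$ by choice of witness, and $|\mathcal{U}_t|\,|\mathcal{S}_t| \leq |\gV|^2$. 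This energy-increment inequality is the heart of the argument and the main obstacle I expect, since one must cleanly relate a rank-$1$ ``cut'' witness to the Pythagorean gain produced by a partition refinement. Since $q(\cdot) \in [0,1]$, the loop terminates after at most $T \leq 4/\epsilon^2$ steps, giving $|\mathcal{P}_T| \leq 4^T \leq 2^{8/\epsilon^2}$ and $\mathrm{irreg}_G(\mathcal{P}_T) \leq \epsilon/2$.

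To upgrade to an equipartition I set $m := \lceil 4|\mathcal{P}_T|/\epsilon \rceil$, chop each class $\gV_i$ into blocks of size $\lfloor |\gV|/m \rfloor$, and sweep the leftover vertices (at most one block's worth per class) into a single exceptional pool that is then redistributed arbitrarily among the existing blocks. At most a fraction $O(|\mathcal{P}_T|/m) \leq \epsilon/4$ of vertices change class; since $\mathrm{irreg}_G$ is $O(1)$-Lipschitz in the fraction of reassigned vertices (each such vertex can shift $e_G(\mathcal{U},\mathcal{S})$ by at most $|\gV|$), this raises the irregularity by at most an additional $\epsilon/2$, yielding the required $\epsilon$ bound. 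The final class count is $m \leq (4/\epsilon) \cdot 2^{8/\epsilon^2} \leq 2^{c/\epsilon^2}$ for a suitable universal constant $c$, completing the proof.
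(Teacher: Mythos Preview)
The paper does not prove this theorem: it is stated in Appendix~\ref{app:weak_reg} purely as background, with the proof attributed to \cite{frieze1999quick}. So there is no ``paper's own proof'' to compare against. What the paper does prove is a different statement (Theorem~\ref{thm:G_regularity}), derived from the Hilbert-space lemma (Lemma~\ref{lem:graphon_reg}) rather than from an energy-increment/refinement loop; that argument produces overlapping soft communities and never passes through an equipartition, so it is methodologically distinct from what you wrote.

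Evaluated on its own merits, your proposal is the standard Frieze--Kannan proof and the energy-increment core is carried out correctly: the identification $q(\mathcal{P}) = \|A_{\mathcal{P}}\|_2^2/|\gV|^2$, the Pythagoras/Cauchy--Schwarz step giving $q(\mathcal{P}_{t+1})-q(\mathcal{P}_t)\ge \epsilon^2/4$, and the resulting bound $|\mathcal{P}_T|\le 2^{8/\epsilon^2}$ are all fine. The one soft spot is the equipartition cleanup. Your Lipschitz claim (``each reassigned vertex can shift $e_G(\mathcal U,\mathcal S)$ by at most $|\gV|$'') controls only the redistribution of the leftover pool, but you also chop each class into equal-size blocks, which changes $A_{\mathcal{P}}$ on \emph{all} vertices, not just the leftovers; you have not argued why that chopping step does not increase the irregularity. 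This is repairable---one can either show that refinement never increases $\mathrm{irreg}_G$, or compare $e_{\mathcal{P}'}$ directly to $e_{\mathcal{P}_T}$ via a density-averaging argument on the full blocks---but as written the sentence ``this raises the irregularity by at most an additional $\epsilon/2$'' is asserted rather than proved.
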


The weak regularity lemma states that any large graph $G$ can be approximated by a weighted graph $G^{\epsilon}$ with node set $\gV^{\epsilon} = \{\gV_1, \ldots, \gV_k\}$. The nodes of $G^{\epsilon}$ represent clusters of nodes from $G$, and the edge weight between two clusters $\gV_i$ and $\gV_j$ is given by $\frac{e_G(\gV_i, \gV_j)}{|\gV_i||\gV_j|}$.
In this context, an important property of the irregularity $\mathrm {irreg}_G(\mathcal{P})$ is that it can be seen as the cut metric between the $G$ and a SBM based on $G^{\epsilon}$. For each node $i$ denote by $\gV_{q_i} \in \mathcal{P}$ the partition set that contains the node.
Given $G^{\epsilon}$, construct a new graph $G^{\mathcal{P}}$ with $N$ nodes, whose adjacency matrix  $\mA^{\mathcal{P}} = (a^{\mathcal{P}}_{i,j})_{i,j=1}^{|\gV|}$ is defined by
\[
a^{\mathcal{P}}_{i,j} = \frac{e_G(\gV_{q_i}, \gV_{q_j})}{|\gV_{q_i}| |\gV_{q_j}|},
\]
Let $\mA$ be the adjacency matrix of $G$.
It can be shown that
\[\norm{\mA-\mA^{\mathcal{P}}}_{\square} = \mathrm{irreg}_G(\mathcal{P}),\]
which shows that the weak regularity lemma can be expressed in terms of cut norm rather than irregularity.

\section{Graphons and norms}
\label{app:graphons}

\paragraph{Kernel.} A kernel $Y$ is a measurable function $Y: [0,1]^2 \rightarrow [-1,1]$.
\paragraph{Graphon.}  A graphon \citep{borgs2008convergent,cut-homo3} is a measurable function $W: [0,1]^2 \rightarrow [0,1]$. A graphon can be seen as a weighted graph, where the node set is the interval $[0,1]$, and for any pair of nodes $x,y \in [0,1]$, the weight of the edge between $x$ and $y$ is $W(x,y)$, which can also be seen as the probability of having an edge between $x$ and $y$. We note that in the standard definition a graphon is defined to be symmetric, but we remove this restriction in our construction.  
\paragraph{Kernel-signal and Graphon-signal.} A kernel-signal is a pair $(Y, y)$ where $Y$ is a kernel and $y: [0,1] \rightarrow \mathbb{R}^D$ is a measurable function.
A graphon-signal is defined similarly with a graphon in place of a kernel.
\paragraph{Induced graphon-signal.} Consider an interval equipartition $\mathcal{I}_m=\{I_1,\ldots, I_m\}$, a partition of $[0,1]$ into disjoint intervals of equal length. Given a graph $G$ with an adjacency matrix $\mA$, the induces graphon $W_\mA$ is the graphon defined by $W_\mA(x,y)=\mA_{\lceil xm \rceil \lceil ym \rceil}$, where we use the convention that $\lceil 0 \rceil=1$. Notice that $W_\mA$ is a piecewise constant function on $\mathcal{I}_m \times \mathcal{I}_m$. As such, a graph of $m$ nodes can be identified by its induced graphon that is piecewise constant on $\mathcal{I}_m \times \mathcal{I}_m$.

\subsection{Weighted Frobenius and cut norm}

\paragraph{Weighted Frobenius norm.}
Let $q:[0,1]^2\rightarrow \left.\left[c,\infty\right. \right)$ be a measurable function in $\mathcal{L}^{\infty}([0,1]^2)$, where $c>0$. We call such a $q$ a \emph{weight function}.
 Consider the real weighted Lebesgue space $\mathcal{L}^2([0,1]^2;q)$ defined with the inner product 
\[\ip{Y}{Y'}_q := \frac{1}{\norm{1}_{1;q}}\iint_{[0,1]^2}Y(x,y)Y'(x,y)q(x,y)dxdy,\]
where $1$ is the constant function $[0,1]^2\ni (x,y)\mapsto 1$ and $\norm{1}_{1;q} = \iint q $. When $q=1$, we denote $\ip{X}{Z}:=\ip{X}{Z}_1$.
Let $\alpha,\beta>0$. Consider the real Hilbert space $L^2([0,1]^2;q)\times (L^2[0,1])^D$ defined with the weighted inner product 
 \begin{align*}& \ip{(Y,y)}{(Y',y')}_{q} =\ip{(Y,y)}{(Y',y')}_{q,\alpha,\beta} \\
& = \alpha\frac{1}{\norm{1}_{1;q}}\iint_{[0,1]^2}Y(x,y)Y'(x,y) q(x,y)dxdy + \frac{\beta}{D}\sum_{j=1}^D\int_{[0,1]}y_j(x)y_j'(x)dx.
\end{align*}
We call the corresponding weighted norm the \emph{weighted Frobenius norm}, denoted by 
\[\norm{(Y,y)}_{{\rm F};q}=\norm{(Y,y)}_{{\rm F};q,\alpha,\beta}=\sqrt{\alpha\norm{Y}^2_{{\rm F};q}+\frac{\beta}{D}\sum_{j=1}^D\norm{y_j}^2_{\mathrm F}},\]
where $\norm{Y}^2_{{\rm F};q} = \ip{Y}{Y}_q$ and $\norm{y_j}^2_{\mathrm F}=\ip{y_j}{y_j}_1$.

Similarly, for a matrix-signal, we consider a \emph{weight matrix} $\mQ \in \left.\left[c,\infty\right. \right)^{N \times N}$, where $c > 0$. For $\mD, \mD' \in \sR^{N \times N}$, define the weighted inner product by 
\[\ip{\mD}{\mD'}_{\mQ} := \frac{1}{\norm{\mathbf{1}}_{1;\mQ}}\sum_{i,j\in[N]^2}d_{i,j}d'_{i,j} q_{i,j},\]
where $\mathbf{1}\in\mathbb{R}^{N\times N}$ is the matrix with all entries equal to $1$, and $\norm{\mathbf{1}}_{1;\mQ} = \sum_{i,j\in[N]^2} q_{i,j}$. Define the \emph{weighted matrix-signal Frobenius norm} by
\[\norm{(\mD,\mZ)}_{{\rm F};\mQ}=\norm{(\mD,\mZ)}_{{\rm F};\mQ,\alpha,\beta}=\sqrt{\alpha\norm{\mD}^2_{{\rm F};\mQ}+\frac{\beta}{D}\sum_{j=1}^D\norm{\vz_j}^2_{\mathrm F}},\]
where $\norm{\mD}^2_{{\rm F};\mQ} = \ip{\mD}{\mD}_{\mQ}$ and $\norm{\vz_j}^2_{\mathrm F}=\ip{\vz_j}{\vz_j}_1$.

\paragraph{Graphon weighted cut norm and cut metric.} Define for a kernel-signal $(Y,y)$ the \emph{weighted cut norm} 
\[\norm{(Y,y)}_{\square; q,\alpha,\beta} = \norm{(Y,y)}_{\square; q} = \frac{\alpha}{\norm{1}_{1;q}}\sup_{\mathcal{U},\mathcal{V}}\abs{\int_{\mathcal{U}} \int_{\mathcal{V}} Y(x,y) q(x,y) dxdy } + \beta \frac{1}{D}\sum_{j=1}^D \sup_{\mathcal{U}} \abs{\int_{\mathcal{U}} y_j(x)dx},\]
where the supremum is over the set of measurable subsets $\mathcal{U},\mathcal{V}\subset [0,1]$.

The weighted cut metric between two graphon-signals $(W,f)$ and $(W',f')$ is defined to be $\norm{(W,f) - (W',f')}_{\square; q}$.

\paragraph{Graph-signal weighted cut norm and cut metric.} 

Define for a matrix-signal $(\mD,\mZ)$, where $\mD\in [-1,1]^{N\times N}$ and $\mZ\in \mathbb{R}^{N\times D}$, the \emph{weighted cut norm} 
\[\norm{(\mD,\mZ)}_{\square; \mQ,\alpha,\beta} = \norm{(\mD,\mZ)}_{\square; \mQ} = \alpha\norm{\mD}_{\square; \mQ} +\beta\norm{\mZ}_{\square} \]
\[=\frac{\alpha}{\norm{\mathbf{1}}_{1;\mQ}}\max_{\mathcal{U},\mathcal{V}\subset [N]}\abs{\sum_{i\in\mathcal{U}} \sum_{j\in\mathcal{V}} d_{i,j} q_{i,j} } +  \frac{\beta}{DN}\sum_{j=1}^D \max_{\mathcal{U}\subset [N]} \abs{\sum_{i\in\mathcal{U}} z_{i,j}}.\]
The weighted cut metric between two graph-signals $(\mA,\vs)$ and $(\mA',\vs')$ is defined to be $\norm{(\mA,\vs) - (\mA',\vs')}_{\square; \mQ}$.
We note that this metric gives a meaningful notion of graph-signal similarity for graphs as long as their number of edges satisfy $\norm{\mathbf{1}}_{1;\mQ} = \Theta(E)$\footnote{The asymptotic notation $a_n=\Theta(b_n)$ means that there exist positive constants $c_1, c_2$ and $n_0$ such that $c_1b_n\leq a_n \leq c_2b_n$ for all $n\geq n_0$. In our analysis, we suppose that there is a sequence of graphs with $N_n$ nodes, $E_n$ edges, and weight matrices $\mQ_n$.}. All graphs with $E\ll \norm{\mathbf{1}}_{1;\mQ}$ have distance close to zero from each other, so the cut metric does not have a meaningful or useful separation power for such graphs.

The \emph{weighted cut-metric} between two graphs $\mA$ and $\mA'$ represents the maximum (weighted) discrepancy in edge densities of $\mA$ and $\mA'$ across all blocks, giving the cut-metric a probabilistic interpretation. For simple graphs $\mA, \mA'$, the difference $\mA - \mA'$ is a granular function with values jumping between $-1$, $0$, and $1$. In such a case $\ell_p$ norms of $\mA - \mA'$ tend to be large. In contrast, the fact that the absolute value in (\ref{eq:def_graphon_signal_cut_norm}) is outside the sum, unlike the $\ell_1$ norm, results in an averaging effect, which can lead to a small distance between $\mA$ and $\mA'$ in the cut distance even if $\mA - \mA'$ is granular. 

\paragraph{Densifying cut similarity.}
In this paper, we will focus on a special construction of a weighted cut norm, which we construct and motivate next. 

In graph completion tasks, such as link prediction or knowledge graph reasoning, the objective is to complete a partially observed adjacency matrix. Namely, there is a set of known dyads\footnote{A dyad is a pair of nodes $(m,n)\in[N]^2$. For a simple graph, a dyad may be an edge or a non-edge.} $\mathcal{M}\subset [N]^2$, and the given data is the restriction of $\mA$ to the known dyads
\[\mA|_{\mathcal{M}}:\mathcal{M}\rightarrow \{0,1\}.\]
The goal is then to find an adjacency matrix $\mB$ that fits $\mA$ on the known dyads, namely, $\mB|_{\mathcal{M}} \approx \mA|_{\mathcal{M}}$, with the hope that $\mB$ also approximates $\mA$ on the unknown dyads due to some inductive bias.

Recall that $\mathcal{E}$ denotes the set of edges of $\mA$. We call $\mathcal{E}^{\rm c} = [N]^2\setminus \mathcal{E}$ the set of \emph{non-edges}. The training set in graph completion consists of the edges  $\mathcal{E}\cap \mathcal{M}$ and the non-edges  $\mathcal{E}^{\rm c}\cap\mathcal{M}$. 
Typical methods, such as VGAE \citep{Kipf2016VariationalGA} and TLC-GNN \citep{yan2021link}, define a loss of the form
\[l(\mB) = \sum_{(n,m)\in\mathcal{M}\cap\mathcal{E}}c_{n,m}\psi_1(b_{n,m},a_{n,m}) +  \sum_{(n,m)\in\mathcal{M}\cap\mathcal{E}^{\rm c}}c_{n,m}\psi_2(b_{n,m},a_{n,m}), \]
where $\psi_1,\psi_2:\mathbb{R}^2\rightarrow\mathbb{R}_+$ are dyad-wise loss functions and $c_{n,m}\in\mathbb{R}_+$ are weights. Many methods, like RotateE \citep{sun2019rotate}, HousE \citep{li2022house} and NBFNet \citep{zhu2021neural}, give one weight $c_{n,m}=C$ for edges $(n,m)\in \mathcal{E}\cap \mathcal{M}$ and a smaller weight $c_{n,m}=c\ll C$ for non-edges $(n,m)\in \mathcal{E}^{\rm c}\cap \mathcal{M}$. The motivation is that for sparse graphs there are many more non-edges than edges, and giving the edges and non-edges the same weight would tend to produce learned $\mB$ that does not put enough emphasis on the connectivity structure of $\mA$. In practice, the smaller weight for non-edges is implemented implicitly by taking random samples from $\mathcal{M}$ during training,  balancing the number of samples from $\mathcal{E}\cap \mathcal{M}$ and from $\mathcal{E}^{\rm c}\cap \mathcal{M}$.
The samples from $\mathcal{E}^{\rm c}\cap \mathcal{M}$ are called \emph{negative samples}.

\begin{remark}
   In this paper we interpret such an approach as learning a \emph{densified version} of $\mA$. Namely, by putting less emphasis on non-edges, the matrix $\mB$ roughly fits the structure of $\mA$, but with a higher average degree. 
\end{remark}

Motivated by the above discussion, we also define a densifying version of cut distance. Given a target unweighted adjacency matrix $\mA=(a_{i,j})_{i,j=1}^N$ to be approximated, we consider the weight matrix $\mQ=e\mathbf{1}+(1-e)\mA$ for some small $e$ and $\mathbf{1}$ being the all 1 matrix. Denote the number of edges by $E=\abs{\mathcal{E}}$.
Next, we would like to choose $e$ to reflect some desired balance between edges and non-edges. Since the number of non-edges is $N^2-E$ and the number of edges is $E$, we choose $e$ in such a way that $e-(E/N^2)e = \Gamma E/N^2$ for some $\Gamma>0$, namely,
\begin{equation}
    \label{eq:app_dense_e_def}
    e= e_{E,\Gamma} = \frac{\Gamma E/N^2}{1-(E/N^2)}.
\end{equation}
The interpretation of $\Gamma$ is the proposition of sampled non-edges when compared with the edges. Namely, the weight matrix $\mQ=e+(1-e)\mA$ effectively simulates taking $\Gamma E$ negative samples and $E$ samples. 
Observe that
\[\norm{\mA}_{{\mathrm F};e\mathbf{1}+(1-e)\mA}^2=\frac{1}{\sum_{i,j\in[N]^2} q_{i,j}}\sum_{i,j\in[N]^2} a_{i,j} q_{i,j}= \frac{E/N^2}{e+(1-e)E/N^2}=1/(\Gamma+1).
\]
To standardize the above similarity measure, we normalize it and define the weighted Frobenius norm
$(1+\Gamma)\norm{\mB}_{{\mathrm F};\mQ_{\mA}}$ and weighted cut norm $(1+\Gamma)\norm{\mB}_{\square;\mQ_{\mA}}$, 
where 
\begin{equation}
\label{eq:app_dense_Q_def}
  \mQ_{\mA}=\mQ_{\mA,\Gamma}:=e_{E,\Gamma}\mathbf{1}+(1-e_{E,\Gamma})\mA,
\end{equation}
and where $e_{E,\Gamma}$ is defined in (\ref{eq:app_dense_e_def}). We now have 
\begin{equation}
\label{eq:A_FNormIs1}
    (1+\Gamma)\norm{\mA}_{{\mathrm F};\mQ_{\mA}}=1.
\end{equation}
This standardization assures that merely increasing $\Gamma$ in the definition of the cut metric $(1+\Gamma)\norm{\mA-\mB}_{{\mathrm F};\mQ_{\mA}}$  would not lead to a seemingly better approximation. The above discussion leads to the following definition.
\begin{definition}
    Let $\mA\in\{0,1\}^{N\times N}$ be an unweighted adjacency matrix, and $\Gamma>0$. The \emph{densifying cut similarity} between the target $\mA$ and any adjacency matrix $\mB\in \mathbb{R}^{N\times N}$ is defined to be
    \[ \sigma_{\square}(\mA|| \mB)=\sigma_{\square;\Gamma}(\mA|| \mB): = (1+\Gamma)\norm{\mA-\mB}_{\square;\mQ_{\mA}},\]
    where $\mQ_{\mA}$ is defined in (\ref{eq:app_dense_Q_def}). 
    Given $\alpha,\beta>0$ such that $\alpha+\beta=1$, the \emph{densifying cut similarity} between the target graph-signal $(\mA,\mX)$ and the  graph-signal $(\mA',\mX')$ is defined to be
    \[ \sigma_{\square}\big((\mA,\mX)|| (\mA',\mX')\big) =\sigma_{\square;\alpha,\beta,\Gamma}\big((\mA,\mX)|| (\mA',\mX')\big) :=  \alpha\sigma_{\square;\Gamma}(\mA|| \mB) +\beta\norm{\mX-\mX'}_{\square}.\]
\end{definition}
We moreover note that the similarity measure $\sigma_{\square}(\mA|| \mB)$ is not symmetric, and hence not a metric. The first entry $\mA$ in $\sigma_{\square}(\mA|| \mB)$ is interpreted as the thing to be approximated, and the second entry $\mB$ as the approximant. Here, when fitting an \ibg to a graph, $\mA$ is a constant, and $\mB$ is the variable.

\section{Proof of the semi-constructive densifying directional soft weak regularity lemma}
\label{Appendix:regularity_proof}
In this section we prove a version of the constructive weak regularity lemma for asymmetric graphon signals. Prior information regarding cut-distance, the original formulation of the weak regularity lemma and it's constructive version for symmetric graphon-signals can be found in (\citealp[Appendix A, B]{finkelshtein2024learning}).

\subsection{Intersecting block graphons}
Below, we extend the definition of \ibgs for graphons. The construction is similar to the one in Appendix B.3 of \citet{finkelshtein2024learning}, where ICGs are extended to graphons.
Denote by $\chi$ the set of all indicator functions of measurable subset of $[0,1]$
\[\chi = \{\mathbbm{1}_u\ |\ u\subset[0,1] \text{ measurable}\}.\]
\begin{definition}
    A  set $\sQ$ of bounded measurable functions $q:[0,1]\rightarrow\sR$ that contains $\chi$ is called a \emph{soft affiliation model}.
\end{definition}

For the case of node level graphon-signals, we use the following definition:
\begin{definition}
\label{def:soft_model}
    Let $D\in\sN$. Given a soft affiliation model $\sQ$, the subset $[\sQ]$ of $L^2[0,1]^2\times (L^2[0,1])^D$ of all elements of the form $(au(x)v(y),bu(z)+cv(z))$, with $u,v\in\sQ$, $a\in\sR$ and $b,c\in\sR^D$, is called the \emph{soft rank-1 intersecting block graphon (\ibg) model}  corresponding to $\sQ$. Given $K\in\sN$, the subset $[\sQ]_K$ of $L^2[0,1]^2\times (L^2[0,1])^D$ of all linear combinations of $K$ elements of $[\sQ]$ is called the \emph{soft rank-$K$ \ibg model}  corresponding to $\sQ$. Namely, $(C,p)\in[\sQ]_K$ if and only if it has the form
        \[C(x,y)= \sum_{k=1}^K a_k u_k(x)v_k(y) \quad \text{and} \quad p(z)=\sum_{k=1}^K b_k u_k(z)+c_k v_k(z)\]
    where $(u_k)_{k=1}^K\in\sQ^K$ are called  the \emph{target community affiliation functions}, $(v_k)_{k=1}^K\in\sQ^K$ are called the \emph{source community affiliation functions}, $(a_k)_{k=1}^K\in\sR^K$ are called the \emph{community affiliation magnitudes}, $(b_k)_{k=1}^K\in\sR^{K\times D}$ are called the \emph{target community features}, and $(c_k)_{k=1}^K\in\sR^{K\times D}$ the \emph{source community features}. Any element of $[\mathcal{Q}]_K$ is called an intersecting block graphon-signal (\ibg).
\end{definition}

\subsection{The semi-constructive weak regularity lemma in Hilbert space}
In this subsection we prove the constructive weak graphon-signal regularity lemma.

\citet{Szemeredi_analyst}  extended the weak regularity lemma to graphons. They showed that the lemma follows from a more general result about approximation in Hilbert spaces -- the weak regularity lemma in Hilbert spaces \citep[Lemma 4]{Szemeredi_analyst}. We extend this result to have a constructive form, which we later use to prove \cref{thm:G_regularity}. For completeness, we begin by stating the original weak regularity lemma in Hilbert spaces from \citep{Szemeredi_analyst}.

\begin{lemma}[\citep{Szemeredi_analyst}]
    \label{fact:szlemma}
    Let $\gK_1, \gK_2,\ldots$ be arbitrary nonempty subsets (not necessarily subspaces) of a real Hilbert space $\gH$. Then, for every $\epsilon>0$ and $g\in \gH$ there is  $m\leq \lceil 1/\epsilon^2 \rceil$  and  $(f_i\in \gK_i)_{i=1}^{m}$ and $(\gamma_i \in \sR)_{i=1}^m$,   such that for every $w\in \gK_{m+1}$
    \begin{align*}
        \bigg|\ip{w}{g-(\sum_{i=1}^{m}\gamma_i f_i)}\bigg|\leq \epsilon\ \norm{w}\norm{g}.
    \end{align*}
\end{lemma}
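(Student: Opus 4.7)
The plan is to prove this via a greedy energy-increment argument: build the approximation $h_m = \sum_{i=1}^m \gamma_i f_i$ one term at a time, stopping as soon as the near-orthogonality condition against $\gK_{m+1}$ is satisfied, and bound the number of steps by tracking how much $\|h_k\|^2$ grows. The case $g = 0$ is trivial so assume $g \neq 0$. Initialize $h_0 = 0$. At step $k+1$, check whether for every $w \in \gK_{k+1}$ we have $|\ip{w}{g - h_k}| \leq \epsilon \|w\| \|g\|$. If yes, halt and output $m = k$ together with the $\gamma_i$'s currently defining $h_k$. Otherwise, by negation of the stopping condition, there is a witness $f_{k+1} \in \gK_{k+1}$ with $|\ip{f_{k+1}}{g - h_k}| > \epsilon \|f_{k+1}\|\|g\|$, which I would add to the collection.

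Next I would define $h_{k+1}$ to be the orthogonal projection of $g$ onto $\mathrm{span}(f_1, \ldots, f_{k+1})$; this determines the (possibly updated) coefficients $\gamma_1, \ldots, \gamma_{k+1}$, which is allowed because the statement only requires existence of $\gamma_i$'s at termination. Since $g - h_k$ is orthogonal to $\mathrm{span}(f_1, \ldots, f_k)$, writing $f_{k+1} = \tilde{f}_{k+1} + p$ with $p$ in that span yields $\ip{f_{k+1}}{g - h_k} = \ip{\tilde{f}_{k+1}}{g - h_k}$ and $\|\tilde{f}_{k+1}\| \leq \|f_{k+1}\|$. A short Pythagorean computation gives the energy increment
\begin{equation*}
    \|h_{k+1}\|^2 - \|h_k\|^2 \;=\; \frac{|\ip{\tilde{f}_{k+1}}{g - h_k}|^2}{\|\tilde{f}_{k+1}\|^2} \;\geq\; \frac{|\ip{f_{k+1}}{g - h_k}|^2}{\|f_{k+1}\|^2} \;>\; \epsilon^2 \|g\|^2.
\end{equation*}

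Since each $h_k$ is an orthogonal projection of $g$, Pythagoras also gives $\|h_k\|^2 \leq \|g\|^2$. Combined with the increment bound $\|h_k\|^2 \geq k\epsilon^2 \|g\|^2$, the process must terminate after at most $\lceil 1/\epsilon^2 \rceil$ rounds, which delivers the required $m$. The main subtlety I anticipate is handling the fact that the $\gK_i$ are arbitrary subsets rather than subspaces: I must make sure the witness $f_{k+1}$ can really be drawn from $\gK_{k+1}$ itself (immediate from negating the stopping condition, with no convex-hull or closure step needed), and that the projection step is well defined even when the $f_i$ are linearly dependent. Dependency is in fact ruled out whenever the algorithm does not stop, because $g - h_k \perp \mathrm{span}(f_1, \ldots, f_k)$ would force $\ip{f_{k+1}}{g - h_k} = 0$ if $f_{k+1}$ lay in that span, contradicting the strict inequality defining the witness. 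Thus the $\tilde{f}_{k+1}$ in the increment computation is nonzero and the argument goes through.
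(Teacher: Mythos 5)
Your proof is correct. Note that the paper does not actually prove this lemma: it is quoted verbatim from \citep{Szemeredi_analyst} as background, and the greedy energy-increment argument you give is precisely the standard proof from that reference. All the delicate points are handled properly: the witness $f_{k+1}$ is necessarily nonzero and lies outside $\mathrm{span}(f_1,\dots,f_k)$ (since $g-h_k$ is orthogonal to that span while $|\ip{f_{k+1}}{g-h_k}|>0$), the Pythagorean identity $\norm{h_{k+1}}^2-\norm{h_k}^2=|\ip{\tilde f_{k+1}}{g-h_k}|^2/\norm{\tilde f_{k+1}}^2$ is right, and the bound $\norm{h_k}^2\le\norm{g}^2$ caps the number of increments at $\lceil 1/\epsilon^2\rceil$.

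It is worth contrasting your argument with what the paper does prove, namely its ``semi-constructive'' variant (Lemma \ref{lem:graphon_reg}). There, instead of greedily selecting witnesses, one takes $g^*$ to be any near-minimizer of $\norm{g-\sum_{i\le m}\kappa_i h_i}$ over $\mathcal{K}_1\times\dots\times\mathcal{K}_m$, and the near-orthogonality against $\gK_{m+1}$ is extracted by perturbing $g^*\mapsto g^*+tw$ and forcing the discriminant of the resulting quadratic in $t$ to be nonpositive; the bound on $m$ then comes from a pigeonhole on the decreasing sequence of optimal errors $\eta_k$ rather than from an explicit energy increment. The two mechanisms are two faces of the same monotone quantity (your $\norm{h_k}^2$ increases by exactly what their $\eta_k$ decreases by, when $h_k$ is the true orthogonal projection), but the paper's version is tailored to work with approximate minimizers produced by gradient descent, whereas your construction requires being able to test the stopping condition and extract an exact witness at each round.
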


\citet{finkelshtein2024learning} introduced a version of Lemma \ref{fact:szlemma} (Lemma B.3 therein) with a ''more constructive flavor.'' They provide a result in which the approximating vector $\sum_{i=1}^{m}\gamma_i f_i$  is given as the solution to a "manageable" optimization problem, whereas the original lemma in \citep{Szemeredi_analyst} only proves the existence of the approximating vector. Below, we give a similar result to \citep[Lemma B.3.]{finkelshtein2024learning}, where the constructive aspect is further improved. While \citet{finkelshtein2024learning} showed that the optimization problems leads to an approximate minimizer in high probability, they did not provide a way to evaluate if indeed this ``good'' event of high probability occurred. In contrast, we formulate this lemma in such a way that leads to a deterministic approach for checking whether the good event happened. In the discussion after \cref{lem:graphon_reg}, we explain this in detail.

\begin{lemma}
    \label{lem:graphon_reg}
    Let $\{\gK_j\}_{j\in\mathbb{N}}$ be a sequence of nonemply subsets of a real Hilbert space $\gH$.  Let $K\in\mathbb{N}$,  $\delta\geq 0$, let $R\geq 1$ such that $K/R\in\mathbb{N}$, let $\delta>0$, and let $g\in\gH$.
    For every $k\in\mathbb{N}$, 
    let
\[\eta_k = (1+\delta) \inf_{\boldsymbol{\kappa}, \mathbf{h}}\|g-\sum_{i=1}^k\kappa_ih_i\|^2\]
where the infimum is over $\boldsymbol{\kappa}= \{\kappa_1,\ldots,\kappa_k\} \in \sR^k$ and 
$\mathbf{h}=\{h_1,\ldots,h_k\}\in\mathcal{K}_1\times \ldots \times \mathcal{K}_k$.   Then,  
    \begin{enumerate}[leftmargin=*,labelsep=0.7em]
        \item \label{item:graphon_deterministic} For every $m\in\mathbb{N}$, any vector of the form 
        \begin{equation}
            \label{eq:form_g*}
            g^*=\sum_{j=1}^m \gamma_j f_j \quad \text{\ \ such that\ \ } \quad \boldsymbol{\gamma}=(\gamma_j)_{j=1}^m\in\mathbb{R}^m \quad \text{and} \quad \mathbf{f}=(f_j)_{j=1}^m\in \mathcal{K}_1 \times \ldots \times \mathcal{K}_m
        \end{equation}
    that gives a close-to-best Hilbert space approximation of $g$ in the sense that 
    \begin{equation}
        \label{eq:app_reg_item1_cond}
        \| g- g^*\| \leq \eta_m,
    \end{equation}
    also satisfies
    \begin{equation}
    \label{eq:app_reg_item13343}
    \forall w\in \gK_{m+1} , \quad\abs{\ip{w}{g-g^*}}\leq \norm{w}\sqrt{\eta_m-\frac{\eta_{m+1}}{1+\delta} }.
\end{equation}
    \item \label{item:graphon_probablistic}

    If $m$ is uniformly randomly sampled from $[K]$, then in probability $1-\frac{1}{R}$ (with respect to the choice of $m$), 
    \begin{equation}
        \label{eq:app_reg_item20}
        \norm{w}\sqrt{\eta_m-\frac{\eta_{m+1}}{1+\delta} }\leq \norm{w}\norm{g}\sqrt{\delta + \frac{R(1+\delta)}{K}}.
    \end{equation}
    Specifically,
     in probability $1-\frac{1}{R}$, any vector of the form (\ref{eq:form_g*}) which satisfy (\ref{eq:app_reg_item1_cond}), also satisfies 
     \begin{equation}
        \label{eq:app_reg_item2} 
        \forall w\in \gK_{m+1} , \quad\abs{\ip{w}{g-g^*}}\leq \norm{w}\norm{g}\sqrt{\delta+ \frac{R(1+\delta)}{K}}.
     \end{equation}
    \end{enumerate}
\end{lemma}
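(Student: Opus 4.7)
My plan is to prove Part~1 by a short projection/variational computation, and Part~2 by a Markov/pigeonhole argument on a telescoping sum.

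\textbf{Part~1 (deterministic).} The key observation is that for any $w \in \gK_{m+1}$ and $t \in \sR$, the vector $g^* + tw$ has the form $\sum_{i=1}^{m+1}\kappa_i h_i$ with $h_i \in \gK_i$ (take $\kappa_{m+1}=t$, $h_{m+1}=w$, and $(\kappa_i, h_i)_{i=1}^m$ the coefficients and atoms of $g^*$). Hence by the definition of $\eta_{m+1}$,
\[
\norm{g - g^* - tw}^2 \;\geq\; \eta_{m+1}/(1+\delta) \qquad \text{for all } t \in \sR.
\]
Expanding $\norm{g-g^*-tw}^2 = \norm{g-g^*}^2 - 2t\,\ip{w}{g-g^*} + t^2\norm{w}^2$ as a quadratic in $t$ and minimizing at $t^* = \ip{w}{g-g^*}/\norm{w}^2$ yields
\[
\norm{g-g^*}^2 - \ip{w}{g-g^*}^2 / \norm{w}^2 \;\geq\; \eta_{m+1}/(1+\delta).
\]
Combined with the hypothesis $\norm{g-g^*}^2 \leq \eta_m$ and taking square roots, this is exactly~(\ref{eq:app_reg_item13343}).

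\textbf{Part~2 (probabilistic).} I would set $M_k := \eta_k/(1+\delta)$, which is exactly the infimum in the definition of $\eta_k$. Two elementary facts: (i) $M_k$ is non-increasing in $k$, since any $k$-term combination is also a $(k{+}1)$-term combination with $\kappa_{k+1}=0$; (ii) $M_1 \leq \norm{g}^2$, by taking $\kappa_1 = 0$. Telescoping gives
\[
\sum_{k=1}^K (M_k - M_{k+1}) \;=\; M_1 - M_{K+1} \;\leq\; \norm{g}^2.
\]
A standard pigeonhole/Markov argument (exploiting $K/R \in \sN$) shows that at most $K/R$ indices $k \in [K]$ can satisfy $M_k - M_{k+1} > R\norm{g}^2/K$; therefore, with probability at least $1-1/R$ over the uniform choice of $m \in [K]$, one has $M_m - M_{m+1} \leq R\norm{g}^2/K$. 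Using the algebraic identity
\[
\eta_m - \eta_{m+1}/(1+\delta) \;=\; (1+\delta)(M_m - M_{m+1}) \;+\; \delta M_{m+1},
\]
the first summand is then at most $R(1+\delta)\norm{g}^2/K$ on the good event, while the second is deterministically at most $\delta \norm{g}^2$ via $M_{m+1} \leq M_1 \leq \norm{g}^2$. This delivers~(\ref{eq:app_reg_item20}), and substituting into~(\ref{eq:app_reg_item13343}) immediately gives~(\ref{eq:app_reg_item2}).

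\textbf{Main obstacle.} Part~1 is essentially a one-line variational argument, so the bulk of the work lives in Part~2. The main delicacy there is matching the precise constant $\tfrac{\delta}{1+\delta}$ stated in~(\ref{eq:app_reg_item20}): the straightforward split above naturally produces $\delta\norm{g}^2$ rather than the tighter $\tfrac{\delta}{1+\delta}\norm{g}^2$. Obtaining the sharper factor would require a more delicate argument --- for instance, a reweighted telescoping applied directly to $\eta_k - \eta_{k+1}/(1+\delta)$, or additionally invoking the lower bound $\norm{g-g^*}^2 \geq M_m$ (valid since $g^*$ is itself an $m$-term combination) to absorb an extra factor of $1/(1+\delta)$ into the ``slack'' term. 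I expect that one of these refinements is exactly how the stated constant is recovered, but this is the non-routine step of the proof.
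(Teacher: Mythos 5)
Your Part~1 is correct and is the same argument as the paper's: the paper phrases it as ``the discriminant of the quadratic $t\mapsto\norm{g-g^*-tw}^2-\eta_{m+1}/(1+\delta)$ must be non-positive,'' which is exactly your completed square evaluated at $t^*=\ip{w}{g-g^*}/\norm{w}^2$ (the case $w=0$ being trivial). Note that condition (\ref{eq:app_reg_item1_cond}) should be read as $\norm{g-g^*}^2\leq\eta_m$ --- the missing square is a typo in the statement, and both you and the paper use the squared version. Your Part~2 is also the paper's argument: monotone telescoping of the infima plus a pigeonhole over $[K]$.

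The one place you could not match the statement --- the constant $\frac{\delta}{1+\delta}$ versus your $\delta$ --- is not a defect of your proof but of the lemma. The paper's own derivation reaches $\frac{\delta}{1+\delta}$ by bounding $\frac{\delta}{1+\delta}\eta_{m+1}\leq\frac{\delta}{1+\delta}\norm{g}^2$, i.e.\ by using $\eta_{m+1}\leq\norm{g}^2$; but since the definition of $\eta_{m+1}$ carries the factor $(1+\delta)$, only $\eta_{m+1}\leq(1+\delta)\norm{g}^2$ is available, which yields exactly your $\delta\norm{g}^2$. Neither of your proposed refinements can recover $\frac{\delta}{1+\delta}$, because (\ref{eq:app_reg_item20}) is false at this level of generality: take $\gH=\sR^2$, $g=(1,0)$ and $\gK_j=\{(0,1)\}$ for all $j$. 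Then $\eta_k=(1+\delta)\norm{g}^2$ for every $k$, so $\eta_m-\frac{\eta_{m+1}}{1+\delta}=\delta\norm{g}^2$ for every $m$, which exceeds $\bigl(\frac{\delta}{1+\delta}+\frac{R(1+\delta)}{K}\bigr)\norm{g}^2$ once $K>R(1+\delta)^2/\delta^2$; hence no choice of $m$ satisfies (\ref{eq:app_reg_item20}). The honest conclusion of the argument (yours and the paper's) is (\ref{eq:app_reg_item20}) and (\ref{eq:app_reg_item2}) with $\delta$ in place of $\frac{\delta}{1+\delta}$; since $\delta$ is a free parameter this does not change any downstream use of the lemma, but you should state the weaker constant rather than chase the stated one.
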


The lemma is used as follows. We choose $m$ at random. We know by \Cref{item:graphon_probablistic} that in high probability the approximation is good (i.e. (\ref{eq:app_reg_item2}) is satisfied), but we are not certain. For certainty, we use the deterministic bound (\ref{eq:app_reg_item13343}), which gives a certificate for a specific $m$. Namely, given a realization of $m$, we can estimate the right-hand-side of (\ref{eq:app_reg_item13343}), which is also the left-hand-side of the probabilistic bound (\ref{eq:app_reg_item20}). For that, we find the $(m+1)$'th error $\eta_{m+1}$, solving another optimization problem, and verify that (\ref{eq:app_reg_item20}) is satisfied for $m$. If it is not, we resample $m$ and repeat. The expected number of times we need to repeat this until we get a small error is $R/(R-1)$. 

Hence, under an assumption that the we can find a close to optimum $\norm{g-g^*}$ for a given $m$ in $T_K$ operations, we can find in probability 1 a vector $g^*$ in the span of $\mathcal{K}_1, \ldots, \mathcal{K}_K$ that solves (\ref{eq:app_reg_item2}) with expected number of operations $T_K R/(R-1)$.

\begin{proof}[Proof of Lemma \ref{lem:graphon_reg}]
Let $K>0$.
Let $R\geq 1$ such that $K/R\in\mathbb{N}$. 
For every $k$, let
\[\eta_k = (1+\delta) \inf_{\boldsymbol{\kappa}, \mathbf{h}}\|g-\sum_{i=1}^k\kappa_ih_i\|^2\]
where the infimum is over $\boldsymbol{\kappa}= \{\kappa_1,\ldots,\kappa_k\} \in \sR^k$ and 
$\mathbf{h}=\{h_1,\ldots,h_k\}\in\mathcal{K}_1\times \ldots \times \mathcal{K}_k$.

Note that every
\begin{equation}
    \label{eq:Hreg_f}
    g^*=\sum_{j=1}^m \gamma_j f_j
\end{equation}
that satisfies
\[\norm{g-g^*}^2\leq\eta_m\]
also satisfies: 
for any $w\in\mathcal{K}_{m+1}$ and every $t\in\sR$,
\[\norm{g-(g^*+tw)}^2 \geq \frac{\eta_{m+1}}{1+\delta} = \frac{\eta_m+\eta_{m+1}-\eta_m}{1+\delta}\geq \frac{\norm{g-g^*}^2}{1+\delta}-\frac{\eta_m-\eta_{m+1}}{1+\delta}.\]
This can be written as
\begin{equation}
\label{eq:reg_quad}
   \forall t\in\mathbb{R}, \quad \norm{w}^2t^2 + 2\ip{w}{g-g^*}t + \frac{\eta_m-\eta_{m+1}}{1+\delta}+ (1-\frac{1}{1+\delta})\norm{g-g^*}^2\geq 0. 
\end{equation}
The discriminant of this quadratic polynomial is
\[4\ip{w}{g-g^*}^2 - 4\norm{w}^2\Big(\frac{\eta_m-\eta_{m+1}}{1+\delta} + (1-\frac{1}{1+\delta})\norm{g-g^*}^2\Big) \]
and it must be non-positive to satisfy the inequality (\ref{eq:reg_quad}), namely
\[4\ip{w}{g-g^*}^2  \leq  4\norm{w}^2\Big(\frac{\eta_m-\eta_{m+1}}{1+\delta} + (1-\frac{1}{1+\delta})\norm{g-g^*}^2\Big) \leq 4\norm{w}^2\Big(\frac{\eta_m-\eta_{m+1}}{1+\delta} + (1-\frac{1}{1+\delta})\eta_m\Big)\]
\[ = 4\norm{w}^2\Big(\eta_m-\frac{\eta_{m+1}}{1+\delta}\Big),\]
which proves 
\[\abs{\ip{w}{g-g^*}}\leq \norm{w}\sqrt{\eta_m-\frac{\eta_{m+1}}{1+\delta} },\]
which proves \cref{item:graphon_deterministic}.

For \cref{item:graphon_probablistic}, note that $\norm{g}^2\geq \frac{\eta_1}{1+\delta}\geq\frac{\eta_2}{1+\delta}\geq\ldots\geq 0$.  
Therefore, there is a subset of at least $(1-\frac{1}{R})K+1$  indices  $m$ in $[K]$ such that $\eta_m \leq \eta_{m+1} + \frac{R(1+\delta)}{K}\norm{g}^2$.  
Otherwise, there are $\frac{K}{R}$ indices $m$ in $[K]$ such that $\eta_{m+1}  < \eta_{m}- \frac{R(1+\delta)}{K}\norm{g}^2$, which means that
\[\eta_K < \eta_1 - \frac{K}{R}\frac{R(1+\delta)}{K}\norm{g}^2 \leq (1+\delta)\norm{g}^2 - (1+\delta)\norm{g}^2=0,\]
which is a contradiction to the fact that $\eta_K\geq 0$. 
Hence,  there is a set $\mathcal{M}\subseteq [K]$ of $(1-\frac{1}{R})K$  indices such that for every  $m\in \mathcal{M}$,  
\[\norm{w}\sqrt{\eta_m-\frac{\eta_{m+1}}{1+\delta} } \leq \norm{w}\sqrt{\eta_{m+1} + \frac{R(1+\delta)}{K}\norm{g}^2-\frac{\eta_{m+1}}{1+\delta} }\]
\[\leq \norm{w}\sqrt{\frac{\delta}{1+\delta}\eta_{m+1} + \frac{R(1+\delta)}{K}\norm{g}^2}
\leq \norm{w}\sqrt{\delta\norm{g}^2 + \frac{R(1+\delta)}{K}\norm{g}^2}\]
\[=\norm{w}\norm{g}\sqrt{\delta + \frac{R(1+\delta)}{K}}\]
\end{proof}

\subsection{The densifying semi-constructive graphon-signal weak regularity lemma}
Define for kernel-signal $(V,f)$ the \emph{densifying cut distance} 
\[\norm{(V,f)}_{\square; q} = \frac{\alpha}{\iint q}\sup_{U,V}\abs{\int_U \int_V V(x,y) q(x,y) dxdy } + \beta \frac{1}{D}\sum_{j=1}^D \sup_U \abs{\int_U f_j(x)dx}.\]

Below we give a version of \Cref{thm:G_regularity} for intersecting block graphons.

\begin{theorem}
\label{thm:reg_graphon}
    Let $(W,s)$ be a graphon-signal, $K\in\mathbb{N}$,  $\delta>0$, and let $\mathcal{Q}$ be a soft indicators model. Let $q$ be a weight function and $\alpha,\beta>0$.  
    Let $R\geq 1$ such that $K/R\in\mathbb{N}$. 
    Consider the graphon-signal Frobenius norm with weight $\norm{(Y,y)}_{{\mathrm F};q}=\norm{(Y,y)}_{{\mathrm F};q,\alpha,\beta}$, and cut norm with weight $\| (Y,y)) \|_{\square;q}:=\| (Y,y)\|_{\square;q,\alpha,\beta}$. 
    For every $k\in\mathbb{N}$, 
    let
\[\eta_k = (1+\delta) \inf_{(C,p)\in [\mathcal{Q}]_k} \|(W,s)-(C,p)\|_{{\mathrm F};q}^2.\]
Then,  
    \begin{enumerate}
        \item For every $m\in\mathbb{N}$, any \ibg  $(C^*,p^*)\in[\mathcal{Q}]_m$
    that gives a close-to-best weighted Frobenius approximation of $(W,s)$ in the sense that 
    \begin{equation}
        \label{eq:app_reg_graphon_item1_cond}
        \| (W,s)- (C^*,p^*)\|^2_{{\rm F};q} \leq \eta_m,
    \end{equation}
    also satisfies
    \[\| (W,s)-(C^*,p^*) \|_{\square;q} \leq (\sqrt{\alpha}+\sqrt{\beta})\sqrt{\eta_m-\frac{\eta_{m+1}}{1+\delta} }.\]
    \item 
    If $m$ is uniformly randomly sampled from $[K]$, then in probability $1-\frac{1}{R}$ (with respect to the choice of $m$), 
    \begin{equation}
    \label{eq:app_reg_item2_cond}
     \sqrt{\eta_m-\frac{\eta_{m+1}}{1+\delta} }\leq \sqrt{\alpha\|W\|_{{\mathrm F};q}^2+\beta\norm{s}^2_{\mathrm F}}\sqrt{\delta+ \frac{R(1+\delta)}{K}}.   
    \end{equation}
    Specifically,
     in probability $1-\frac{1}{R}$, any $(C^*,p^*)\in[\mathcal{Q}]_m$ which satisfy (\ref{eq:app_reg_graphon_item1_cond}), also satisfies 
     \begin{equation}
     \label{eq:app_reg_item2_main}
        \| (W,s)-(C^*,p^*) \|_{\square;q} \leq (\sqrt{\alpha}+\sqrt{\beta})\Big(\sqrt{\alpha\|W\|_{{\mathrm F},q}^2+\beta\norm{s}^2_{\mathrm F}}\sqrt{\delta+ \frac{R(1+\delta)}{K}}.
     \end{equation}
    \end{enumerate}
\end{theorem}

Theorem \ref{thm:reg_graphon} is similar to the semi-constructive weak regularity lemma of \citet[Theorem B.1]{finkelshtein2024learning}.
 However, our result extends the result of \citet{finkelshtein2024learning} by providing a deterministic certificate for the approximation quality, as we explained in the discusson after Lemma \ref{lem:graphon_reg}, extending the cut-norm to the more general weighted cut norm, and extending to general non-symmetric graphons. 

\begin{proof}[Proof of Theorem  \ref{thm:reg_graphon}]
    Let us use Lemma \ref{lem:graphon_reg}, with $\mathcal{H}= L^2([0,1]^2;q)\times (L^2[0,1])^D$ with the weighted inner product
    \[\ip{(V,y)}{(V',y')}_q = \alpha\frac{1}{\norm{1}_{1;q}}\iint_{[0,1]^2}V(x,y)V'(x,y) q(x,y)dxdy + \beta\sum_{j=1}^D\int_{[0,1]}y_j(x)y_j'(x)dx,\]
    and corresponding norm denoted by $\norm{(Y,y)}_{{\rm F};q}=\sqrt{\alpha\norm{Y}^2_{{\rm F};q}+\beta\sum_{j=1}^D\norm{y_j}^2_{\mathrm F}}$, and $\mathcal{K}_j=[\mathcal{Q}]$. Note that the Hilbert space norm is the Frobenius norm in this case. Let $m\in\mathbb{N}$. In the setting of the lemma, we take $g=(W,s)$, and $g^*\in [\mathcal{Q}]_m$. By the lemma, any approximate Frobenius minimizer $(C^*,p^*)$, namely, that satisfies $\norm{(W,s)-(C^*,p^*)}_{\mathrm F;q} \leq \eta_m$, also satisfies
    \[\langle (T,y), (W,s)-(C^*,p^*) \rangle_q \leq \|(T,y)\|_{\mathrm F;q}\sqrt{\eta_m-\frac{\eta_{m+1}}{1+\delta} }\]
    for every $(T,y)\in [\mathcal{Q}]$.

Hence, for every choice of measurable subsets $\mathcal{S},\mathcal{T}\subset[0,1]$, we have
    \begin{align*}
        &\frac{1}{\norm{1}_{1;q}}\left|\int_{\mathcal{S}}\int_{\mathcal{T}} (W(x,y)-C^*(x,y))q(x,y)dxdy\right|\\
        &=\abs{\frac{1}{\alpha}\ip{(\mathbbm{1}_{{\mathcal{S}}}\otimes \mathbbm{1}_{{\mathcal{T}}},0)} {(W,s)-(C^*,p^*) }_q}\\
        &\leq \frac{1}{\alpha}\|(\mathbbm{1}_{{\mathcal{S}}}\otimes\mathbbm{1}_{{\mathcal{T}}},0)\|_{{\mathrm F};q}\sqrt{\eta_m-\frac{\eta_{m+1}}{1+\delta} }\\
        & \leq \frac{1}{\alpha}\sqrt{\alpha}\sqrt{\eta_m-\frac{\eta_{m+1}}{1+\delta} }
    \end{align*}
     Hence, taking the supremum over $\mathcal{S},\mathcal{T}\subset[0,1]$,  we also have
    \[\alpha\norm{W-C^*}_{\square;q}\leq \sqrt{\alpha}\sqrt{\eta_m-\frac{\eta_{m+1}}{1+\delta} }.\]

Now, for $n$ randomly uniformly sampled from $[K]$, consider the event $\mathcal{M}$ (regarding the uniform choice of $n$) of probability $(1-1/R)$ in which 
\[\sqrt{\eta_n-\frac{\eta_{n+1}}{1+\delta} }
\leq 
\sqrt{\alpha\|W\|_{{\mathrm F};q}^2+\beta\norm{s}^2_{\mathrm F}}\sqrt{\delta+ \frac{R(1+\delta)}{K}}.
\]
Hence, in the event $\mathcal{M}$, we also have
 \[\alpha\norm{W-C^*}_{\square;q}\leq \sqrt{\alpha^2\|W\|_{{\mathrm F};q}^2+\alpha\beta\norm{s}_{\mathrm F}}\sqrt{\delta+ \frac{R(1+\delta)}{K}}.\]

    Similarly, for every measurable ${\mathcal{T}}\subset[0,1]$ and every standard basis element $\vb=(\delta_{j,i})_{i=1}^D$ for any $j\in[D]$,
    \begin{align*}
        &\left|\int_{\mathcal{T}} (s_j(x)-p_j^*(x))dx\right|\\
        &= \abs{\frac{1}{\beta}\ip{(0,\vb\mathbbm{1}_{{\mathcal{T}}})} {(W,s)-(C^*,p^*) }_q} \\
        &\leq \frac{1}{\beta}\|(0,\vb\mathbbm{1}_{{\mathcal{T}}})\|_{{\mathrm F};q} \sqrt{\eta_m-\frac{\eta_{m+1}}{1+\delta} }\\
        & \leq \frac{\sqrt{\beta}}{\beta}\sqrt{\eta_m-\frac{\eta_{m+1}}{1+\delta} },
    \end{align*}
    so, taking the supremum over $\mathcal{T}\subset[0,1]$ independently for every $j\in[D]$, and averaging over $j\in[D]$, we get
 \[\beta\norm{s-p^*}_{\square}\leq \sqrt{\beta}\sqrt{\eta_m-\frac{\eta_{m+1}}{1+\delta} }.\]
Now, for the same event $\mathcal{M}$ as above regarding the choice of $n\in [K]$,
    \[\beta\norm{s-p^*}_{\square}\leq\sqrt{\alpha\beta\|W\|_{{\mathrm F};q}^2+\beta^2\norm{s}^2_{\mathrm F}}\sqrt{\delta+ \frac{R(1+\delta)}{K}}.\]
    
    Overall, we get for every $m$ and corresponding approximately optimum $(C^*,p^*)$, 
    \[\| (W,s)-(C^*,p^*) \|_{\square;q} \leq (\sqrt{\alpha}+\sqrt{\beta})\sqrt{\eta_m-\frac{\eta_{m+1}}{1+\delta} }.\]
Moreover, for uniformly sampled $n\in[K]$, in probability more than $1-1/R$,
\[\| (W,s)-(C^*,p^*) \|_{\square;q} \leq (\sqrt{\alpha}+\sqrt{\beta})\Big(\sqrt{\alpha\|W\|_{{\mathrm F},q}^2+\beta\norm{s}^2_{\mathrm F}}\sqrt{\delta+ \frac{R(1+\delta)}{K}}.\]

\end{proof}

\subsection{Proof of the semi-constructive densifying weak regularity lemma}

Next, we show that Theorem  \ref{thm:reg_graphon} reduces to Theorem \ref{thm:G_regularity} in the case of graphon-signals induced by graph-signals.

\begin{theoremcopy}{\ref{thm:G_regularity}}
    Let $(\mA,\mX)$ be a graph-signal,  $K\in\mathbb{N}$,  $\delta>0$, and let $\mathcal{Q}$ be a soft indicators model. Let  $\alpha,\beta>0$ such that $\alpha+\beta=1$.  Let $\Gamma>0$ and let $\mQ_{\mA}$ be the weight matrix defined in \cref{def:densifying}.
    Let $R\geq 1$ such that $K/R\in\mathbb{N}$. 
    For every $k\in\mathbb{N}$, 
    let
    \[\eta_k= (1+\delta)\min_{(\mC,\mP)\in[\mathcal{Q}]_k}\norm{(\mA,\mX)- (\mC,\mP)}^2_{{\rm F} ; \mQ_{\mA},\alpha(1+\Gamma),\beta}\]
Then,  
    \begin{enumerate}
        \item \label{item:graph_deterministic} For every $m\in\mathbb{N}$, any \ibg $(\mC^*,\mP^*)\in[\mathcal{Q}]_m$
    that gives a close-to-best weighted Frobenius approximation of $(\mA,\mX)$ in the sense that 
    \begin{equation}
        \label{eq:app_reg_mat_item1_cond}
         \norm{(\mA,\mX)- (\mC^*,\mP^*)}^2_{{\rm F} ; \mQ_{\mA},\alpha(1+\Gamma),\beta}\leq \eta_m,
    \end{equation}
    also satisfies
    \begin{equation}
    \label{eq:app_reg_mat_item1_then}     \sigma_{\square;\alpha,\beta,\Gamma}\big((\mA,\mX)||(\mC^*,\mP^*) \big) \leq (\sqrt{\alpha(1+\Gamma)}+\sqrt{\beta})\sqrt{\eta_m-\frac{\eta_{m+1}}{1+\delta} }.
    \end{equation}
    \item \label{item:graph_probablistic}
    If $m$ is uniformly randomly sampled from $[K]$, then in probability $1-\frac{1}{R}$ (with respect to the choice of $m$), 
    \begin{equation}
    \label{eq:app_reg_mat_item2_cond}
      \sqrt{\eta_m-\frac{\eta_{m+1}}{1+\delta} }\leq \sqrt{\delta+ \frac{R(1+\delta)}{K}}
    \end{equation}
    Specifically,
     in probability $1-\frac{1}{R}$, any $(\mC^*,\mP^*)\in[\mathcal{Q}]_m$ which satisfy (\ref{eq:app_reg_mat_item1_cond}), also satisfies 
     \begin{equation}
        \label{eq:app_reg_mat_item2_then} 
        \sigma_{\square;\alpha,\beta,\Gamma}\big( (\mA,\mX)-(\mC^*,\mP^*) \big)\leq \big(\sqrt{2+\Gamma}\big)\sqrt{\delta+ \frac{R(1+\delta)}{K} }.
     \end{equation}
    \end{enumerate}
\end{theoremcopy}

In practice, Theorem \ref{thm:G_regularity} is used to motivate the following computational approach for approximating graph-signals by \ibgs.
We suppose that there is an oracle optimization method that can solve (\ref{eq:app_reg_mat_item1_cond}) in $T_K$ operations whenever $m\leq K$. In practice, we use gradient descent on the left-hand side of (\ref{eq:app_reg_mat_item1_cond}), which takes $O(E)$ operations as shown in Proposition \ref{prop:efficient_sparse_loss}.
The oracle is used as follows. 
We choose $m\in[K]$ at random. We know by \cref{item:graph_probablistic} of \cref{thm:G_regularity} that in high probability the good approximation bound (\ref{eq:app_reg_mat_item2_then}) is satisfied, but we are not certain. For certainty, we use \cref{item:graph_deterministic} of \cref{thm:G_regularity}. Given our specific realization of $m$, we can estimate the right-hand-side of (\ref{eq:app_reg_mat_item1_then}) by our oracle optimization method in $2T_K$ operations, and verify that the right-hand-side of (\ref{eq:app_reg_mat_item1_then}) is less than the right-hand-side of (\ref{eq:app_reg_mat_item2_then}). If it is not (in probability $1/R$), we resample $m$ and repeat. The expected number of times we need to repeat this process until we get a small error is $R/(R-1)$, so the expected time it takes the algorithm to find an \ibg with error bound (\ref{eq:app_reg_mat_item2_then}) is $2T_KR/(R-1)$.

\begin{proof}[Proof of Theorem \ref{thm:G_regularity}.]
Let $\mQ_{\mA}$ be the weight matrix defined in (\ref{eq:app_dense_Q_def}). 
Consider the following identities between the weighted Frobenius norm of induced graphon-signals and the Frobenius norm of the graph-signal, and, a similar identity for the densifying cut similarity.

\begin{equation}
\label{eq:grpahonISgraph_norm}
\begin{split}
   & \norm{(W_{\mA},s_{\mX})- (W_{\mC},W_{\mP})}^2_{{\rm F} ; W_{\mQ_{\mA}},\alpha(1+\Gamma),\beta}  =\norm{(\mA,\mX)- (\mC,\mP)}^2_{{\rm F} ; \mQ_{\mA},\alpha(1+\Gamma),\beta}, \\  &\sigma_{\square,\alpha,\beta}\big((W_{\mA},s_{\mX})||(W_{\mC},s_{\mP})\big)  = \sigma_{\square,\alpha,\beta}\big((\mA,\mX)||(\mC,\mP)\big).
\end{split}    
\end{equation}
We apply Theorem \ref{thm:reg_graphon} on the weighted Frobenius and cut norms with weight $Q=W_{\mQ_{\mA}}$. We immediately obtain (\ref{eq:app_reg_mat_item1_then}) from (\ref{eq:grpahonISgraph_norm}). For (\ref{eq:app_reg_mat_item2_cond}), by (\ref{eq:app_reg_item2_cond}) of Theorem \ref{thm:reg_graphon}, and by (\ref{eq:A_FNormIs1}) and by the fact that signals have values in $[-1,1]$,
\[ 
\begin{split}
\sqrt{\eta_m-\frac{\eta_{m+1}}{1+\delta} } & \leq \sqrt{\alpha(1+\Gamma)\|W_{\mA}\|_{{\mathrm F};W_{\mQ_{\mA}}}^2+\beta\norm{s_{\mA}}^2_{\mathrm F}}\sqrt{\delta+ \frac{R(1+\delta)}{K}} \\
& \leq \sqrt{\delta+ \frac{R(1+\delta)}{K}}.
\end{split}
\]
Lastly, (\ref{eq:app_reg_mat_item2_then}) follows the fact that for $\alpha,\beta>0$ such that $\alpha+\beta=1$, we must have $(\sqrt{\alpha(1+\Gamma)}+\sqrt{\beta})\leq \sqrt{2+\Gamma}$.
\end{proof}

\section{Fitting \ibgs to graphs efficiently}
\label{app:efficient_sparse_loss}
Below we present the proof of \cref{prop:efficient_sparse_loss}. The proof follows the lines of the proof of Proposition 4.1 in \citep{finkelshtein2024learning}. We restate the proposition below for the benefit of the reader.

\begin{propositioncopy}{\ref{prop:efficient_sparse_loss}}
   Let $\mA=(a_{i,j})_{i,j=1}^N$ be an adjacency matrix of an unweighted graph with $E$ edges. The graph part of the sparse Frobenius loss can be written as 
   \begin{align*}
       \QAFnorm{\mA -  \mU\diag(\vr)\mV^\top}^2 &= \QAFnorm{A}^2 +\frac{e}{(1+\Gamma) E}\Tr\left((\mV^\top\mV)\diag(\vr)(\mU^\top\mU)\diag(\vr)\right)\\
       & -\frac{2}{(1+\Gamma) E}\sum_{i=1}^N\sum_{j\in \gN(i)}\mU_{i,:}\diag(\vr)\left(\mV^\top\right)_{:,j}a_{i,j}\\
       &+ \frac{1-e}{(1+\Gamma) E}\sum_{i=1}^N\sum_{j\in \gN(i)}(\mU_{i,:}\diag(\vr)\left(\mV^\top\right)_{:,j})^2
   \end{align*}
    where $\mQ_\mA$ is defined in \cref{eq:def_Q_A}. Computing the right-hand-side and its gradients with respect to $\mU$, $\mV$ and $\vr$ has a time complexity of $\gO(K^2N + KE)$, 
    and a space complexity of $\gO(KN+E)$.
\end{propositioncopy}

The proof is similar to that of Proposition 4.1 in \citep{finkelshtein2024learning}, while applying the necessary changes under the new weighted Frobenius norm and the structure of \ibgs.

\begin{proof}
    The loss can be expressed as
    \begin{align*}
        \QAFnorm{\mA -  \mU\diag(\vr)\mV^\top}^2 = \frac{1}{\left(1+\Gamma\right) E}\sum_{i=1}^N\left(\sum_{j\in \gN(i)}\left(a_{i,j} - \mU_{i,:}\diag(\vr)\left(\mV^\top\right)_{:,j}\right)^2\right. +
        \\\left. + \sum_{j\notin \gN(i)}e\left(- \mU_{i,:}\diag(\vr)\left(\mV^\top\right)_{:,j}\right)^2\right)
    \end{align*}
    \begin{align*}
        \quad\quad\quad\quad\quad &= \frac{1}{\left(1+\Gamma\right) E}\sum_{i=1}^N\left(\sum_{j\in \gN(i)}\left(a_{i,j} - \mU_{i,:}\diag(\vr)\left(\mV^\top\right)_{:,j}\right)^2 +\sum_{j=1}^Ne\left( \mU_{i,:}\diag(\vr)\left(\mV^\top\right)_{:,j}\right)^2\right.\\
        & - \left.\sum_{j\in \gN(i)}e\left( \mU_{i,:}\diag(\vr)\left(\mV^\top\right)_{:,j}\right)^2\right)
    \end{align*}
    We expand the quadratic term $\left(a_{i,j} - \mU_{i,:}\diag(\vr)\left(\mV^\top\right)_{:,j}\right)^2$, and get
    \begin{align*}
         \QAFnorm{\mA -  \mU \diag(\vr)\mV^\top}^2 &=\frac{e}{\left(1+\Gamma\right) E}\sum_{i,j=1}^N\left(\mU_{i,:}\diag(\vr)\left(\mV^\top\right)_{:,j}\right)^2 + \\
         &+ \frac{1}{\left(1+\Gamma\right) E}\sum_{i=1}^N\sum_{j\in \gN(i)}\left(a_{i,j}^2 - 2\mU_{i,:}\diag(\vr)\left(\mV^\top\right)_{:,j}a_{i,j}\right)\\
         & + \frac{1-e}{\left(1+\Gamma\right)E}\sum_{i=1}^N\sum_{j\in \gN(i)}\left(\mU_{i,:}\diag(\vr)\left(\mV^\top\right)_{:,j}\right)^2
    \end{align*}
    \begin{align*}
        \quad\quad\quad\quad\quad\quad\quad\quad\quad &=\frac{eN^2}{\left(1+\Gamma\right) E}\Fnorm{\mU\diag(\vr)\mV^\top}^2 + \\
        &+ \frac{1}{\left(1+\Gamma\right) E}\sum_{i,j=1}^Na_{i,j}^2 - \frac{2}{\left(1+\Gamma\right) E}\sum_{i=1}^N\sum_{j\in \gN(i)}\mU_{i,:}\diag(\vr)\left(\mV^\top\right)_{:,j}a_{i,j}\\
        &+ \frac{1-e}{\left(1+\Gamma\right) E}\sum_{i=1}^N\sum_{j\in \gN(i)}\left(\mU_{i,:}\diag(\vr)\left(\mV^\top\right)_{:,j}\right)^2 
    \end{align*}
    \begin{align*}
        &=\frac{e}{\left(1+\Gamma\right) E}\Tr\left(\mV^\top\mV\diag(\vr)\mU^\top\mU\diag(\vr)\right) + \\
        &+ \frac{N^2}{\left(1+\Gamma\right) E}\Fnorm{\mA}^2 \\
        & - \frac{2}{\left(1+\Gamma\right) E}\sum_{i=1}^N\sum_{j\in \gN(i)}\mU_{i,:}\diag(\vr)\left(\mV^\top\right)_{:,j}a_{i,j}+ \\ &\frac{1-e}{\left(1+\Gamma\right) E}\sum_{i=1}^N\sum_{j\in \gN(i)}\left(\mU_{i,:}\diag(\vr)\left(\mV^\top\right)_{:,j}\right)^2
    \end{align*}
    Here, the last equality uses the trace cyclicity, i.e., 
    $\forall \mI,\mJ\in\sR^{N\times K}: \ \Tr(\mI\mJ^\top) = \Tr(\mJ^\top\mI)$, with $\mI=\mV\diag(\vr)\mU^\top\mU\diag(\vr)$ and $\mJ^\top=\mV^\top$.

    To calculate the first term efficiently, we can either perform matrix multiplication from right to left or compute $\mU^\top\mU$ and $\mV^\top\mV$, followed by the rest of the product.
    This calculation has a time complexity of $\gO(K^2N)$ and a memory complexity $\gO(KN)$. 
    The second term in the equality is constant and, therefore, can be left out during optimization.
    The third and fourth terms in the expression are calculated using message-passing, and thus have a time complexity of $\gO(KE)$.
    Overall, we end up with a complexity of $\gO(K^2N + KE)$ and a space complexity of $\gO(KN + E)$ for the full computation of the loss and its gradients with respect to $\mU$, $\mV$ and $\vr$.
\end{proof}

\section{Extending the densifying weak regularity lemma for graphon-edge-signals}
\label{app:edge_signal_wrl}
In this section we prove a version of \Cref{thm:reg_graphon} for the case where the graph has an edge signal. The proof is very similar to the previous case, and the new theorem can be used for the analysis of \ibgnn when used for knowledge graphs (see \Cref{app:kg}).

\subsection{Weighted Frobenius and cut norms for graphon-edge-signals}

\paragraph{Graphon-edge-signal} A graphon-edge-signal is a pair $(V, Y)$ where V is a graphon and $Y: [0,1]^2\rightarrow \mathbb{R}^D$ is a measurable function.\\

\paragraph{Weighted edge-signal Frobenius norm}
Consider the real Hilbert space $L^2([0,1]^2;q)\times (L^2[0,1]^2)^D$ defined with the weighted inner product 
\begin{align*}
    &\ip{(V,Y)}{(V',Y')}_{q} = \ip{(V,Y)}{(V',Y')}_{q,\alpha,\beta} =\\
    &= \alpha\frac{1}{\norm{1}_{1;q}}\iint_{[0,1]^2}V(x,y)V'(x,y) q(x,y)dxdy+ \frac{\beta}{D}\sum_{j=1}^D\iint_{[0,1]^2}Y_j(x,y)Y_j'(x,y)dxdy.
\end{align*}
We call the corresponding weighted norm the \emph{weighted edge-signal Frobenius norm}, denoted by 
\[\norm{(V,Y)}_{{\rm F};q}=\norm{(V,Y)}_{{\rm F};q,\alpha,\beta}=\sqrt{\alpha\norm{V}^2_{{\rm F};q}+\frac{\beta}{D}\sum_{j=1}^D\norm{Y_j}^2_{\mathrm F}},\]

We similarly extend the definition of a graphon weighted cut norm and cut metric.

\paragraph{Graphon weighted cut norm and cut metric.} A kernel-edge-signal $(V,Y)$ is a pair where $V:[0,1]^2\rightarrow[-1,1]$ and $Y:[0,1]^2\rightarrow\mathbb{R}^D$ are measurable. Define for a kernel-edge-signal $(V,Y)$ the \emph{weighted edge signal cut norm} 
\begin{align*}
\norm{(V,Y)}_{\square; q,\alpha,\beta} = \norm{(V,Y)}_{\square; q} = \frac{\alpha}{\norm{1}_{1;q}}\sup_{\mathcal{U},\mathcal{V}}\abs{\int_{\mathcal{U}} \int_{\mathcal{V}} V(x,y) q(x,y) dxdy } + \\
+\beta \frac{1}{D}\sum_{j=1}^D \sup_{\mathcal{U},\mathcal{V}} \abs{\int_{\mathcal{U}}\int_{\mathcal{V}} Y_j(x,y)dxdy},
\end{align*}
where the supremum is over the set of measurable subsets $\mathcal{U},\mathcal{V}\subset [0,1]$.

The weighted edge signal cut metric between two graphon-edge-signals $(W,f)$ and $(W',f')$ is defined to be $\norm{(W,f) - (W',f')}_{\square; q}$.

For simplicity's sake, and for this section only, we refer to the weighted edge-signal Frobenius and cut norms simply as the weighted Frobenius and cut norms.

\subsection{\ibgs with edge signals}
Here, we define \ibgs for graphon-edge-signals. We use the same terminology of \emph{soft rank-$K$ \ibg model} introduced in \Cref{def:soft_model}, slightly changing the signal part of the graphon.

\begin{definition}
    Let $D\in\sN$. Given a soft affiliation model $\sQ$, the subset $[\sQ]$ of $L^2[0,1]^2\times (L^2[0,1]^2)^D$ of all elements of the form $(au(x)v(y),bu(z)v(w))$, with $u,v\in\sQ$, $a\in\sR$ and $b\in\sR^D$, is called the \emph{soft rank-1 intersecting block graphon (\ibg) model}  corresponding to $\sQ$. Given $K\in\sN$, the subset $[\sQ]_K$ of $L^2[0,1]^2\times (L^2[0,1]^2)^D$ of all linear combinations of $K$ elements of $[\sQ]$ is called the \emph{soft rank-$K$ \ibg model}  corresponding to $\sQ$. Namely, $(C,p)\in[\sQ]_K$ if and only if it has the form
        \[C(x,y)= \sum_{k=1}^K a_k u_k(x)v_k(y) \quad \text{and} \quad p(x,y)=\sum_{k=1}^K b_k u_k(x)v_k(y)\]
    where $(u_k)_{k=1}^K\in\sQ^K$ are called  the \emph{target community affiliation functions}, $(v_k)_{k=1}^K\in\sQ^K$ are called the \emph{source community affiliation functions}, $(a_k)_{k=1}^K\in\sR^K$ are called the \emph{community affiliation magnitudes}, $(b_k)_{k=1}^K\in\sR^{K\times D}$ are called the \emph{edge features}. Any element of $[\mathcal{Q}]_K$ is called an intersecting block graphon-signal (\ibg).
\end{definition}

We emphasize that for the rest of this section, when referencing weighted Frobenius and cut norms, as well as the \emph{soft rank-$K$ \ibg model}, we refer to the new definitions as formulated in \Cref{app:edge_signal_wrl}.

\begin{corollary}
\label{cor:reg_edge_graphon}
Let $(W,s)$ be a graphon-edge-signal, $K\in\mathbb{N}$,  $\delta>0$, and let $\mathcal{Q}$ be a soft indicators model. Let $q$ be a weight function and $\alpha,\beta>0$.  
    Let $R\geq 1$ such that $K/R\in\mathbb{N}$. 
    Consider the graphon-signal Frobenius norm with weight $\norm{(Y,y)}_{{\mathrm F};q}=\norm{(Y,y)}_{{\mathrm F};q,\alpha,\beta}$, and cut norm with weight $\| (Y,y) \|_{\square;q}:=\| (Y,y)\|_{\square;q,\alpha,\beta}$. 
    For every $k\in\mathbb{N}$, 
    let
\[\eta_k = (1+\delta) \inf_{(C,p)\in [\mathcal{Q}]_k} \|(W,s)-(C,p)\|_{{\mathrm F};q}^2.\]
Then,  
    \begin{enumerate}
        \item For every $m\in\mathbb{N}$, any \ibg  $(C^*,p^*)\in[\mathcal{Q}]_m$
    that gives a close-to-best weighted Frobenius approximation of $(W,s)$ in the sense that 
    \begin{equation}
        \label{eq:app_reg_edge_graphon_item1_cond}
        \| (W,s)- (C^*,p^*)\|^2_{{\rm F};q} \leq \eta_m,
    \end{equation}
    also satisfies
    \[\| (W,s)-(C^*,p^*) \|_{\square;q} \leq (\sqrt{\alpha}+\sqrt{\beta})\sqrt{\eta_m-\frac{\eta_{m+1}}{1+\delta} }.\]
    \item 

    If $m$ is uniformly randomly sampled from $[K]$, then in probability $1-\frac{1}{R}$ (with respect to the choice of $m$), 
    \begin{equation}
    \label{eq:app_reg_edge_item2_cond}
     \sqrt{\eta_m-\frac{\eta_{m+1}}{1+\delta} }\leq \sqrt{\alpha\|W\|_{{\mathrm F};q}^2+\beta\norm{s}^2_{\mathrm F}}\sqrt{\delta+ \frac{R(1+\delta)}{K}}.   
    \end{equation}
    Specifically,
     in probability $1-\frac{1}{R}$, any $(C^*,p^*)\in[\mathcal{Q}]_m$ which satisfy (\ref{eq:app_reg_edge_graphon_item1_cond}), also satisfies 
     \begin{equation*}
        \| (W,s)-(C^*,p^*) \|_{\square;q} \leq (\sqrt{\alpha}+\sqrt{\beta})\Big(\sqrt{\alpha\|W\|_{{\mathrm F},q}^2+\beta\norm{s}^2_{\mathrm F}}\sqrt{\delta+ \frac{R(1+\delta)}{K}}.
     \end{equation*}
    \end{enumerate}
\end{corollary}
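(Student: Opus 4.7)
The plan is to mimic the proof of \Cref{thm:reg_graphon} essentially verbatim, with the sole structural change being that the signal component now lives in $(L^2[0,1]^2)^D$ rather than $(L^2[0,1])^D$. Concretely, I apply \Cref{lem:graphon_reg} in the real Hilbert space $\mathcal{H} = L^2([0,1]^2;q) \times (L^2[0,1]^2)^D$ equipped with the weighted inner product $\ip{\cdot}{\cdot}_{q,\alpha,\beta}$ defined in this section (so that the Hilbert norm coincides with the weighted edge-signal Frobenius norm), and take $\mathcal{K}_j = [\mathcal{Q}]$ for every $j$. With $g = (W,s)$, the infimum $\eta_k$ in \Cref{lem:graphon_reg} matches the one in the statement, so part~1 of the lemma immediately gives: for any approximate Frobenius minimizer $(C^*,p^*)$ satisfying \eqref{eq:app_reg_edge_graphon_item1_cond} and any $(T,Y) \in [\mathcal{Q}]$,
\[
\bigl|\ip{(T,Y)}{(W,s)-(C^*,p^*)}_q\bigr| \leq \norm{(T,Y)}_{\mathrm{F};q}\sqrt{\eta_m - \tfrac{\eta_{m+1}}{1+\delta}}.
\]

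Next I would extract the two pieces of the cut norm by testing against two families of rank-1 elements of $[\mathcal{Q}]$. For the graphon part, I plug in $(T,Y) = (\mathbbm{1}_{\mathcal{U}} \otimes \mathbbm{1}_{\mathcal{V}}, 0)$ for arbitrary measurable $\mathcal{U},\mathcal{V} \subset [0,1]$; since $\mathbbm{1}_{\mathcal{U}}, \mathbbm{1}_{\mathcal{V}} \in \chi \subset \mathcal{Q}$, this lies in $[\mathcal{Q}]$, and a routine computation gives $\norm{(\mathbbm{1}_{\mathcal{U}} \otimes \mathbbm{1}_{\mathcal{V}}, 0)}_{\mathrm{F};q}^2 = \frac{\alpha}{\|1\|_{1;q}}\iint_{\mathcal{U}\times\mathcal{V}} q \leq \alpha$. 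Taking the supremum over $\mathcal{U},\mathcal{V}$ yields $\alpha\norm{W-C^*}_{\square;q} \leq \sqrt{\alpha}\sqrt{\eta_m - \eta_{m+1}/(1+\delta)}$. For the edge-signal part, the only departure from the node-signal proof of \Cref{thm:reg_graphon} appears: I test against $(0, \mathbf{b}\,\mathbbm{1}_{\mathcal{U}} \otimes \mathbbm{1}_{\mathcal{V}})$ where $\mathbf{b} = (\delta_{j,i})_{i=1}^D$ is the $j$-th standard basis vector. This element still belongs to $[\mathcal{Q}]$, its norm is bounded by $\sqrt{\beta}$, and pairing against it isolates $\int_{\mathcal{U}}\int_{\mathcal{V}} (s_j(x,y)-p_j^*(x,y))\,dx\,dy$. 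Taking the supremum over $\mathcal{U},\mathcal{V}$ separately for each $j$ and averaging over $j \in [D]$ yields $\beta\norm{s-p^*}_{\square} \leq \sqrt{\beta}\sqrt{\eta_m - \eta_{m+1}/(1+\delta)}$. Summing the two bounds proves the deterministic claim.

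For part~2, I invoke part~2 of \Cref{lem:graphon_reg} with $\norm{g}^2 = \alpha\norm{W}_{\mathrm{F};q}^2 + \beta\norm{s}_{\mathrm{F}}^2$, which delivers \eqref{eq:app_reg_edge_item2_cond} with probability at least $1 - 1/R$ over the uniform choice of $m \in [K]$. Combining with the deterministic bound already established finishes the proof. I do not expect a genuine obstacle: the proof is essentially a transcription of the one for \Cref{thm:reg_graphon} with node-indicator test functions $\mathbbm{1}_{\mathcal{T}}$ replaced by edge-indicator test functions $\mathbbm{1}_{\mathcal{U}}\otimes\mathbbm{1}_{\mathcal{V}}$. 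The only point that deserves a moment's care is verifying that these product indicators indeed lie in $[\mathcal{Q}]$ (rather than only in $[\mathcal{Q}]_1$ in a degenerate sense) and that their Hilbert norms are $\leq \sqrt{\alpha}$ and $\leq \sqrt{\beta}$ respectively, which is immediate from $\mathbbm{1}_{\mathcal{U}},\mathbbm{1}_{\mathcal{V}} \in \chi \subset \mathcal{Q}$ and the definitions of the two inner products.
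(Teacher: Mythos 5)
Your proposal is correct and follows essentially the same route as the paper's own proof: both apply \Cref{lem:graphon_reg} in the Hilbert space $L^2([0,1]^2;q)\times (L^2[0,1]^2)^D$ with $\mathcal{K}_j=[\mathcal{Q}]$, test the resulting inner-product bound against $(\mathbbm{1}_{\mathcal{U}}\otimes\mathbbm{1}_{\mathcal{V}},0)$ and $(0,\vb\,\mathbbm{1}_{\mathcal{U}}\otimes\mathbbm{1}_{\mathcal{V}})$, and conclude via part 2 of the lemma. No gaps.
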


The proof is very similar to the original proof, with a slight adjustment for the analysis of the signal part of the graphon-signal. For completeness of the analysis, we provide the full proof.

\begin{proof}
    Let us use Lemma \ref{lem:graphon_reg}, with $\mathcal{H}= L^2([0,1]^2;q)\times (L^2[0,1]^2)^D$ with the weighted inner product
    \begin{align*}
    \ip{(V,Y)}{(V',Y')}_q = \alpha\frac{1}{\norm{1}_{1;q}}\iint_{[0,1]^2}V(x,y)V'(x,y) q(x,y)dxdy + \\ +\beta\sum_{j=1}^D\frac{1}{\norm{1}_{1;q}}\iint_{[0,1]^2}Y(x,y)Y'(x,y) q(x,y)dxdy,
    \end{align*}
    and corresponding norm denoted by $\norm{(V,Y)}_{{\rm F};q}\sqrt{\alpha\norm{V}^2_{{\rm F};q}+\beta\sum_{j=1}^D\norm{Y_j}^2_{\mathrm F;q}}$, and $\mathcal{K}_j=[\mathcal{Q}]$. Note that the Hilbert space norm is a weighted Frobenius norm. Let $m\in\mathbb{N}$. In the setting of the lemma, we take $g=(W,s)$, and $g^*\in [\mathcal{Q}]_m$. By the lemma, any approximate Frobenius minimizer $(C^*,p^*)$, namely, that satisfies $\norm{(W,s)-(C^*,p^*)}_{\mathrm F;q} \leq \eta_m$, also satisfies
    \[\langle (T,y), (W,s)-(C^*,p^*) \rangle_q \leq \|(T,y)\|_{\mathrm F;q}\sqrt{\eta_m-\frac{\eta_{m+1}}{1+\delta} }\]
    for every $(T,y)\in [\mathcal{Q}]$.
Hence, for every choice of measurable subsets $\mathcal{S},\mathcal{T}\subset[0,1]$, we have
    \begin{align*}
        &\frac{1}{\norm{1}_{1;q}}\left|\int_{\mathcal{S}}\int_{\mathcal{T}} (W(x,y)-C^*(x,y))q(x,y)dxdy\right|\\
        &=\abs{\frac{1}{\alpha}\ip{(\mathbbm{1}_{{\mathcal{S}}}\otimes \mathbbm{1}_{{\mathcal{T}}},0)} {(W,s)-(C^*,p^*) }_q}\\
        &\leq \frac{1}{\alpha}\|(\mathbbm{1}_{{\mathcal{S}}}\otimes\mathbbm{1}_{{\mathcal{T}}},0)\|_{{\mathrm F};q}\sqrt{\eta_m-\frac{\eta_{m+1}}{1+\delta} }\\
        & \leq \frac{1}{\alpha}\sqrt{\alpha}\sqrt{\eta_m-\frac{\eta_{m+1}}{1+\delta} }
    \end{align*}
     Hence, taking the supremum over $\mathcal{S},\mathcal{T}\subset[0,1]$,  we also have
    \[\alpha\norm{W-C^*}_{\square;q}\leq \sqrt{\alpha}\sqrt{\eta_m-\frac{\eta_{m+1}}{1+\delta} }.\]
Now, for $n$ randomly uniformly from $[K]$, consider the event $\mathcal{M}$ (regarding the uniform choice of $n$) of probability $(1-1/R)$ in which 
\[\sqrt{\eta_n-\frac{\eta_{n+1}}{1+\delta} }
\leq 
\sqrt{\alpha\|W\|_{{\mathrm F};q}^2+\beta\norm{s}^2_{\mathrm F}}\sqrt{\delta+ \frac{R(1+\delta)}{K}}.
\]
Hence, in the event $\mathcal{M}$, we also have
 \[\alpha\norm{W-C^*}_{\square;q}\leq \sqrt{\alpha^2\|W\|_{{\mathrm F};q}^2+\alpha\beta\norm{s}_{\mathrm F}}\sqrt{\delta+ \frac{R(1+\delta)}{K}}.\]
    Similarly, for every measurable ${\mathcal{S},\mathcal{T}}\subset[0,1]$ and every standard basis element $\vb=(\delta_{j,i})_{i=1}^D$ for any $j\in[D]$,
    \begin{align*}
        &\frac{1}{\norm{1}_{1;q}}\left|\int_{\mathcal{S}}\int_{\mathcal{T}} (s(x,y)-p^*(x,y))q(x,y)dxdy\right|\\
        &=\abs{\frac{1}{\beta}\ip{0,\vb(\mathbbm{1}_{{\mathcal{S}}}\otimes \mathbbm{1}_{{\mathcal{T}}})} {(W,s)-(C^*,p^*) }_q}\\
        &\leq \frac{1}{\beta}\|(0,\vb (\mathbbm{1}_{{\mathcal{S}}}\otimes\mathbbm{1}_{{\mathcal{T}}})\|_{{\mathrm F};q}\sqrt{\eta_m-\frac{\eta_{m+1}}{1+\delta} }\\
        & \leq \frac{1}{\beta}\sqrt{\beta}\sqrt{\eta_m-\frac{\eta_{m+1}}{1+\delta} }
    \end{align*}
     so, taking the supremum over $\mathcal{S},\mathcal{T}\subset[0,1]$,  independently for every $j\in[D]$, and averaging over $j\in[D]$, we get
    \[\beta\norm{s-p^*}_{\square;q}\leq \sqrt{\beta}\sqrt{\eta_m-\frac{\eta_{m+1}}{1+\delta} }.\]
Now, for the same event $\mathcal{M}$ as above regarding the choice of $n\in [K]$,
    \[\beta\norm{s-p^*}_{\square}\leq\sqrt{\alpha\beta\|W\|_{{\mathrm F};q}^2+\beta^2\norm{s}^2_{\mathrm F}}\sqrt{\delta+ \frac{R(1+\delta)}{K}}.\]
    Overall, we get for every $m$ and corresponding approximately optimum $(C^*,p^*)$, 
    \[\| (W,s)-(C^*,p^*) \|_{\square;q} \leq (\sqrt{\alpha}+\sqrt{\beta})\sqrt{\eta_m-\frac{\eta_{m+1}}{1+\delta} }.\]
Moreover, for uniformly sampled $n\in[K]$, in probability more than $1-1/R$,
\[\| (W,s)-(C^*,p^*) \|_{\square;q} \leq (\sqrt{\alpha}+\sqrt{\beta})\Big(\sqrt{\alpha\|W\|_{{\mathrm F},q}^2+\beta\norm{s}^2_{\mathrm F}}\sqrt{\delta+ \frac{R(1+\delta)}{K}}.\]
\end{proof}

\section{Initializing the optimization with singular vectors}
\label{app:init}

Here, we propose a good initialization for the GD minimization of \ref{eq:dir_inefficient_loss}. We explain how to use the SVD of the graph to initialize the parameters of a rank $K$-\ibg, before the gradient descent minimization of \cref{eq:dir_inefficient_loss}. This is inspired by the eigendecomposition initialization described of ICGs.
The full method is presented in \Cref{app:init}, and summarized here.

We begin by calculating the $K/4$ SVD decomposition of the graph adjacency matrix. Denote by $\boldsymbol{\sigma}_{K/4}=(\sigma_k)_{k=1}^{K/4}$ the sequence of the $K/4$ largest singular values of $\mA$, and by $\boldsymbol{\Phi}_{K/4}=(\boldsymbol{\phi}_k)_{k=1}^{K/4}$, 
$\boldsymbol{\Psi}_{K/4}=(\boldsymbol{\psi}_k)_{k=1}^{K/4}$ their corresponding left and right singular vectors.

For each singular value $\sigma$ and corresponding singular vectors $\vphi, \vpsi$, we designate 
$\{\frac{\vphi_+}{\norm{\vphi_+}}, \frac{\vphi_+}{\norm{\vphi_+}}, \frac{\vphi_-}{\norm{\vphi_-}}, \frac{\vphi_-}{\norm{\vphi_-}}\}$ as target communities, and $\{\frac{\vpsi_+}{\norm{\vpsi_+}}, \frac{\vpsi_-}{\norm{\vpsi_-}}, \frac{\vpsi_+}{\norm{\vpsi_+}}, \frac{\vpsi_-}{\norm{\vpsi_-}}\}$ as the corresponding source communities, where $\boldsymbol{\xi}_\pm\in [0,\infty)^{N}$ denotes the positive or negative parts of the vector $\boldsymbol{\xi}$, i.e., $\boldsymbol{\xi} = \boldsymbol{\xi}_+ - \boldsymbol{\xi}_-$. The corresponding affiliation magnitudes are then taken to be
\begin{align*}
    &r_1=\sigma\norm{\vphi_+}_\infty\norm{\vpsi_+}_\infty, \; r_2=-\sigma\norm{\vphi_+}_\infty\norm{\vpsi_-}_\infty,\nonumber\\
    &r_3=-\sigma\norm{\vphi_-}_\infty\norm{\vpsi_+}_\infty,
    r_4=\sigma\norm{\vphi_-}_\infty\norm{\vpsi_-}_\infty.
\end{align*}
If $K$ is not divisible by 4, we discard the excess components with the smallest community affiliation magnitudes in absolute value.

To efficiently calculate the leading left and right singular vectors, we may use power method variants such as the Lanczos algorithm \citep{NLA_Lanczos} or simultaneous iteration \citep{TrefethenNLA} in $\gO(E)$ operations per iteration. For very large graphs, we propose in \Cref{app:monte_carlo_svd} a more efficient randomized SVD algorithm that does not require reading the whole graph into memory at once.

\subsection{Theoretical analysis of SVD initialization}
Next, we analyze this initialization and show that it attains a relatively high initial accuracy.
The following is a corollary of the densifying weak regularity lemma with constant $q(x,y)=\frac{N^2}{E}$, the signal weight in the cut norm set to $\beta=0$, using all measurable functions from $[0,1]^2$ to $[0,1]$ as the soft affiliation model, and taking relative Frobenius error with $\delta=0$ on theorem \ref{thm:reg_graphon}. 
In this case, according to the best rank-$K$ approximation theorem (Eckart–Young–Mirsky Theorem \citep[Theorem 5.9]{TrefethenNLA}), the minimizer of the Frobenius error is the rank-K singular value decomposition (SVD). This leads to the following corollary.

\begin{corollary}
    \label{cor:reg_eig}
    Let $\mA$ be a graph, $K\in\sN$,  let $m$ be sampled uniformly from $[K]$, and let $R\geq 1$ such that $K/R\in\sN$. 
    Let $\vu_1,\ldots,\vu_m$ and $\vv_1,\ldots,\vv_m$ be the leading left and right singular vectors of $\mA$ respectively, with singular values $\sigma_1,\ldots,\sigma_m$ of highest magnitudes $\abs{\sigma_1}\geq\abs{\sigma_2}\geq\ldots \geq \abs{\sigma_m}$, and let
    $\mC^*=\sum_{k=1}^m \sigma_k \vu_k \vv_k^\top$. 
    Then, in probability $1-\frac{1}{R}$ (with respect to the choice of $m$), 
    \[
    \| \mA-\mC^*\|_{\square} < \Fnorm{\mA}\sqrt{\frac{R}{K}}.\]
\end{corollary}

\begin{proof}
Consider Theorem \ref{thm:reg_graphon}, with $\delta=0, \beta=0, \Gamma=0$, and taking the constant weight 
\[q(x,y)=\frac{N^2}{E}.\]
Under this setting, the weighted Frobenius norm becomes the standard Frobenius norm, and the weighted similarity measure becomes the cut norm.
Consider the induced graphon signals $W_{\mA}$ and $W_{\mC^*}$.
Note that under the standard Frobenius norm, $W_{\mC^*}$ satisfies \cref{eq:app_reg_graphon_item1_cond}, as $W_{\mC^*}$ is the SVD of $W_{\mA}$, and is also the best Frobenius norm approximator.
Hence, the bound in \cref{eq:app_reg_item2_main} is satisfied in probability $1-1/R$, and becomes
\begin{equation*}
        \| W_{\mA}-W_{\mC^*} \|_{\square} \leq \|W_{\mA}\|_{\mathrm F}\sqrt{\frac{R}{K}}.
\end{equation*}
We immediately get in probability $1-1/R$
\[\| \mA-\mC^*\|_{\square} < \Fnorm{\mA}\sqrt{\frac{R}{K}},\]
as required.
\end{proof}

The initialization is based on Corollary \ref{cor:reg_eig} restricted to induced graphon-signals with a densifying cut similarity with a constant weight  $q_{i,j} = N^2/E$. Consider $\mC=\boldsymbol{U}_{K/4} \boldsymbol{\Sigma}_{K/4}\boldsymbol{V}_{K/4}^{\mathrm T}$, the rank $K/4$ singular value decomposition of $\mA$. We get
\begin{equation}
    \label{eq:reg_sparse_eigs}
    \norm{\mA-\mC}_{\square;N,E} < \frac{N}{\sqrt{E}}\sqrt{\frac{4R}{K}}.
\end{equation}

We note that while the lemma guarantees a good initialization for the graph, the goodness is not measured in terms of the graph-signal densifying similarity, but rather with respect to cut metric. Still, we find that in practice this initialization improves performance significantly.

It is important to note that the method described in the previous section relies on computing the singular value decomposition (SVD) of the graph's adjacency matrix. Traditional algorithms for SVD require the entire adjacency matrix to be loaded into memory, which can be computationally expensive for large graphs. Similar to the approach proposed for computing the \ibg in \Cref{Learning ICG with subgraph SGD}, we aim to allow SVD computation without loading the entire edge set into memory.

In the following subsection, we introduce a Monte Carlo algorithm for computing the SVD of a matrix by processing only a random small fraction of its rows, significantly reducing memory requirements.

\subsection{Monte Carlo SVD algorithm}
\label{app:monte_carlo_svd}
For very large graphs, it is impossible to load the whole graph to GPU memory and estimate the SVD with standard algorithms. Instead, we propose here a randomized SVD algorithm which loads the matrix to GPU memory in small enough chunks. 

The following theorem relates eigendecomposition of symmetric matrices to SVD \citep[Lecture 31]{TrefethenNLA}.

\begin{theorem}
\label{eigenSVD}
    Let $\mA\in\sR^{N\times N}$ with singular values $\sigma_1\geq\ldots\geq\sigma_N$, left singular vectors $\vu_1,\ldots,\vu_n$, and right singular vectors $\vv_1,\ldots,\vv_N$. Consider the \emph{augmented matrix}
    \[\mB = \left(
    \begin{array}{cc}
       \mathbf{0}  & \mA^{\top}  \\
       \mA  & \mathbf{0}
    \end{array}
    \right).\]
    Then, the eigenvalues of $\mB$ are exactly $\sigma_1,\ldots,\sigma_N,-\sigma_N,\ldots,-\sigma_1$, and the eigenvectors are all vectors of the form
    \begin{equation}
        \label{eq:eig2SVD}
        \frac{1}{\sqrt{2}}\left(
    \begin{array}{c}
       \vv_i   \\
       \pm \vu_i  
    \end{array}
    \right), \quad i=1,\ldots,N.
    \end{equation}
\end{theorem}
Theorem \ref{eigenSVD} is the standard way to convert eigendecomposition algorithms of symmetric matrices to SVD of general matrices. Namely, one applies the eigendecomposition algorithm to the symmetric augmented matrix $\mB$, and reads the singular values and vectors from $\sigma_1,\ldots,\sigma_N,-\sigma_N,\ldots,-\sigma_1$ and from (\ref{eq:eig2SVD}).

We hence start with a basic randomized eigendecomposition algorithm for symmetric matrices, based on the simultaneous iteration (also called the block power method) \citep[Part V]{TrefethenNLA}.

\begin{algorithm}[b]
   \caption{Simultaneous Iteration for Eigendecomposition of Symmetric Matrix}
   \label{alg:simultaneous_iteration}
    \begin{algorithmic}
    
    \STATE {\bfseries Input:} 
Matrix $\mC \in \mathbb{R}^{N \times N}$, number of leading eigenvalues to be computed $M$, number of iterations $J$.
    \STATE Initialize $\mQ \in \mathbb{R}^{N \times M}$ randomly  
    \FOR{$i=1$ {\bfseries to} $J$}
        \STATE $\mZ = \mC \mQ$
        \STATE $(\mQ,\mR) = \text{Reduced-QR-Factorization}(\mZ)$
    \ENDFOR 
    \STATE 
    Compute $\boldsymbol{\lambda}=(\lambda_1,\ldots,\lambda_M)$ 
     with
    $\lambda_j = \mQ[:,j]^\top \mC \mQ[:,j]$
    \STATE {\bfseries Output:} Approximate eigenvalues $\lambda_1, \ldots, \lambda_M$ and corresponding columns $\boldsymbol{v_1}, \ldots, \boldsymbol{v_M}$ of $\mQ$ as the approximate eigenvectors. 
    \end{algorithmic}
\end{algorithm}
    
Given a symmetric matrix $\mC$, the standard \emph{simultaneous iteration} (Algorithm \ref{alg:simultaneous_iteration}) is an algorithm for finding the leading (largest in their absolute values) $M$ eigenvector-eigenvalue pairs of $\mC$. 
We next replace the full matrix product $\mC\mQ$ by a Monte Carlo method, initially proposed in \citep{drineas2006fast}. 
Consider a matrix $\mC\in\mathbb{R}^{N\times N}$ with columns $\vc_1,\ldots,\vc_N\in\mathbb{R}^N$ and a column vector $\vv=(v_1,\ldots,v_N)\in\mathbb{R}^N$. Let $m_1,\ldots,m_J$ be chosen independently uniformly at random from $[N]$. Let us denote $\vm=(m_1,\ldots,m_J)$ and 
\[[\mC\vv]_{\vm}=\sum_{j=1}^J v_{m_j}\vc_{m_j}.\] One can show that in high probability
\[[\mC\vv]_{\vm} \approx \mC\vv,\]
where the expected square error  satisfies
\[\mathbb{E}\|[\mC\vv]_{\vm} - \mC\vv\|_2^2 = O(1/J).\]
The advantage in the computation $[\mC\vv]_{\vm}$ is that it reduces the computational complexity of matrix-vector product from $O(N^2)$ to $O(JN)$.

We hence consider a \emph{Monte Carlo simultaneous iteration algorithm}, which is identical to Algorithm \ref{alg:simultaneous_iteration} with the exception that the matrix-vector product $\mZ=\mC \mQ$ is approximated by $[\mC \mQ]_{\vm}$.

Lastly, we would like to use the Monte Carlo simultaneous iteration algorithm for estimating the SVD of a matrix $\mA$ via Theorem \ref{eigenSVD}. For that, we require an additional consideration. Note that the simultaneous iteration finds the leading eigenvalues and eigenvectors (up to sign) in case they are distinct in their absolute value. In case $\lambda_i\neq\lambda_j$ but $\abs{\lambda_i}=\abs{\lambda_j}$, the simultaneous iteration would find a vector in the span of the eigenspaces corresponding to $\lambda_i$ and $\lambda_j$. Now, note that Theorem \ref{eigenSVD} builds the SVD of $\mA$ via the eigendecomposition of $\mB$, and every value $\lambda_i$ of $\mB$ is repeated twice, with positive and negative sign (in case the singular values are not repeated), and Algorithm  \ref{alg:simultaneous_iteration_monte_carlo} finds vectors only in the span of the two eigenspaces corresponding to $\pm \lambda_i$. This is not an issue in the exact algorithm, as the singular vectors can be read off (\ref{eq:eig2SVD}) regardless of the mixing between eigenspaces. However, in the Monte Carlo simultaneous iteration algorithm, the algorithm separates the two dimensional spaces of $\mB$ to one-dimensional spaces arbitrarily due to the inexact Monte Carlo matrix product. Moreover, the split of the space to two one dimensional spaces arbitrarily changes between iterations. Hence, when naively implemented, the SVD algorithm based on Theorem \ref{eigenSVD} fails to converge.

Instead, we apply the Monte Carlo eigendecomposition algorithm on $\mB + \lambda_1\mathbf{I}$, where $\lambda_1$ is the largest eigenvalue of $\mB$, computed by a Monte Carlo power iteration on $\mB$. Since the addition of  $\lambda_1\mathbf{I}$ shifts the spectrum of $\mB$ by $\lambda_1$, all eigenvalues of $\mB$ become non-negative and distinct (under the assumption that the singular values of $\mA$ are distinct). We summarize in Algorithm \ref{alg:simultaneous_iteration_monte_carlo} the resulting method (in the next page).

\vspace{0.4cm}
\begin{algorithm}[h]
   \caption{Monte Carlo Simultaneous Iteration for SVD decomposition of Non-Symmetric Matrix}
   \label{alg:simultaneous_iteration_monte_carlo}
    \begin{algorithmic}
    \STATE {\bfseries Input:} Matrix $\mC \in \mathbb{R}^{N \times N}$, number of leading eigenvalues to be computed $M$, number of iterations $J$, sample ratio $0<r\leq 1$.
      \vspace{0.2cm}
    \algrule 
    \textcolor{gray}{\COMMENT{\textbf{\textit{Augment the non-symmetric matrix to be symmetric}}}}
    \vspace{0.2cm}
    \STATE Define $\mB = \left(
    \begin{array}{cc}
       \mathbf{0}  & \mC^{\top}  \\
       \mC  & \mathbf{0}
    \end{array}
    \right).$
    \vspace{0.2cm}
    \algrule    
    \textcolor{gray}{\COMMENT{\textbf{\textit{Leading eigenvalue estimation}}}}
    \vspace{0.2cm}
    \STATE Initialize $\mQ_1 \in \mathbb{R}^{2N \times 1}$ randomly 
    \FOR{$i=1$ {\bfseries to} $J$}
        \STATE Generate $2N\cdot r$ samples $\boldsymbol{n}$ of indices from $[2N]$ with repetitions.
        \STATE $\Tilde{\mB} = \mB[:, \boldsymbol{n}]$
        \STATE $\Tilde{\mQ}_1 = \mQ_1[\vn,1]$
        \STATE $\mZ = \Tilde{\mB} \Tilde{\mQ}_1$
        \STATE $(\mQ_1,\mR) = \text{Reduced-QR-Factorization}(\mZ)$
    \ENDFOR
    \STATE $\lambda_1=\mQ_1^\top \mB \mQ_1$
    \vspace{0.2cm}
    \algrule
    \textcolor{gray}{\COMMENT{\textbf{\textit{Shift the spectrum of $\mB$}}} }
    \vspace{0.2cm}
    \STATE  $\hat{\mB} = \mB + \abs{\lambda_1}\mI_{2N}$ \quad\quad 
    \algrule
    \textcolor{gray}{\COMMENT{\textbf{\textit{Randomized simultaneous iteration on $\hat{\mB}$}}}}
     \vspace{0.2cm}
    \STATE  Initialize $\mQ \in \mathbb{R}^{2N \times M}$ randomly
    \FOR{$i=1$ {\bfseries to} $J$} 
        \STATE Sample vector $\boldsymbol{n}$ of $2N\cdot r$ indices from $[2N]$ with repetitions.
        \STATE $\Tilde{\mB} = \hat{\mB}[:,\boldsymbol{n}]$
        \STATE $\Tilde{\mQ} = \mQ[\boldsymbol{n},:]$
        \STATE $\mZ = \Tilde{\mB} \Tilde{\mQ}$
        \STATE $(\mQ,\mR) = \text{Reduced-QR-Factorization}(\mZ)$
    \ENDFOR
    \algrule
    \textcolor{gray}{\COMMENT{\textbf{\textit{Generate the output}}}}
    \vspace{0.2cm}
     \STATE Sample a vector $\boldsymbol{n}$ of $2N\cdot r$ indices from $[2N]$ with repetitions.
     \STATE $\Tilde{\mB} = \mB[:,\boldsymbol{n}]$
     \STATE $\Tilde{\mQ} = \mQ[\boldsymbol{n},:]$
     \STATE Compute $\boldsymbol{\sigma}=(\sigma_1,\ldots,\sigma_M)$ with $\sigma_j = \mQ[:,j]^\top \Tilde{\mB} \Tilde{\mQ}[:,j]$ 
    \STATE $\mQ_v = \mQ[1:N, :]$
    \STATE $\mQ_u = \mQ[N+1:2N, :]$
    \vspace{0.2cm}
    \algrule
    \STATE {\bfseries Output:} Approximate singular values $\sigma_1, \ldots, \sigma_m$ and corresponding columns $\vu_1, \ldots, \vu_m$ and $\vv_1, \ldots, \vv_m$ of $\mQ_u$ and $\mQ_v$ as the approximate left and right singular vectors. 
    \end{algorithmic}
\end{algorithm}

\section{Learning \ibg with subgraph SGD}
\label{Learning ICG with subgraph SGD}

For message-passing neural networks, processing large graphs becomes challenging when the number of edges $E$ exceeds the capacity of GPU memory. For this reason, processing IBGs with neural networks, which take $\gO(N)$ operations, is advantageous. However, one still has to fit the IBG to the graph as a preprocessing step, which takes $\gO(E)$ operations and memory complexity.
To address this, we propose two sampling-based optimization methods for the IBG. The first method performs node sampling, extending the SGD approach of \cite{finkelshtein2024learning}. The second samples individual entries of the adjacency matrix, which we refer to as diodes. We term these methods node-sampling SGD and diode-sampling SGD, accordingly.

\subsection{Construction of Node sampling SGD}
\label{app:node_sgd}

In a standard GD procedure, all edges of the graph are loaded into memory. Instead, in our node-sampling SGD procedure, a set of $M\ll N$ nodes is sampled uniformly with replacement from $[N]$. Denote these nodes by $\vn := (n_m)_{m=1}^M$. We then perform the gradient step using the gradients calculated solely using these sampled nodes. More specifically, by denoting $\mA^{(\vn)}\in\sR^{M\times M}$ the sampled subgraph with entries $a^{(\vn)}_{i,j}=a_{n_i,n_j}$, $\mX^{(\vn)}\in\sR^{M\times K}$ the sampled sub-signal with entries $\vx^{(\vn)}_{i}=\vx_{n_i}$, and $\mU^{(\vn)},\mV^{(\vn)}\in[0,1]^{M\times K}$ as the sampled community target and source affiliation matrices with entries $u^{(\vn)}_{i,j}=u_{n_i,j}$ and $v^{(\vn)}_{i,j}=v_{n_i,j}$. The loss over the sampled nodes becomes:

\begin{align}
    \label{eq:lossSGD}
    L^{(M)} (\mU^{(\vn)},\mV^{(\vn)}, \vr,\mF, \mB) &=\frac{\alpha(1+\Gamma)\mu}{M^2}\sum_{i,j=1}^M\sum_{k=1}^K(u_{n_i,k}r_kv_{n_j,k}-a_{n_i,n_j})^2 q_{n_i,n_j} \nonumber \\
    & + \frac{\beta}{MD}\sum_{i=1}^M\sum_{d=1}^D\sum_{k=1}^K(u_{n_i,k}f_{k,d} + v_{n_i,k}b_{k,d} - x_{n_i,d}) ^ 2, \nonumber
\end{align}

\subsection{Construction of Diode sampling SGD}

We sample a set of $M\ll N^2$ diodes $\gD = (\vs, \vt) = (s_m, t_m)_{m=1}^M$ independently from a distribution we define over all diodes of the graph. 
The probability of sampling diode $(i,j)$ is $ \frac{q_{ij}}{(1+\Gamma)E}$, and the loss over the sampled nodes is 
\begin{align}
    L^{(M)} (\mU,\mV, \vr) &=\frac{\alpha(1+\Gamma)}{M}\sum_{m=1}^M\sum_{k=1}^K(u_{s_m,k}r_kv_{t_m,k}-a_{s_m,t_m})^2. \nonumber
\end{align}

We then perform the gradient step using the gradients calculated solely using these sampled diodes.

We note that for diode-sampling SGD, we define the loss only over the graph part. To perform SGD over the signal part of the loss, the method is identical to node-sampling SGD.

\subsection{Theoretical analysis of diode-sampling SGD}
\label{subsec:diode_sgd_analysis}
In the following section, we prove that the gradients calculated using diode-sampling SGD with respect to the sampled diodes approximate those calculated over the full graph using the standard loss in (\ref{eq:dir_inefficient_loss}).
Throughout the section we denote $\mu=1/\sum q_{i,j}$ and refer only to the sample loss of diode-sampling SGD.

\begin{proposition}
    \label{prop:SGD_new}
    Let $0<p<1$. Consider the Frobenius loss weighted by $\mQ, \alpha$ and $\beta$. If we restrict all entries of $\mC$, $\mU$, $\mV$, and $\vr$ to be in $[-1,1]$, then in probability at least $1-p$, for every $k\in[K]$ and $m\in[M]$
    \begin{equation*}
        \begin{split}
            \abs{\nabla_{u_{n_m,k}}L-\nabla_{u_{n_m,k}}L^{(M)}} &\leq 2\alpha(1+\Gamma)\mu\sqrt{\frac{2\log(1/p)+2\log(K)+2\log(6)}{M}},\\
            \abs{\nabla_{v_{n_t,k}}L-\nabla_{v_{n_m,k}}L^{(M)}} &\leq 2\alpha(1+\Gamma)\mu\sqrt{\frac{2\log(1/p)+2\log(K)+2\log(6)}{M}},\\
            \abs{\nabla_{r_k}L-\nabla_{r_k}L^{(M)}} &\leq 2\alpha(1+\Gamma)\mu\sqrt{\frac{2\log(1/p)+2\log(K)+2\log(6)}{M}}.
        \end{split}
    \end{equation*}
\end{proposition}

\Cref{prop:SGD_new} provide a probabilistic bound on the difference between the gradients computed on the full graph and those calculated using subgraph SGD. Notably, it shows that the gradients of the \ibg parameters calculated with SGD closely approximate those calculated with standard GD.

We prove \cref{prop:SGD_new} by comparing the gradients of the full loss $L$ with those of the sampled loss $L^{(M)}$.

Consider the graph part of the \ibg loss:
\begin{align*}
    L(\mU,\mV,\vr) &=\alpha(1+\Gamma)\mu\sum_{i,j=1}^N\sum_{k=1}^K(u_{i,k}r_kv_{j,k}-a_{i,j})^2 q_{i,j}
\end{align*}

Next, we calculate the gradients of $\mC=\mU\diag(\vr)\mV^\top$ with respect to $\mU$, $\mV$, and $\vr$ in coordinates. We have
\[\nabla_{r_k}c_{i,j}=u_{i,k}v_{j,k},\]
\[\nabla_{u_{t,k}}c_{i,j}=r_k\delta_{i-t}v_{j,k},\]
and
\[\nabla_{v_{t,k}}c_{i,j}=r_k\delta_{j-t}u_{i,k},\]
where $\delta_{i}$ is 1 if $i=0$ and zero otherwise. 
Hence, the gradients of the loss are
\[\nabla_{r_k}L=2\alpha(1+\Gamma)\mu\sum_{i,j=1}^N(c_{i,j}-a_{i,j})q_{i,j}u_{i,k}v_{j,k},\]
\[\nabla_{u_{n,k}}L=2\alpha(1+\Gamma)\mu\sum_{j=1}^N(c_{n,j}-a_{n,j})q_{n,j}r_k v_{j,k},\]
and
\[\nabla_{v_{n,k}}L=2\alpha(1+\Gamma)\mu\sum_{j=1}^N(c_{j,n}-a_{j,n})q_{j,n}r_k u_{j,k}.\]

Similarly
\[\nabla_{r_k}L^{(M)}=\frac{2\alpha(1+\Gamma)}{M}\sum_{j=1}^M(c_{s_j,t_j}-a_{s_j,t_j})u_{s_j,k}v_{t_j,k},\]
\begin{align*}
    \nabla_{u_{n_m,k}}L^{(M)}=  \frac{2\alpha(1+\Gamma)\mu}{M}\sum_{j=1}^M(c_{n_m,t_j}-a_{n_m,t_j})r_k v_{t_j,k},
\end{align*}
and
\begin{align*}
    \nabla_{v_{n_m,k}}L^{(M)}=  \frac{2\alpha(1+\Gamma)\mu}{M}\sum_{j=1}^M(c_{s_j,n_m}-a_{s_j,n_m})r_ku_{s_j,k},
\end{align*}

The following convergence analysis is based on Hoeffding’s inequality, and a supporting Monte Carlo approximation lemma.

\begin{theorem}[Hoeffding's Inequality]
    \label{thm:Hoeff}
    Let $Y_1, \ldots, Y_M$ be independent random variables such that $a \leq Y_m \leq b$ almost surely. Then, for every $k>0$, 
    \[
        \mathbb{P}\Big(
        \Big|
        \frac{1}{M} \sum_{m=1}^M (Y_m - \mathbb{E}[Y_m] )
        \Big| \geq k
        \Big) \leq 
        2 \exp\Big(-
        \frac{2 k^2 M}{( b-a)^2}
        \Big).
    \]
\end{theorem}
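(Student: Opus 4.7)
The plan is to prove Hoeffding's inequality via the standard Chernoff / moment generating function method, which rests on one nontrivial lemma. First I would center the variables by setting $Z_m := Y_m - \mathbb{E}[Y_m]$, so $\mathbb{E}[Z_m] = 0$ and each $Z_m$ lies in an interval of width $b-a$ (specifically in $[a - \mathbb{E}[Y_m], b - \mathbb{E}[Y_m]]$). Let $S_M := \sum_{m=1}^M Z_m$. For any $\lambda > 0$, Markov's inequality applied to the increasing function $x \mapsto e^{\lambda x}$ gives $\mathbb{P}(S_M \geq Mk) \leq e^{-\lambda M k}\, \mathbb{E}[e^{\lambda S_M}]$, and independence of the $Y_m$ factorizes the moment generating function as $\mathbb{E}[e^{\lambda S_M}] = \prod_{m=1}^M \mathbb{E}[e^{\lambda Z_m}]$.

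The crux is Hoeffding's lemma: for a centered random variable $Z$ valued in $[\alpha,\beta]$, $\mathbb{E}[e^{\lambda Z}] \leq \exp\!\bigl(\lambda^2 (\beta-\alpha)^2/8\bigr)$. I would prove it by exploiting convexity of the exponential: for $x \in [\alpha,\beta]$,
\[
e^{\lambda x} \leq \tfrac{\beta - x}{\beta - \alpha} e^{\lambda \alpha} + \tfrac{x - \alpha}{\beta - \alpha} e^{\lambda \beta}.
\]
Taking expectation and using $\mathbb{E}[Z]=0$, then setting $p := -\alpha/(\beta-\alpha)$ and $u := \lambda(\beta-\alpha)$, the bound reduces to showing $\phi(u) \leq u^2/8$ for $\phi(u) := -pu + \log\bigl((1-p) + p e^u\bigr)$. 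A direct calculation gives $\phi(0)=0$, $\phi'(0)=0$, and $\phi''(u) = r(u)(1-r(u))$ with $r(u) := pe^u / ((1-p)+p e^u) \in [0,1]$, so $\phi''(u) \leq 1/4$; Taylor's theorem with remainder then yields $\phi(u) \leq u^2/8$.

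Combining the lemma across the $M$ independent summands gives $\mathbb{E}[e^{\lambda S_M}] \leq \exp(M \lambda^2 (b-a)^2 / 8)$, so
\[
\mathbb{P}(S_M \geq Mk) \leq \exp\!\Bigl(-\lambda M k + \tfrac{M \lambda^2 (b-a)^2}{8}\Bigr).
\]
Minimizing the right-hand side in $\lambda>0$ is elementary calculus: the optimum $\lambda^* = 4k/(b-a)^2$ produces the one-sided bound $\exp(-2k^2 M / (b-a)^2)$. Applying the same argument to $-S_M$ (which inherits the same range and mean-zero property) controls the lower tail, and a union bound over the two tails introduces the factor of $2$ in the statement.

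The main obstacle is Hoeffding's lemma, and within it the analytic verification that $\phi''(u) \leq 1/4$ and the subsequent Taylor-based bound $\phi(u) \leq u^2/8$; the rest of the argument (Chernoff bound, factorization by independence, optimization over $\lambda$, union bound) is mechanical. Note that this result is classical and problem-agnostic, inserted here purely as a tool to support the concentration analysis of the subgraph-SGD gradients in \cref{prop:SGD}.
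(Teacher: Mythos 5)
Your proposal is the standard and correct Chernoff--Hoeffding argument: centering, the MGF factorization by independence, Hoeffding's lemma via convexity of the exponential and the bound $\phi''(u)\leq 1/4$, optimization over $\lambda$ giving $\lambda^* = 4k/(b-a)^2$, and a union bound for the two tails. The paper states this result as a classical tool without proof, so there is nothing to compare against; your argument is complete and matches the textbook proof one would cite.
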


We now use Hoeffding's inequality to derive a standard Monte Carlo approximation error bound.
\begin{lemma}
    \label{lem:dev_by_z}
    Let $\{i_m\}_{m=1}^M$ be uniform i.i.d in $[N]$.  
    Let $\vv\in\sR^N$ be a vector with entries $v_n$ in the set $[-1,1]$. Then, for every $0<p<1$, in probability at least $1-p$
    \[\abs{\frac{1}{M}\sum_{m=1}^M v_{i_m}- \frac{1}{N}\sum_{n=1}^N v_n} \leq \sqrt{\frac{2\log(1/p)+2\log(2)}{M}}.\]
\end{lemma}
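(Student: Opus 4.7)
The plan is a direct application of Hoeffding's inequality (Theorem \ref{thm:Hoeff}) to the i.i.d.\ random variables $Y_m := v_{i_m}$. First I would observe that since $\{i_m\}$ are uniform i.i.d.\ on $[N]$, the $Y_m$ are i.i.d.\ with common expectation
\[
\mathbb{E}[Y_m] \;=\; \frac{1}{N}\sum_{n=1}^N v_n,
\]
and since $v_n \in [-1,1]$, we have $Y_m \in [-1,1]$ almost surely, so the bounding interval has width $b-a = 2$.

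Next I would invoke Hoeffding's inequality with these parameters: for every $k>0$,
\[
\mathbb{P}\!\left(\,\left|\frac{1}{M}\sum_{m=1}^M Y_m - \frac{1}{N}\sum_{n=1}^N v_n\right| \geq k\right) \;\leq\; 2\exp\!\left(-\frac{2k^2 M}{(b-a)^2}\right) \;=\; 2\exp\!\left(-\frac{k^2 M}{2}\right).
\]
I would then solve for the threshold that makes the right-hand side at most $p$. Setting $2\exp(-k^2 M/2) \leq p$ rearranges to $k^2 M/2 \geq \log(2/p) = \log(2) + \log(1/p)$, i.e.
\[
k \;\geq\; \sqrt{\frac{2\log(1/p) + 2\log(2)}{M}}.
\]
Choosing $k$ to be exactly this value yields the claim: with probability at least $1-p$, the Monte Carlo deviation is at most $\sqrt{(2\log(1/p) + 2\log(2))/M}$.

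There is no real obstacle here; the lemma is a textbook Monte Carlo estimate and the only subtlety is keeping track of the constant factor coming from the range width $b-a=2$ (which produces the factor of $2$ inside the square root, matching the exponents that later appear in the gradient bounds of Proposition \ref{prop:SGD}). The proof is at most a few lines once one correctly identifies $Y_m = v_{i_m}$ and invokes Theorem \ref{thm:Hoeff}.
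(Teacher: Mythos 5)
Your proposal is correct and is exactly the paper's argument: the paper proves this lemma in one line as "a direct result of Hoeffding's Inequality on the i.i.d.\ variables $\{v_{i_m}\}$," and your computation (range width $b-a=2$, solving $2\exp(-k^2M/2)\leq p$ for $k$) simply makes the constants explicit. Nothing is missing.
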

\begin{proof}
    This is a direct result of Hoeffding's Inequality on the i.i.d. variables $\{v_{i_m}\}_{m=1}^M$.  
\end{proof}

We use \cref{lem:dev_by_z} on the i.i.d samples $\vm$. We first show:
\[
\mathbb{E}[\nabla_{r_k} L^{(M)}] = \nabla_{r_k} L,\]
\[\mathbb{E}[\nabla_{u_{n_m, k}} L^{(M)}] = \nabla_{u_{n_m, k}} L,\]
and
\[\mathbb{E}[\nabla_{v_{n_m, k}} L^{(M)}] = \nabla_{v_{n_m, k}} L.\]

Indeed we have
\[
\mathbb{E}[\nabla_{r_k} L^{(M)}] = \frac{2\alpha(1+\Gamma)}{M}\sum_{j=1}^M\mathbb{E}[(c_{s_j,t_j}-a_{s_j,t_j})u_{s_j,k}v_{t_j,k}]=\]
\[=\frac{2\alpha(1+\Gamma)}{M}\sum_{j=1}^M\mathbb{E}[(c_{s_1,t_1}-a_{s_1,t_1})u_{s_1,k}v_{t_1,k}]=\]
\[=2\alpha(1+\Gamma)\sum_{ij=1}^NP(s=i,t=j)(c_{i,j}-a_{i,j})u_{i,k}v_{j,k}=\]
\[=2\alpha(1+\Gamma)\mu\sum_{ij=1}^N(c_{i,j}-a_{i,j})q_{ij}u_{i,k}v_{j,k},\]
\[
\mathbb{E}[\nabla_{u_{n_m,k}} L^{(M)}] = \frac{2\alpha(1+\Gamma)}{M}\sum_{j=1}^M\mathbb{E}[(c_{n_m,t_j}-a_{n_m,t_j})r_k v_{t_j,k}]=\]
\[=\frac{2\alpha(1+\Gamma)}{M}\sum_{j=1}^M\mathbb{E}[(c_{n_m,t_1}-a_{n_m,t_1})r_k v_{t_1,k}]=\]
\[=2\alpha(1+\Gamma)\mu\sum_{j=1}^N(c_{n_m,j}-a_{n_m,j})q_{n_mj}r_kv_{j,k},\]
and
\[
\mathbb{E}[\nabla_{v_{n_m,k}} L^{(M)}] = \frac{2\alpha(1+\Gamma)}{M}\sum_{j=1}^M\mathbb{E}[(c_{n_m,t_j}-a_{n_m,t_j})r_k v_{t_j,k}]=\]
\[=\frac{2\alpha(1+\Gamma)}{M}\sum_{j=1}^M\mathbb{E}[(c_{n_m,t_1}-a_{n_m,t_1})r_k v_{t_1,k}]=\]
\[=2\alpha(1+\Gamma)\mu\sum_{j=1}^N(c_{n_m,j}-a_{n_m,j})q_{n_mj}r_kv_{j,k}.\]

This shows that the expected value of the sampled loss gradients is equal to the standard loss's gradients, which meets the conditions of \cref{lem:dev_by_z}. We therefore use the lemma to obtain a probabilistic bound on the difference between the approximated gradients and the full gradients.

Specifically, for any $0<p_1<1$, for every $k\in[K]$ there is an event $\gA_k$ of probability at least $1-p_1$ such that
\[\abs{\nabla_{r_k}L-\nabla_{r_k}L^{(M)}} \leq 2\alpha(a+\Gamma)\mu\sqrt{\frac{2\log(1/p_1)+2\log(2)}{M}}.\]
For any $0<p_2<1$, for every $k\in[K]$ there is an event $\gU_k$ of probability at least $1-p_2$ such that for every $n\in[N]$
\[\abs{\nabla_{u_{n,k}}L-\nabla_{u_{n,k}}L^{(M)}} \leq 2\alpha(a+\Gamma)\mu\sqrt{\frac{2\log(1/p_2)+2\log(2)}{M}}.\]
For any $0<p_3<1$, for every $k\in[K]$ there is an event $\gV_k$ of probability at least $1-p_3$ such that for every $n\in[N]$
\[\abs{\nabla_{v_{n,k}}L-\nabla_{v_{n,k}}L^{(M)}} \leq 2\alpha(a+\Gamma)\mu\sqrt{\frac{2\log(1/p_3)+2\log(2)}{M}}.\]

Lastly, given $0<p<1$, choosing $p_1=p_2=p_3=p/3K$ and intersecting all events for all coordinates gives in the event $\gE$ of probability at least $1-p$
\[\abs{\nabla_{r_k}L-\nabla_{r_k}L^{(M)}} \leq 2\alpha(a+\Gamma)\mu\sqrt{\frac{2\log(1/p)+2\log(K)+2\log(6)}{M}},\]
\[\abs{\nabla_{u_{n,k}}L-\nabla_{u_{n,k}}L^{(M)}} \leq 2\alpha(a+\Gamma)\mu\sqrt{\frac{2\log(1/p)+2\log(K)+2\log(6)}{M}},\]
and
\[\abs{\nabla_{v_{n,k}}L-\nabla_{v_{n,k}}L^{(M)}} \leq 2\alpha(a+\Gamma)\mu\sqrt{\frac{2\log(1/p)+2\log(K)+2\log(6)}{M}}.\]

proving \cref{prop:SGD_new}

\subsection{Theoretical analysis of node  sampling SGD}
In this section, we provide a similar result to the one in \cref{subsec:diode_sgd_analysis}, proving that the gradients calculated using node-sampling SGD with respect to the sampled nodes approximate those calculated over the full graph using the standard loss in (\ref{eq:dir_inefficient_loss}). Throughout the section we use the loss of node-sampling SGD.

\begin{proposition}
    \label{prop:SGD}
    Let $0<p<1$. Consider the Frobenius loss weighted by $\mQ, \alpha$ and $\beta$. If we restrict all entries of $\mC$,  $\mP$, $\mU$, $\mV$, $\vr$, $\mF$ and $\mB$ to be in $[-1,1]$, then in probability at least $1-p$, for every $k\in[K]$, $d\in[D]$ and $m\in[M]$
    \begin{equation*}
        \begin{split}
            \abs{\nabla_{u_{n_m,k}}L-\frac{M}{N}\nabla_{u_{n_m,k}}L^{(M)}} &\leq 2\alpha(1+\Gamma)\mu N\sqrt{\frac{2\log(1/p)+2\log(2N)+2\log(K) +2\log(10)}{M}},\\
            \abs{\nabla_{v_{n_t,k}}L-\frac{M}{N}\nabla_{v_{n_m,k}}L^{(M)}} &\leq 2\alpha(1+\Gamma)\mu N\sqrt{\frac{2\log(1/p)+2\log(2N)+2\log(K) +2\log(10)}{M}},\\
            \abs{\nabla_{r_k}L-\nabla_{r_k}L^{(M)}} &\leq 4\alpha(1+\Gamma)\mu N^2\sqrt{\frac{2\log(1/p)+2\log(N)+2\log(K)+2\log(10)}{M}},\\
            \abs{\nabla_{f_{k,d}}L-\nabla_{f_ {k,d}}L^{(M)}} & \leq \frac{4\beta}{D}\sqrt{\frac{2\log(1/p)+2\log(K)+2\log(D)+2\log(10)}{M}},\\
            \abs{\nabla_{b_{k,d}}L-\nabla_{b_{k,d}}L^{(M)}} & \leq \frac{4\beta}{D}\sqrt{\frac{2\log(1/p)+2\log(K)+2\log(D)+2\log(10)}{M}}.
        \end{split}
    \end{equation*}
\end{proposition}

The proof of \cref{prop:SGD} is similar to that of \cref{prop:SGD_new}. Define the graph part and the signal part of the \ibg loss
\begin{align*}
    L_1(\mU,\mV,\vr) &=\frac{\alpha(1+\Gamma)\mu}{N^2}\sum_{i,j=1}^N\sum_{k=1}^K(u_{i,k}r_kv_{j,k}-a_{i,j})^2 q_{i,j}
\end{align*}
\begin{align*}
L_2(\mU,\mV,\mF,\mB) &= \frac{\beta}{ND}\sum_{i=1}^N\sum_{d=1}^D\sum_{k=1}^K(u_{i,k}f_{k,d} + v_{i,k}b_{k,d} - x_{i,d}) ^ 2.
\end{align*}
The \ibg loss in (\ref{eq:dir_inefficient_loss}) is $L = N^2L_1+L_2$. We normalize and multiply $L_1$ by $N^2$ for reasons that will become clear later.

Similarly, we define the graph and signal parts of the subgraph SGD loss
\begin{align*}
    L_1^{(M)}(\mU^{(\vn)},\mV^{(\vn)},\vr)&= \frac{\alpha(1+\Gamma)\mu}{M^2}\sum_{i,j=1}^M\sum_{k=1}^K(u_{n_i,k}r_kv_{n_j,k}-a_{n_i,n_j})^2 q_{n_i,n_j}
\end{align*}
\begin{align*}
    L_2^{(M)}(\mU^{(\vn)},\mV^{(\vn)}, \mF, \mB)&= \frac{\beta}{MD}\sum_{i=1}^M\sum_{d=1}^D\sum_{k=1}^K(u_{n_i,k}f_{k,d} + v_{n_i,k}b_{k,d} - x_{n_i,d})^2,
\end{align*}
where $L^{(M)} = N^2L_1^{(M)} + L_2^{(M)}$

Next, we extend the calculations of the gradients of $\mC=\mU\diag(\vr)\mV^\top$ and $\mP=\mU\mF+\mV\mB$  with respect to $\mU$, $\mV$, $\vr$, $\mF$ and $\mB$ in coordinates.
We have
\[\nabla_{f_{ k,l}}p_{i,d}=u_{i,k}\delta_{l-d},\]
\[\nabla_{b_{ k,l}}p_{i,d}=v_{i,k}\delta_{l-d},\]
\[\nabla_{u_{t,k}}p_{i,d}=f_{ k,d}\delta_{i-t},\]
and
\[\nabla_{v_{t,k}}p_{i,d}=b_{ k,d}\delta_{i-t}.\]
Hence,
\[\nabla_{r_k}L_1=\frac{\alpha(1+\Gamma)\mu}{N^2}\sum_{i,j=1}^N(c_{i,j}-a_{i,j})q_{i,j}u_{i,k}v_{j,k},\]
\[\nabla_{u_{t,k}}L_1=\frac{\alpha(1+\Gamma)\mu}{N^2}\sum_{j=1}^N(c_{t,j}-a_{t,j})q_{t,j}r_k v_{j,k} \quad
 \nabla_{u_{t,k}}L_2= \frac{\beta}{ND}\sum_{d=1}^D(p_{t,d}-x_{t,d})f_{k,d},\]
 \[\nabla_{v_{t,k}}L_1=\frac{\alpha(1+\Gamma)\mu}{N^2}\sum_{j=1}^N(c_{t,j}-a_{t,j})q_{t,j}r_k u_{j,k} 
 \quad \nabla_{v_{t,k}}L_2 = \frac{\beta}{ND}\sum_{d=1}^D(p_{t,d}-x_{t,d})b_{k,d},\]
 \[\nabla_{f_{ k,d}}L_2=\frac{\beta}{ND}\sum_{i=1}^N(p_{i,d}-x_{i,d})u_{i,k},\]
and
\[\nabla_{b_{ k,d}}L_2=\frac{\beta}{ND}\sum_{i=1}^N(p_{i,d}-x_{i,d})v_{i,k}.\]
Similarly
\[\nabla_{r_k}L^{(M)}_1=\frac{\alpha(1+\Gamma)\mu}{M^2}\sum_{i,j=1}^M(c_{n_i,n_j}-a_{n_i,n_j})q_{n_i,n_j}u_{n_i,k}v_{n_j,k},\]
\[
    \nabla_{u_{n_m,k}}L^{(M)}_1= \frac{\alpha(1+\Gamma)\mu}{M^2}\sum_{j=1}^M(c_{n_m,n_j}-a_{n_m,n_j})q_{n_m,n_j}r_k v_{n_j,k}\]
    \[\nabla_{u_{n_m,k}}L^{(M)}_2 = \frac{\beta}{MD}\sum_{d=1}^D(p_{n_m,d}-x_{n_m,d})f_{k,d},\]
\[
    \nabla_{v_{n_m,k}}L^{(M)}_1=  \frac{\alpha(1+\Gamma)\mu}{M^2}\sum_{j=1}^M(c_{n_m,n_j}-a_{n_m,n_j})q_{n_m,n_j}r_ku_{n_j,k} \]
    \[\nabla_{v_{n_m,k}}L^{(M)}_2= \frac{\beta}{MD}\sum_{d=1}^D(p_{n_m,d}-x_{n_m,d})b_{ k,d},\]
\[\nabla_{f_{ k,d}}L^{(M)}_2=\frac{\beta}{MD}\sum_{i=1}^M(p_{n_i,d}-x_{n_i,d})u_{n_i,k},\]
and
\[\nabla_{b_{ k,d}}L^{(M)}_2=\frac{\beta}{MD}\sum_{i=1}^M(p_{n_i,d}-x_{n_i,d})v_{n_i,k}.\]

The following Lemma provides an error bound between the sum of a 2D array of numbers and the sum of random points from the 2D array. We study the setting where one randomly (and independently) samples only points in a 1D axis, and the 2D random samples consist of all pairs of these samples. This results in dependent 2D samples, but still, one can prove a Monte Carlo-type error bound in this situation. The Lemma is similar to Lemma E.3. in \citet{finkelshtein2024learning}, generalized to non-symmetric matrices. 
\begin{lemma}
    \label{lem:MC2}
    Let $\{i_m\}_{m=1}^M$ be uniform i.i.d in $[N]$.  
    Let $\mA\in\sR^{N\times N}$ with
    $a_{i,j}\in[-1,1]$. Then, for every $0<p<1$, 
    in probability more than $1-p$
    \[\abs{\frac{1}{N^2}\sum_{j=1}^N\sum_{n=1}^N a_{j,n} - \frac{1}{M^2}\sum_{m=1}^M\sum_{l=1}^M a_{i_m,i_l}} \leq 2\sqrt{\frac{2\log(1/p)+2\log(2N)+2\log(2)}{M}}.\]
\end{lemma}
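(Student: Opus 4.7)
The plan is to reduce the 2D Monte Carlo error to two 1D Monte Carlo errors by inserting an intermediate quantity, then control each piece with Lemma \ref{lem:dev_by_z} (Hoeffding applied to a bounded i.i.d.\ sample), using a union bound over $[N]$ where necessary. Concretely, I would introduce the hybrid average $\frac{1}{MN}\sum_{m=1}^M\sum_{n=1}^N a_{i_m,n}$ and write the decomposition
\[
\frac{1}{M^2}\sum_{m,l}a_{i_m,i_l} - \frac{1}{N^2}\sum_{j,n}a_{j,n}
= \underbrace{\frac{1}{M}\sum_m\!\left(\tfrac{1}{M}\sum_l a_{i_m,i_l} - \tfrac{1}{N}\sum_n a_{i_m,n}\right)}_{=:\,\mathrm{I}} \;+\; \underbrace{\left(\tfrac{1}{M}\sum_m \bar a_{i_m} - \tfrac{1}{N}\sum_j \bar a_j\right)}_{=:\,\mathrm{II}},
\]
where $\bar a_j := \frac{1}{N}\sum_n a_{j,n} \in [-1,1]$.

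Term $\mathrm{II}$ is a textbook 1D Monte Carlo average of the bounded values $\{\bar a_j\}$ over the i.i.d.\ sample $\{i_m\}$, so Lemma \ref{lem:dev_by_z} applied with failure probability $p/2$ immediately yields $|\mathrm{II}| \leq \sqrt{(2\log(2/p)+2\log 2)/M}$ with probability at least $1-p/2$. For term $\mathrm{I}$, the subtlety is that the row indices $\{i_m\}$ and column indices $\{i_l\}$ are the same random sample, so I cannot simply treat the inner deviation as a fresh 1D problem for a random row. Instead, I would first bound
\[
|\mathrm{I}| \;\leq\; \max_{j \in [N]} \abs{\tfrac{1}{M}\sum_l a_{j,i_l} - \tfrac{1}{N}\sum_n a_{j,n}},
\]
which replaces the random row by a deterministic one and thereby decouples it from the column sample. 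For each fixed $j$, Lemma \ref{lem:dev_by_z} with failure probability $p/(2N)$ gives a bound of $\sqrt{(2\log(2N/p)+2\log 2)/M}$; a union bound over the $N$ choices of $j$ yields the same bound uniformly, with probability at least $1-p/2$.

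Finally, a union bound combines the two good events with total failure probability at most $p$, and bounding the sum of the two square roots by twice the larger one produces
\[
\abs{\tfrac{1}{M^2}\sum_{m,l}a_{i_m,i_l} - \tfrac{1}{N^2}\sum_{j,n}a_{j,n}} \;\leq\; 2\sqrt{\frac{2\log(2N/p) + 2\log 2}{M}} \;=\; 2\sqrt{\frac{2\log(1/p)+2\log(2N)+2\log 2}{M}},
\]
which is exactly the claimed inequality. The main obstacle is precisely the dependence between rows and columns noted above; the worst-case-over-$j$ argument is what sidesteps it, and the price paid is the extra $\log(2N)$ inside the square root, which is the only term in the bound that is not purely 1D.
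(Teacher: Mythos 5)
Your proof is correct and follows essentially the same route as the paper: you insert the same hybrid average $\frac{1}{MN}\sum_{m,n}a_{i_m,n}$, handle the dependence between row and column samples by a uniform-over-all-rows Hoeffding bound with a union bound over $[N]$ (which is where the $\log(2N)$ comes from), and combine the two pieces by the triangle inequality. The only (cosmetic) difference is that for the outer piece you apply a single Hoeffding bound to the row averages $\bar a_j$, whereas the paper averages $N$ column-wise bounds; both yield the stated constant.
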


\begin{proof}
Let $0<p<1$. For each fixed $n\in[N]$, consider the independent random variables $Y^n_m=a_{i_m,n}$, with 
\[\mathbb{E}(Y^n_m)=\frac{1}{N}\sum_{j=1}^N a_{j,n}\]
and $-1\leq Y_m\leq 1$. \\
Similarly define the independent random variables $W^n_m=a_{n,i_m}$ with 
\[\mathbb{E}(W^n_m)=\frac{1}{N}\sum_{j=1}^N a_{n,j}.\] \\
By Hoeffding's Inequality,
for $k=\sqrt{\frac{2\log(1/p)+2\log(2N)+2\log(2)}{M}}$, we have
\[\abs{\frac{1}{N}\sum_{j=1}^N a_{j,n} - \frac{1}{M}\sum_{m=1}^M a_{i_m,n}} \leq k\] and
\[\abs{\frac{1}{N}\sum_{j=1}^N a_{n,j} - \frac{1}{M}\sum_{m=1}^M a_{n,i_m}} \leq k\]
in the event $\gE^Y_n$ and $\gE^W_n$ of probability  more than $1-p/2N$. Intersecting the events $\{\gE^Y_n\}_{n=1}^N$ and $\{\gE^W_n\}_{n=1}^N$, we get $\forall n\in[N]:$
\[\abs{\frac{1}{N}\sum_{j=1}^N a_{j,n} - \frac{1}{M}\sum_{m=1}^M a_{i_m,n}} \leq k\] and 
\[\abs{\frac{1}{N}\sum_{j=1}^N a_{n,j} - \frac{1}{M}\sum_{m=1}^M a_{n,i_m}} \leq k\]

in the event $\gE=\cap_n\gE^Y_n\cap_n\gE^W_n$ with probability at least $1-p$. The rows and columns of $\mA$ are not independent, meaning the probability of their intersection is at least $1 - p$ and by the triangle inequality, we also have in the event $\gE$
\[\abs{\frac{1}{NM}\sum_{l=1}^M\sum_{j=1}^N a_{i_l,j} - \frac{1}{M^2}\sum_{l=1}^M\sum_{m=1}^M a_{i_l,i_m}} \leq k,\]
and
\[\abs{\frac{1}{N^2}\sum_{n=1}^N\sum_{j=1}^N a_{j,n} - \frac{1}{NM}\sum_{n=1}^N\sum_{m=1}^M a_{i_m,n}} \leq k.\]\\

Hence, by the triangle inequality,
\[\abs{\frac{1}{N^2}\sum_{n=1}^N\sum_{j=1}^N a_{j,n} - \frac{1}{M^2}\sum_{l=1}^M\sum_{m=1}^M a_{i_m,i_l}} \leq 2k.\]
\end{proof}

We now derive bounds on the approximation errors for the gradients of $L$. 
Note that the gradients of the SGD loss with respect to each element of the \ibg are
\[\nabla_{u_{n_m,k}}L^{(M)} = N^2\nabla_{u_{n_m,k}}L^{(M)} + \nabla_{u_{n_m,k}}L^{(M)}_2\]
\[\nabla_{v_{n_m,k}}L^{(M)} = N^2\nabla_{v_{n_m,k}}L^{(M)}_1 + \nabla_{v_{n_m,k}}L^{(M)}_2\]
\[\nabla_{r_k}L^{(M)} = N^2\nabla_{r_k}L^{(M)}_1 \]
\[\nabla_{b_{ k,m}}L^{(M)} = \nabla_{b_{ k,m}}L^{(M)}_2\]
\[\nabla_{f_{ k,m}}L^{(M)} = \nabla_{f_{ k,m}}L^{(M)}_2\]
We use Lemmas \ref{lem:dev_by_z} and \ref{lem:MC2}. These results are applicable in our setting because all entries of the relevant matrices and vectors, $\mA$, $\mX$, $\mC$, $\mP$, $\mU$, $\mV$, $\vr$, $\mF$, and $\mB$ are bounded in $[-1,1]$.

Specifically, for any $0<p_1<1$, for every $k\in[K]$ there is an event $\gA_k$ of probability at least $1-p_1$ such that
\[\abs{\nabla_{r_k}L-\nabla_{r_k}L^{(M)}} \leq 4\alpha(1+\Gamma)\mu N^2\sqrt{\frac{2\log(1/p_1)+2\log(2N)+2\log(2)}{M}}.\]

Moreover, for every $k\in[K]$ and $l\in[D]$, and every $0<p_2<1$ there is an event $\gC_{k,j}$  of probability at least $1-p_2$ such that
\[\abs{\nabla_{f_{ k,m}}L-\nabla_{f_{ k,m}}L^{(M)}} \leq \frac{4\beta}{D}\sqrt{\frac{2\log(1/p_2)+2\log(2)}{M}}.\]
Similarly, for every $0<p_3<1$ there is an event $\gD_{k,m}$  of probability at least $1-p_3$ such that
\[\abs{\nabla_{b_{ k,m}}L-\nabla_{b_{ k,m}}L^{(M)}} \leq \frac{4\beta}{D}\sqrt{\frac{2\log(1/p_3)+2\log(2)}{M}}\]
For the approximation analysis of $\nabla_{u_{n_m,l}}L$ and $\nabla_{v_{n_m,l}}L$, note that the index $n_i$ is random, so we derive a uniform convergence analysis for all possible values of $n_i$. For that, for every $n\in[N]$ and $k\in[K]$, define the vectors
\[
    \widetilde{\nabla_{u_{n,k}}L^{(M)}_1}= \frac{\alpha(1+\Gamma)\mu}{M^2}\sum_{j=1}^M(c_{n,n_j}-a_{n,n_j})q_{n,n_j}r_k v_{n_j,k}\]
    \[\widetilde{\nabla_{u_{n,k}}L^{(M)}_2} = \frac{\beta}{MD}\sum_{d=1}^D(p_{n,d}-x_{n,d})f_{k,d},\]
and
\[
    \widetilde{\nabla_{v_{n,k}}L^{(M)}_1}=  \frac{\alpha(1+\Gamma)\mu}{M^2}\sum_{j=1}^M(c_{n,n_j}-a_{n,n_j})q_{n,n_j}r_k u_{n_j,k}\]
    \[\widetilde{\nabla_{v_{n,k}}L^{(M)}_2}= \frac{\beta}{MD}\sum_{d=1}^D(p_{n,d}-x_{n,d})b_{k,d}.
\]
Note that $\widetilde{\nabla_{u_{n,k}}L^{(M)}}$ and $\widetilde{\nabla_{v_{n,k}}L^{(M)}}$ are not gradients of $L^{(M)}$ (since if $n$ is not a sample from $\{n_i\}$ the gradient must be zero), but are denoted with $\nabla$ for their structural similarity to $\nabla_{u_{n_m,k}}L^{(M)}$ and $\nabla_{v_{n_m,k}}L^{(M)}$.
However, we get for every $m\in[M]$
\[
\nabla_{u_{n_m,k}}L_2=\widetilde{\nabla_{u_{n_m,k}}L^{(M)}_2}\]
and
\[
\nabla_{v_{n_m,k}}L_2=\widetilde{\nabla_{v_{n_m,k}}L^{(M)}_2}
\]
Hence, for every $m\in[M]$, we have
\[
    \nabla_{u_{n,k}}L^{(M)} = N^2\widetilde{\nabla_{u_{n,k}}L^{(M)}_1}
\]
and
\[
    \nabla_{v_{n,k}}L^{(M)} = N^2\widetilde{\nabla_{v_{n,k}}L^{(M)}_1}.
\]
  Let $0<p_4<1$. By Lemma \ref{lem:dev_by_z}, for every $k\in[K]$ there is an event $\gU_k$ of probability at least $1-p_4$ such that for every $n\in[N]$
\[\abs{\nabla_{u_{n,k}}L_1-\frac{M}{N}\widetilde{\nabla_{u_{n,k}}L^{(M)}_1}} \leq \frac{\alpha(1+\Gamma)\mu}{N}\sqrt{\frac{2\log(1/p_4)+2\log(N) +2\log(2)}{M}},\]
Similarly, for $0<p_5<1$ for every $k\in[K]$ there is an event $\gV_k$ of probability at least $1-p_5$ such that for every $n\in[N]$
\[\abs{\nabla_{v_{n,k}}L_1-\frac{M}{N}\widetilde{\nabla_{v_{n,k}}L^{(M)}_1}} \leq \frac{\alpha(1+\Gamma)\mu}{N}\sqrt{\frac{2\log(1/p_5)+2\log(N) +2\log(2)}{M}},\]
This means that in the event $\gU_k$, for every $m\in[M]$ we have
\[\abs{\nabla_{u_{n_m,k}}L-\frac{M}{N}\nabla_{u_{n_m,k}}L^{(M)}} \leq \alpha(1+\Gamma)\mu{N}\sqrt{\frac{2\log(1/p_4)+2\log(N) +2\log(2)}{M}},\]
and in the event 
$\gV_k$, for every $m\in[M]$ we have
\[\abs{\nabla_{v_{n_m,k}}L-\frac{M}{N}\nabla_{v_{n_m,k}}L^{(M)}} \leq \alpha(1+\Gamma)\mu{N}\sqrt{\frac{2\log(1/p_5)+2\log(N) +2\log(2)}{M}},\]
Lastly, given $0<p<1$, choosing $p_1=p_4=p_5=p/5K$ and $p_2=p_3=p/5KD$ and intersecting all events for all coordinates gives in the event $\gE$ of probability at least $1-p$
\[\abs{\nabla_{r_k}L-\nabla_{r_k}L^{(M)}} \leq 4\alpha(1+\Gamma)\mu N^2\sqrt{\frac{2\log(1/p)+2\log(N)+2\log(K)+2\log(10)}{M}}\]
\[\abs{\nabla_{f_{ k,m}}L-\nabla_{f_{ k,m}}L^{(M)}} \leq \frac{2\beta}{D}\sqrt{\frac{2\log(1/p)+2\log(K)+2\log(D)+2\log(10)}{M}},\]
\[\abs{\nabla_{b_{ k,m}}L-\nabla_{b_{ k,m}}L^{(M)}} \leq \frac{2\beta}{D}\sqrt{\frac{2\log(1/p)+2\log(K)+2\log(D)+2\log(10)}{M}},\]
\[\abs{\nabla_{u_{n_m,k}}L-\frac{M}{N}\nabla_{u_{n_m,k}}L^{(M)}} \leq \alpha(1+\Gamma)\mu N\sqrt{\frac{2\log(1/p)+2\log(N) +2\log(K)+2\log(10)}{M}},\]
and
\[\abs{\nabla_{v_{n_m,k}}L-\frac{M}{N}\nabla_{v_{n_m,k}}L^{(M)}} \leq \alpha(1+\Gamma)\mu N\sqrt{\frac{2\log(1/p)+2\log(N) +2\log(K) +2\log(10)}{M}},\]

thus proving \cref{prop:SGD}.

\subsection{Comparison of Node sampling and Diode sampling}

Propositions \ref{prop:SGD_new} and \ref{prop:SGD} both provide a probabilistic bound on the difference between the gradients calculated using subgraph SGD and the standard loss (\ref{eq:dir_inefficient_loss}) using the full graph. Despite this similarity, these approces differ in their sampling schemes and computational tradeoffs.

For \cref{prop:SGD}, unlike \cref{prop:SGD_new}, only a subset of entries from $\mU$ and $\mV$ are involved in the loss computation per step.
This introduces a time–memory tradeoff: since only a subset of $\mU$ and $\mV$ is updated in each step, the number of iterations required grows by a factor of $N/M$ in expectation, and the memory consumption reduces by $M/N$, where $M$ denotes the number of sampled nodes. This mild growth in runtime is acceptable, as the number of iterations only grows by a linear scale, while each iteration becomes faster to complete due to the reduced size of the graph.

For \cref{prop:SGD_new} we sample the diodes of the graph directly. Therefore potentially all nodes of the graph are involved in the calculation of the loss. Using this method, the time and memory complexity of calculating the loss becomes $\gO(MK)$, where $M$ is the number of sampled edges.

Notably, the bounds in \cref{prop:SGD_new} are independent of the graph size, while the bounds in \cref{prop:SGD} improve as the graph becomes more dense. However, in practice, we see that both diode sampling SGD and node sampling SGD provide good \ibg approximations that work well for \ibgnn.

\section{Knowledge Graphs}
\label{app:kg}

\subsection{Basic definitions}
\textbf{Knowledge graph signals.} We consider knowledge graphs $G = (\gN, \gE, \gR)$, where $\gN, \gE$ and $\gR$ represent the set of $N$ nodes, the set of $E$ typed edges, where $\gE\subseteq \gR\times \gN\times\gN$, and the set of $R$ relations, correspondingly. Note that in this case there is no signal (i.e. feature matrix).
We represent the graphs by an adjacency tensor $\mT=(t_{i,j,r})_{i,j,r=1}^{N,N,R}\in\sR^{N\times N\times R}$.
\textbf{Frobenius norm.} The \emph{weighted Frobenius norm} of a tensor where its two first dimension are equal  $\mD\in\sR^{N\times N \times R}$ with respect to the \emph{weight} $\mQ\in(0,\infty)^{N\times N \times R}$  is defined to be
$\QFnorm{\mD}:=\sqrt{\frac{1}{\sum_{i,j,r=1}^{N,N,R} q_{i,j,r}}\sum_{i,j,r=1}^{N,N,R} d_{i,j,r}^2q_{i,j,r}}$.  

\textbf{Cut-norm.} The weighted \emph{tensor cut norm} of $\mD\in\sR^{N\times N \times R}$ with weights $\mQ\in(0,\infty)^{N\times N \times R}$, is defined to be
\begin{equation*}
    \Qcutnorm{\mD} = \frac{1}{\sum_{i,j,r}q_{i,j,r}}\sum_{r=1}^R \max_{\gU,\gV \subset [N]} \Big| \sum_{i \in \gU} \sum_{j \in \gV} d_{i,j,r}q_{i,j,r} \Big|.
\end{equation*}
and the definition for the densifying cut similarity follows.

\subsection{Approximations by intersecting blocks}
We define an \emph{Intersecting Block Graph Embedding (\ibgE)} with $K$ classes ($K$-\ibgE) as a low rank knowledge graph $\mC\in\sR^{N\times N \times R}$ with an adjacency tensor given respectively by 
\begin{equation*}
    \mC_{:,:,\rho} = \sum_{j=1}^K r_{j,\rho} \boldsymbol{\mathbbm{1}}_{\gU_j}\boldsymbol{\mathbbm{1}}_{\gV_j}^\top
\end{equation*}

where $\vr_{j,:}\in\sR^R$, and $\gU_j,\gV_j \subset[N]$.
We relax the $\{0,1\}$-valued hard source/target community affiliation functions $\boldsymbol{\mathbbm{1}}_{\gS}, \boldsymbol{\mathbbm{1}}_{\gT}$ to soft affiliation functions in $\sR$ to allow differentiability.

\begin{definition}
    Let $d\in\mathbb{N}$, and let $\gZ$ be a soft affiliation model. We define $[\gZ]\subset\mathbb{R}^{N\times N\times R}$ to be the set of all elements 
    of the form $\vu\vv^\top \otimes \vm$, with $\vu,\vv\in\sQ$ and $\vm\in\sR^R$. We call $[\gZ]$ the \emph{soft rank-1 intersecting block graph embedding (\ibgE) model} corresponding to $\mathcal{Z}$.
    Given $K\in\sN$, the subset $[\sQ]_K$ of $\sR^{N\times N\times R}$ of all linear combinations of $K$ elements of $[\mathcal{Z}]$ is called the \emph{soft rank-$K$ \ibgE model} corresponding to $\mathcal{Z}$. 
\end{definition}

In matrix form, an \ibgE $\mC\in \sR^{N\times N\times R}$ in $[\mathcal{Z}]_K$ can be represented by a triplet of \emph{source community affiliation matrix} $\mV\in\sR^{N\times K}$, \emph{target community affiliation matrix}  $\mU\in\sR^{N\times K}$, and \emph{community relations matrix} $\mM\in\sR^{K\times R}$, that satisfies:
\[\mC_{:,:,\rho}= \mU\diag(\mM_{:,\rho})\mV^\top\]
where $\rho\in [R]$.

\subsection{Fitting \ibgs to knowledge graphs} 
Given a knowledge graph $G = (\gN, \gE, \gR)$ with $N$ nodes and $R$ relations and an adjacency tensor $\mT\in\sR^{N\times N\times R}$ -- a true triple is defined as $(\eta, \rho, \tau)$, where $\eta,\tau\in\gN$ and $\rho\in [R]$. We define the following soft rank-$K$ intersecting block score function, based on the weighted Frobenius distance:
\begin{equation*}
    d(\eta, \rho, \tau) =\norm{\delta(\eta, \rho, \tau) -  \vu_{\eta,:}\diag(\vr_{:,\rho})\vv_{:,\tau}^\top}_{{\rm F}; \mQ_\mT}^2
\end{equation*}
where the weight matrix $\mQ_\mT$ is
\begin{equation*}
    \mQ_{\mT}=\mQ_{\mT,\Gamma}:= e_{E,\Gamma}\mathbf{1} + (1-e_{E,\Gamma})\max_{\rho\in [R]} \mT_{:,:, \rho},
\end{equation*}
$\delta(h,r,t)$ is $1$ if $(h,r,t)$ is a true triplet, otherwise $0$, $\mV_{\tau,:}\in\sR^K$ is the source community affiliation of the head entity, $\mU_{\eta,:}\in\sR^K$ is the target community affiliation of the tail entity, $\mR_{:,\rho}\in\sR^K$ is the community relation vector, and $e_{E,\Gamma} = \frac{\Gamma E/N^2}{1 - (E/N^2)}$ with $\Gamma>0$.

We minimize a margin-based loss function with negative sampling, similar to \cite{sun2019rotate}:
\begin{equation*}
    L=-\log \sigma\left(\gamma-d(\eta, \rho, \tau)\right)-\sum_{i=1}^\zeta \frac{1}{\zeta} \log \sigma\left(d\left(\eta_i^{\prime}, \rho_i^{\prime}, \tau_i^{\prime}\right)-\gamma\right)
\end{equation*}
where $\gamma$ is a fixed margin, $\sigma$ is the sigmoid function, $\zeta$ is the number of negative samples and $\left(\eta_i^{\prime}, \rho_i^{\prime}, \tau_i^{\prime}\right)$ is the $i$-th negative triplet.
An empirical validation of \ibgE is provided in \cref{app:exp_kg}.

\section{Comparison of IBGs to ICGs}

\label{app:ibg_vs_icg}
In this section we highlight the key advancements and distinctions between our method and its predecessor \citep{finkelshtein2024learning}.

\textbf{Approximation of directed graphs.} Graph directionality is crucial for accurately modeling real-world systems where relationships between entities are inherently asymmetric. Many applications rely on directed graphs to capture the flow of information, influence, or dependencies, and ignoring directionality can lead to the loss of critical structural information. One domain which benefits from directionality is spatiotemporal graphs -- graphs where the topology is constant over time but the signal varies. For example, in traffic networks \cite{li2017diffusion}, where directionality represents the movement of vehicles along roads, traffic congestion in one direction does not necessarily imply congestion in the opposite direction. Thus, ignoring directionality can lead to inaccurate predictions. Another domain is  citation networks, where nodes represent academic papers, and directed edges represent citations from one paper to another. The concrete task of prediction of the publication year is highly depedant on the direction of each edge due to the causal nature of the relation between two papers \citep{rossi2023edgedirectionalityimproveslearning}. Many more additional domains benefit from directionality, some of which are Email Networks, Knowledge Graphs and Social Networks \citep{Fang2023RelationawareGC, Wang2020Edge2vec}.

An \icg takes on the form
\[\mC=\mQ diag(\vr)\mQ^T, \mP=\mQ\mF,\]
Where $Q$ is the community affiliation matrix, and $F$ is the community feature matrix.
Here, $Q$ is used as both the source and the target affiliation matrix, making the ICG a symmetric, undirected graph. The community affiliation matrix $Q$ also acts as a mapping from the community space to the node space, converting the community feature matrix $F$ to node features.

An \ibg takes the form
\[\mC=\mU diag(\vr)\mV^T, \mP=\mU\mF+\mV\mB.\]
IBG generalizes the ICG by using a different affiliation matrix for source nodes $V$, and target nodes $U$, enabling approximation of directed graphs. The use of source and target community affiliation matrices naturally leads to the use of two community feature matrices. The source community feature matrix, $B$, and the target community feature matrix, $F$. Over undirected graphs, the IBG representation converges to the ICG representation.

In our analysis of \ibgs, we adopt the simplifying assumption that the approximated graph is unweighted, whereas \icg assumes a weighted approximated graph. This assumption is introduced to provide a clearer derivation of our theoretical guarantees; however, it can be relaxed using the same line of derivation with straightforward modifications to account for edge weights,

\textbf{Densifying the adjacency matrix.} The major caveat of \icgnns is their limited applicability to sparse graphs. Both their theoretical guarantees and approximation capabilities weaken as sparsity increases, with the approximation error of \icgs being $\gO(N/(EK)^{1/2})$. This issue stems from the imbalance between the number of edges and non-edges in sparse graphs. As graphs grow sparser, non-edges dominate, and since standard metrics assign equal weight to edges and non-edges, the approximation shifts from capturing the relational information that we aim to capture -- to capturing the absence of relations. To address this limitation, we introduce the densifying cut similarity, a novel similarity measure that explicitly accounts for the structural imbalance in sparse graphs. This similarity measure enables \ibgnns to efficiently learn a densified representation of sparse graphs, achieving an approximation error of
$\gO(K^{-1/2})$ for both sparse and dense graphs while preserving the relational information.
We emphasize that, unlike \icgs, whose approximation quality is measured with respect to the cut norm, \ibgs are evaluated based on the densifying cut similarity.

\textbf{Architecture and empirical performance.} At first glance, \icgnns and \ibgnns may appear conceptually similar, as both follow a two-step process: first estimating the graph approximation, followed by processing through a neural network architecture. Both also support a subgraph stochastic gradient descent (SGD) method. However, the fine-grained details of their implementations differ significantly.

In the graph approximation stage, \ibgnns extend the capabilities of \icgnns in two ways. They can approximate directed graphs, and more importantly they minimize a weighted Frobenius norm, where edges and non-edges are assigned different weights. This contrasts with \icgnns, which minimize the standard Frobenius norm. This critical difference enables \ibgnns to handle sparse graphs more effectively, as demonstrated in \cref{app:lemma_validation}. In the neural network stage, \ibgnns offer greater architectural flexibility. While \icgnns operate on a single community signal, \ibgnns incorporate two community signals (source and target) when mapping back to the node space. Although we simplify the architecture by using a basic addition operation, more sophisticated manipulations could be employed. These architectural and methodological advancements result in \ibgnns' empirical superiority across various domains. Specifically, \ibgnns outperform \icgnns in node classification (see \cref{subsec:node_classification}), spatiotemporal property prediction (see \cref{app:spatio_temporal}), subgraph SGD on large graphs (see \cref{subsec:exp_sgd,subsec:node_sgd_exp}), and efficiency in the number of communities used (see \cref{app:exp_num_communities}). This broad dominance underscores the effectiveness of \ibgnns in addressing the limitations of \icgnns while delivering state-of-the-art performance. Lastly, we note that \ibgnns uses DeepSets within each layer. We include this component because it improves performance in practice. For fairness in our experiments, we also evaluated an \icgnn variant using DeepSets and observed no additional gains.

\section{Complexity comparison of IBG-NN and MPNNs}
\label{Complexity comparison}
Our \ibg-NN architecture takes $O(D(NK +KD+ND))$ operations at each layer. For comparison, simple MPNNs such as GCN and GIN compute messages using only the features of the nodes, with computational complexity $\gO(ED + ND^2)$. More general message-passing layers which apply an MLP to the concatenated features of the node pairs along each edge have a complexity of $\gO(ED^2)$. Consequently, \ibg neural networks are more computationally efficient than MPNNs when $K < D \mathrm{d}$, where $\mathrm{d}$ denotes the average node degree, and more efficient than simplified MPNNs like GCN when $K < \mathrm{d}$.

\section{Additional implementation details on IBG-NN}
\label{Additional details on IBG-NN}

Let us recall the update equation of an \ibgnn for layers $1 \leq \ell \leq L-1$
\[\mH_s^\steplplus = \sigma\left(\Theta_1^s\left(\mH_s^\stepl\right) + \Theta_2^s\left(\mV\mB^\stepl\right)\right),\]
\[\mH_t^\steplplus = \sigma\left(\Theta_1^t\left(\mH_t^\stepl\right) + \Theta_2^t\left(\mU\mF^\stepl\right)\right),\]
And finally for layer $L$:
\[\mH^{(L)} = \mH_s^{(L)} + \mH_t^{(L)},\]

In each layer, the learned functions $\Theta_1^s$ and $\Theta_1^t$ are applied to the previous node representations $\mH_s^\stepl$and $\mH_t^\stepl$ seperately, while $\Theta_2^s$ and $\Theta_2^t$ are applied to the post-analysis source community features $\mV\mB^\stepl$ and the post-analysis target community features $\mU\mB^\stepl$, respectively. To simplify, we restrict the aforementioned architecture of the learned function to linear layers and a pooling operation (e.g. DeepSets \citep{zaheer2017deep}), interleaving with a ReLU activation function. An example of an \ibgnn linear layer with mean pooling is:
\[\Theta_1\left(\mH^\stepl\right) = \mH^\stepl\mW^\stepl_{1} + \frac{1}{N}\mathbf{1}\mathbf{1}^\top\mH^\stepl\mW^\stepl_{2}\]
with $\mW^\stepl_{1}, \mW^\stepl_{2}\in\sR^{D^\stepl\times D^\steplplus}$ being learnable weight matrices.

\section{Additional experiments}
\label{app:additional_experiments}

\subsection{Analysis on varying domains}

\subsubsection{Spatio-temporal graphs}
\label{app:spatio_temporal}

\paragraph{Setup.} We evaluate \ibgnn on the real world traffic-network datasets METR-LA and PEMS-BAY \citep{li2017diffusion}.
We report the baselines DCRNN \citep{li2017diffusion}, GraphWaveNet \citep{Wu2019GraphWF}, AGCRN \citep{bai2020adaptive}, T\&S-IMP, TTS-IMP, T\&S-AMP, and TTS-AMP \citep{cini2024taming}, and \icgnn \citep{finkelshtein2024learning} all taken from \cite{finkelshtein2024learning}. We follow the methodology of \cite{cini2024taming}, segmenting the datasets into windows of time steps, and training the model to predict the subsequent 12 observations. Each window is divided sequentially into train, validation, and test using a  $70\%/10\%/20\%$ split.
We report mean average error and standard deviation over 5 different seeds. Finally, we use a GRU to embed the data before using it as input for the \ibgnn model.

\paragraph{Results.} \cref{tab:spatio} demonstrates that \ibgnns suppresses \icgnns by a small margin. This slight difference could be attributed to the small size of the graph (207 and 325 nodes), where local interactions are likely sufficient for the task, and the ability of \ibgnns to capture global structure becomes less relevant. This raises questions about the role of directionality in traffic networks, suggesting the need for further investigation. Notably, \ibgnns show strong performance in another domain, matching the effectiveness of methods specifically designed for spatio-temporal data, such as DCRNN, GraphWaveNet, and AGCRN, despite the small graph size and low edge density.

\begin{table}[t]
\caption{Results on temporal
graphs. Top three models are colored by \red{First}, \blue{Second}, \gray{Third}.}
\label{tab:spatio}
\begin{center}
\begin{tabular}{lccr}
\toprule
Model & METR-LA & PEMS-BAY \\
\# nodes & 207 &  325\\
\# edges & 1515 & 2369\\
Avg. degree & 7.32 & 7.29\\
Metrics & MAE & MAE \\
\midrule
    DCRNN & 3.22 \stdfont{$\pm$ 0.01} & 1.64 \stdfont{$\pm$ 0.00} \\
    GraphWaveNet & \red{3.05} \stdfont{$\pm$ 0.03} & \blue{1.56} \stdfont{$\pm$ 0.01} \\
    AGCRN & 3.16 \stdfont{$\pm$ 0.01} & 1.61 \stdfont{$\pm$ 0.00} \\
    T\&S-IMP & 3.35 \stdfont{$\pm$ 0.01} & 1.70 \stdfont{$\pm$ 0.01} \\
    TTS-IMP & 3.34 \stdfont{$\pm$ 0.01} & 1.72 \stdfont{$\pm$ 0.00} \\
    T\&S-AMP & 3.22 \stdfont{$\pm$ 0.02} & 1.65 \stdfont{$\pm$ 0.00} \\
    TTS-AMP & 3.24 \stdfont{$\pm$ 0.01} & 1.66 \stdfont{$\pm$ 0.00} \\
\midrule
\icgnn   & \gray{3.12} \stdfont{$\pm$ 0.01} & \blue{1.56} \stdfont{$\pm$ 0.00}\\
\textbf{\ibgnn}   & \blue{3.10} \stdfont{$\pm$ 0.01} & \red{1.55} \stdfont{$\pm$ 0.00}\\
\bottomrule
\end{tabular}
\end{center}
\end{table}

\subsubsection{Knowledge graphs}
\label{app:exp_kg}

\begin{table}
    \centering
    \caption{Comparison with KG completion methods. Top three models are colored by \red{First}, \blue{Second}, \gray{Third}.}
    \begin{tabular}{l|l|ccc|cccc|}
        \toprule
        \multicolumn{1}{l|}{Method} & \multicolumn{1}{|l|}{Model} & \multicolumn{3}{c|}{Kinship} & \multicolumn{3}{c}{UMLS} \\
         & &  MRR & Hit@1 & Hit@10 & MRR & Hit@1 & Hit@10 \\
        \midrule
        & TransE & 0.31 & 0.9 & 84.1 & 0.69 &  52.3 & 89.7 \\
        & DistMult & 0.35 & 18.9 & 75.5 & 0.39 & 25.6 & 66.9 \\
        KGE & ComplEx & 0.42 & 24.2 & 81.2 & 0.41 & 27.3 & 70.0 \\
    & RotatE & \blue{0.65} & \blue{50.4} & \blue{93.2} & \gray{0.74} & \gray{63.6} & \gray{93.9} \\
        & \textbf{\ibgE} & \red{0.69}& \red{55.0} & \red{95.3} & \red{0.82} & \red{71.5} & \red{96.3}  \\
        \midrule
        & Neural-LP & 0.30 & 16.7 & 59.6 & 0.48 & 33.2 & 77.5 \\
        & DRUM & 0.33 & 18.2 & 67.5 & 0.55 & 35.8 & 85.4 \\
        Rule Learning & RNNLogic & \gray{0.64} & \gray{49.5} & 92.4 & 0.75 & 63.0 & 92.4 \\
        & RLogic & 0.58 & 43.4 & 87.2 & 0.71 & 56.6 & 93.2 \\
        & NCRL & \gray{0.64} & 49.0 & \gray{92.9} & \blue{0.78} & \blue{65.9} & \blue{95.1} \\
        \bottomrule
    \end{tabular}   \label{tab:kg}
\end{table}

\paragraph{Setup.} We evaluate \ibgE on the the Kinship and UMLS \citep{kok2007statistical} datasets. We report the knowledge graph embedding baselines TransE \citep{bordes2013translating}, DistMult \citep{yang2015embeddingentitiesrelationslearning}, ConvE \citep{dettmers2018convolutional}, ComplEx \citep{trouillon2016complexembeddingssimplelink} and RotatE \citep{sun2019rotate}. We also report the rule learning baselines Neural-LP \citep{yang2017differentiablelearninglogicalrules}, RNNLogic \citep{qu2021rnnlogiclearninglogicrules}, RLogic \citep{cheng2022rlogic} and NCRL \citep{cheng2023neuralcompositionalrulelearning} to compare with the state-of-the-art models over these datasets. Results for all baselines were taken from \citep{cheng2023neuralcompositionalrulelearning}. 

\paragraph{Results.} \cref{tab:kg} shows that \ibgE achieves state of the art results across all datasets, solidifying \ibgs potential as a knowledge graph embedding method. This strong performance is expected, as \ibgE is both computationally efficient and expressive, capable of modeling arbitrary head-relation-tail triplets, making it particularly well-suited for modeling knowledge graphs.

\subsubsection{Subgraph SGD using node sampling}
\label{subsec:node_sgd_exp}
\textbf{Setup.} In \cref{subsec:exp_sgd} we test the diode sampling SGD method proposed in \cref{Learning ICG with subgraph SGD}. In this section we test the same experiment using the node sampling SGD method (see \cref{app:node_sgd}).
We follow the setup described in \cref{subsec:exp_sgd} for Flickr and Reddit, under condensation ratios $r=M/N$, where $N$ is the total number of nodes, and $M$ is the number of sampled nodes. For a condensation ratio of $100\%$, the competing methods correspond to standard GCN.

\textbf{Results.} Once again, \cref{tab:coarsen_node} shows \ibgnn using node sampling subgraph SGD outperforms all other coarsening and condensation methods, as well as its predecessor, \icgnn.
These results are consistent with our theoretical guarantees (see \cref{prop:SGD_new,prop:SGD}), which predict a low approximation error. This demonstrates that \ibgnns are capable of good approximations even while loading a small fraction of the graph into memory. Additionally, we observe that node-sampling and diode-sampling methods yield nearly identical empirical performance, although their approximation errors decay at different asymptotic rates (see \cref{prop:SGD_new,prop:SGD}), offering flexibility in choosing the sampling strategy based on practical considerations.

\begin{table*}[t]
\footnotesize
    \centering
    \caption{ Comparison of node-sampling subgraph SGD with coarsening methods across varying condensation ratios. Top three models are colored by \red{First}, \blue{Second}, \gray{Third}.}
    \footnotesize
    \begin{tabular}{l|ccc|ccc} 
        \toprule
        \multicolumn{1}{l|}{} & \multicolumn{3}{c|}{Flickr} & \multicolumn{3}{c}{Reddit} \\      
        Condensation ratio & 0.5\% & 1\% & 100\%& 0.1\% & 0.2\% & 100\%\\
        \midrule
        Coarsening & 44.5 \stdfont{$\pm$ 0.1} & 44.6 \stdfont{$\pm$ 0.1} &  47.2 \stdfont{$\pm$ 0.1} & 42.8 \stdfont{$\pm$ 0.8} & 47.4 \stdfont{$\pm$ 0.9} & 93.9 \stdfont{$\pm$ 0.0}\\
        Random & 44.0 \stdfont{$\pm$ 0.4} & 44.6 \stdfont{$\pm$ 0.2} &  47.2 \stdfont{$\pm$ 0.1}& 58.0 \stdfont{$\pm$ 2.2} & 66.3 \stdfont{$\pm$ 1.9} & 93.9 \stdfont{$\pm$ 0.0}\\
        Herding & 43.9 \stdfont{$\pm$ 0.9} & 44.4 \stdfont{$\pm$ 0.6} & 47.2 \stdfont{$\pm$ 0.1} & 62.7 \stdfont{$\pm$ 1.0} & 71.0 \stdfont{$\pm$ 1.6} & 93.9 \stdfont{$\pm$ 0.0}\\
        K-Center & 43.2 \stdfont{$\pm$ 0.1} & 44.1 \stdfont{$\pm$ 0.4} & 47.2 \stdfont{$\pm$ 0.1}& 53.0 \stdfont{$\pm$ 3.3} & 58.5 \stdfont{$\pm$ 2.1} & 93.9 \stdfont{$\pm$ 0.0} \\
        \midrule
        GCOND & \gray{47.1} \stdfont{$\pm$ 0.1} & \gray{47.1} \stdfont{$\pm$ 0.1} & 47.2 \stdfont{$\pm$ 0.1} & 89.6 \stdfont{$\pm$ 0.7} & 90.1 \stdfont{$\pm$ 0.5} & 93.9 \stdfont{$\pm$ 0.0}\\
        SFGC & 47.0 \stdfont{$\pm$ 0.1} & \gray{47.1} \stdfont{$\pm$ 0.1} & 47.2 \stdfont{$\pm$ 0.1} & \gray{90.0} \stdfont{$\pm$ 0.3} & 89.9 \stdfont{$\pm$ 0.4} & 93.9 \stdfont{$\pm$ 0.0}\\
        GC-SNTK & 46.8 \stdfont{$\pm$ 0.1} & 46.5 \stdfont{$\pm$ 0.2} & \gray{47.2} \stdfont{$\pm$ 0.1} & -- & -- & --  \\
        SimGC & 45.6 \stdfont{$\pm$ 0.4} & 43.8 \stdfont{$\pm$ 1.5} & \gray{47.2} \stdfont{$\pm$ 0.1} & \blue{91.1} \stdfont{$\pm$ 1.0} & \blue{92.0} \stdfont{$\pm$ 0.3} & \blue{93.9} \stdfont{$\pm$ 0.0} \\
        \midrule
        ICG-NN & \blue{50.1} \stdfont{$\pm$ 0.2} & \blue{50.8} \stdfont{$\pm$ 0.1} & \blue{52.7} \stdfont{$\pm$ 0.1} & 89.7 \stdfont{$\pm$ 1.3} & \gray{90.7} \stdfont{$\pm$ 1.5} & \gray{93.6} \stdfont{$\pm$ 1.2} \\
        \textbf{\ibgnn} & \red{50.7}\stdfont{$\pm$ 0.1} & \red{51.2} \stdfont{$\pm$ 0.2} & \red{53.0} \stdfont{$\pm$ 0.1} &  \red{92.3} \stdfont{$\pm$ 1.1} & \red{92.3} \stdfont{$\pm$ 0.6} & \red{94.1} \stdfont{$\pm$ 0.5} \\
        \bottomrule
    \end{tabular}    \label{tab:coarsen_node}
\end{table*}

\subsection{The effect of number of communities}
\label{app:num_communities}

\subsubsection{Performance}

\paragraph{Setup.} We evaluate \ibgnn and \icgnn on the non-sparse Squirrel and Chameleon graph \citep{pei2020geom}. We follow the 10 data splits of \citet{pei2020geom,li2022finding} for Squirrel and Chameleon reporting the accuracy and standard deviation.

\paragraph{Results.} In \cref{fig:num_communities}, \ibgnns exhibit significantly improved performance compared to \icgnns. For a small number of communities (10) on Squirrel, \ibgnns achieve 66\% accuracy, whereas \icgnns achieve only 45\%, further highlighting the superiority of \ibgnns, allowing it to reach competitive performance while being efficient.
The performance gap persists across both datasets, with \ibgnns continuing to improve as the number of communities increases, whereas \icgnns appear to plateau. This trend can be attributed to \ibgnns’ ability to capture directionality in the graph, allowing them to model increasingly fine-grained asymmetric structures that \icgnns cannot represent.
 
\vspace{0.3cm}
\label{app:exp_num_communities}
\begin{figure*}[h]
\centering
\begin{subfigure}{0.4\textwidth}
   \centering
   \includegraphics[width=\linewidth]{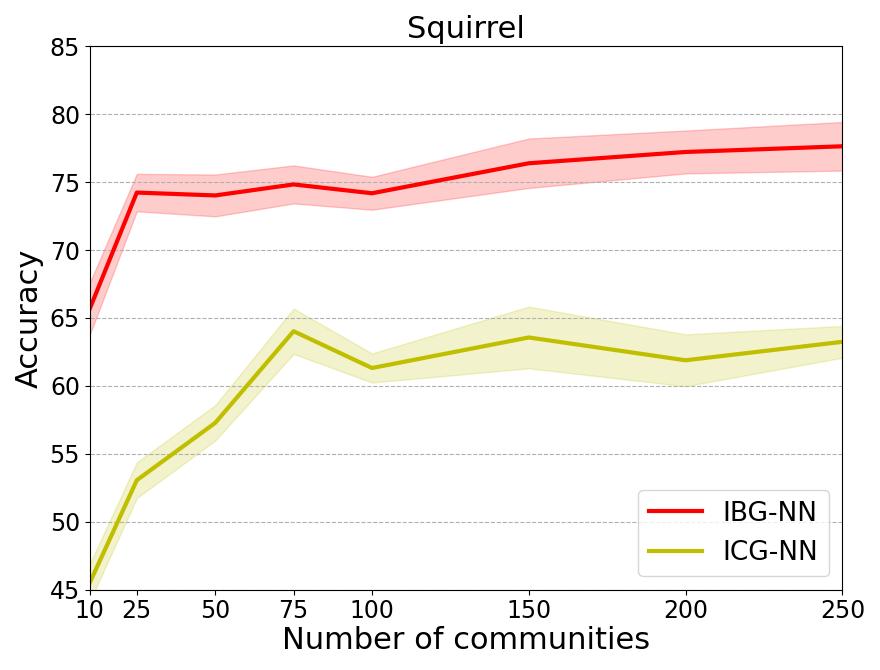}
\end{subfigure}
\begin{subfigure}{0.4\textwidth}
   \centering
   \includegraphics[width=\linewidth]{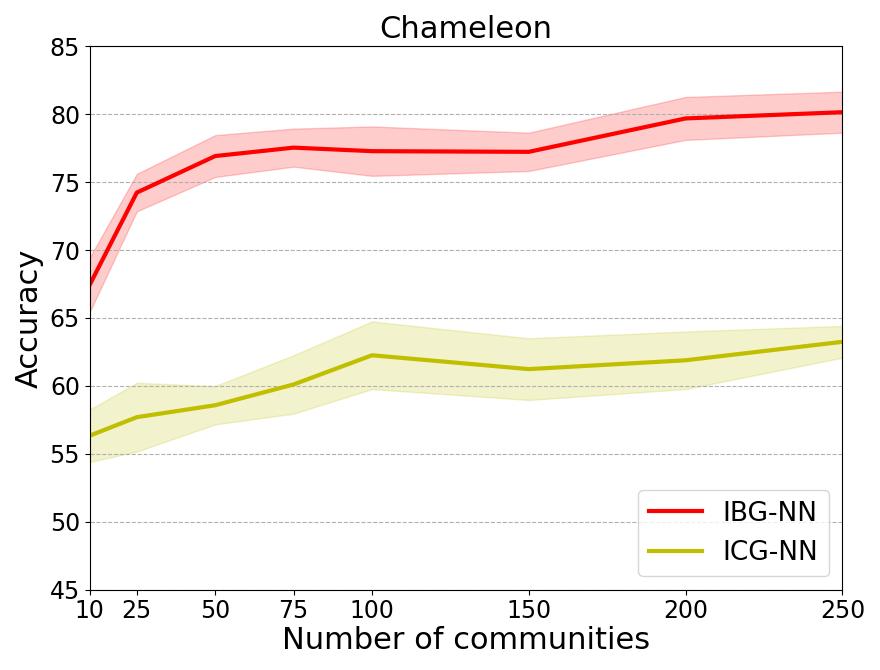}
\end{subfigure}
\caption{Accuracy of \ibgnns and \icgnns on the Squirrel (\textbf{left}) and Chameleon (\textbf{right}) datasets as a
function of the number of communities.}
\label{fig:num_communities}
\end{figure*}

\subsubsection{Approximation quality}

\paragraph{Setup.} We evaluate the effect of the number of communities $K$ on the densifying cut similarity of the \ibg approximation on the Chameleon and Squirrel datasets. We compare the weighted Frobenius error to the densifying cut similarity for a number of communities ranging from $10$ to $250$.

\paragraph{Results.}
As shown in \cref{fig:num_comm_approx}, both the densifying cut similarity and the weighted Frobenius error of \ibg decrease as the number of communities increases. This stands in line with our theory, and demonstrated the bound in \cref{eq:app_reg_mat_item2_then} also in fact holds in practice.

\begin{figure*}[h]
\centering
\begin{subfigure}{0.48\textwidth}
   \centering
   \includegraphics[width=\linewidth]{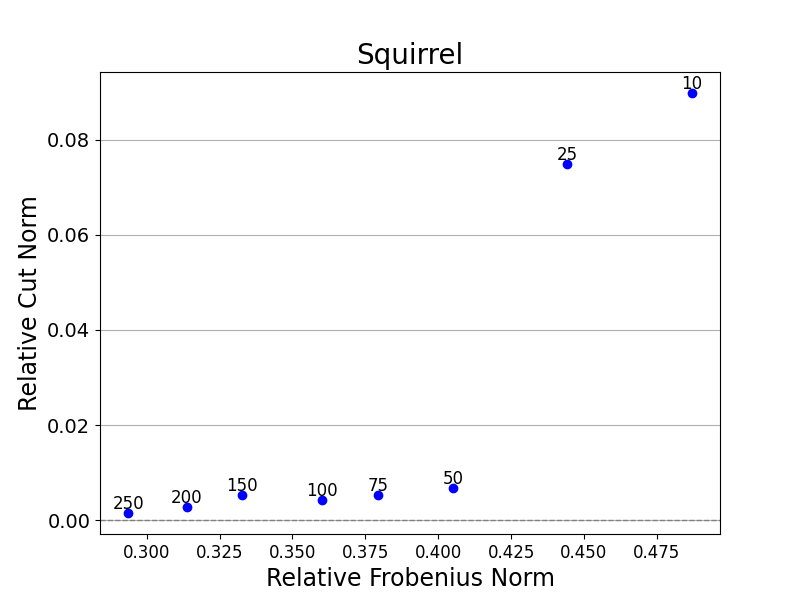}
\end{subfigure}
\begin{subfigure}{0.48\textwidth}
   \centering
   \includegraphics[width=\linewidth]{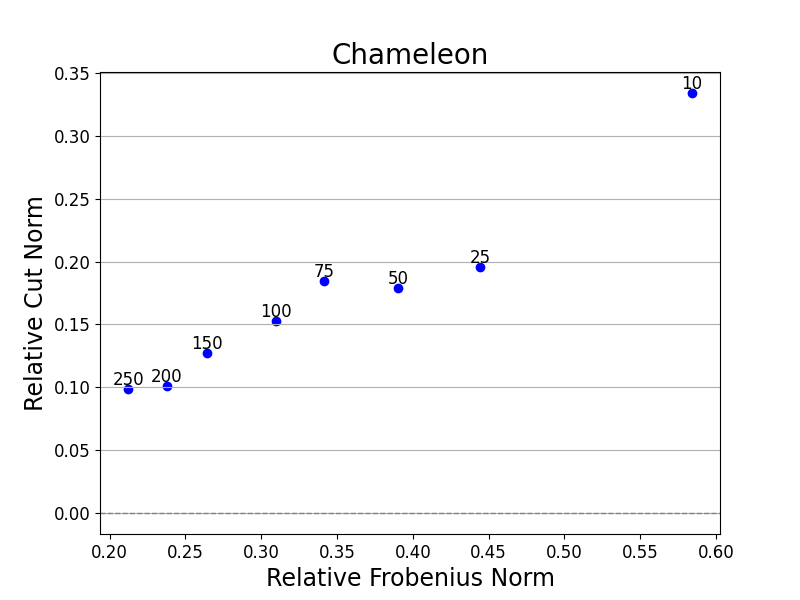}
\end{subfigure}
\caption{Densifying cut similarity as a function of the weighted Frobenius norm on the Squirrel (left) and Chameleon (right) datasets. The number of communities is specified above every point.}
\label{fig:num_comm_approx}
\end{figure*}

\subsection{The importance of densifying}
\label{app:lemma_validation}

\subsubsection{Densification for sparse graphs}
\paragraph{Setup.}
We evaluate the effect of our proposed densifying lemma by assessing the performance of \ibgnn as a function of $\Gamma$, a parameter that controls the weight assigned to non-edges relative to edges in the \ibg approximation. We explore $\Gamma$ values ranging from $0$, where weight is assigned only to existing edges, to $\frac{1 - (E / N^2)}{E / N^2}$, where a uniform weight of \(1\) is given to each entry in the adjacency matrix. Our experiments are conducted on the large node classification graph Arxiv-Year \citep{lim2021large} due to its low density (average degree of $6.89$).

\paragraph{Results.}
\Cref{fig:gamma_plot} presents that the perfromance of densified \ibgnns with $\Gamma=\frac{1 - (E / N^2)}{E / N^2}$ equals to that of \ibgnns learned with standard unweighted loss, matching expectations. More importantly, \Cref{fig:gamma_plot} shows that densified \ibgnns consistently outperform \ibgnns learned with standard unweighted loss, across all the densification scales, validating that the theoretical improvements also translate into significant practical benefits.

\begin{table*}[t]
\footnotesize
    \centering
    \caption{Comparison of node-sampling subgraph SGD with and without densification.}
    \footnotesize
    \begin{tabular}{l|ccc|ccc} 
        \toprule
        \multicolumn{1}{l|}{} & \multicolumn{3}{c|}{Flickr} & \multicolumn{3}{c}{Reddit} \\      
        Condensation ratio & 0.5\% & 1\% & 100\%& 0.1\% & 0.2\% & 100\%\\
        \midrule
        ICG-NN & 50.1 \stdfont{$\pm$ 0.2} & 50.8 \stdfont{$\pm$ 0.1} & 52.7 \stdfont{$\pm$ 0.1} & 89.7 \stdfont{$\pm$ 1.3} & 90.7 \stdfont{$\pm$ 1.5} & 93.6 \stdfont{$\pm$ 1.2} \\
        \ibgnn (no densification) & 49.6 \stdfont{$\pm$ 0.1} & 50.1 \stdfont{$\pm$ 0.2} & 52.1 \stdfont{$\pm$ 0.1} & 89.3 \stdfont{$\pm$ 1.1} & 90.9 \stdfont{$\pm$ 0.6} & 93.4 \stdfont{$\pm$ 0.5} \\
        \textbf{\ibgnn} & \textbf{50.7}\stdfont{$\pm$ 0.1} & \textbf{51.2} \stdfont{$\pm$ 0.2} & \textbf{53.0} \stdfont{$\pm$ 0.1} &  \textbf{92.3} \stdfont{$\pm$ 1.1} & \textbf{92.3} \stdfont{$\pm$ 0.6} & \textbf{94.1} \stdfont{$\pm$ 0.5} \\
        \bottomrule
    \end{tabular}    \label{tab:densification_compare}
\end{table*}

\begin{figure}
    \centering
    \includegraphics[width=0.5\linewidth]{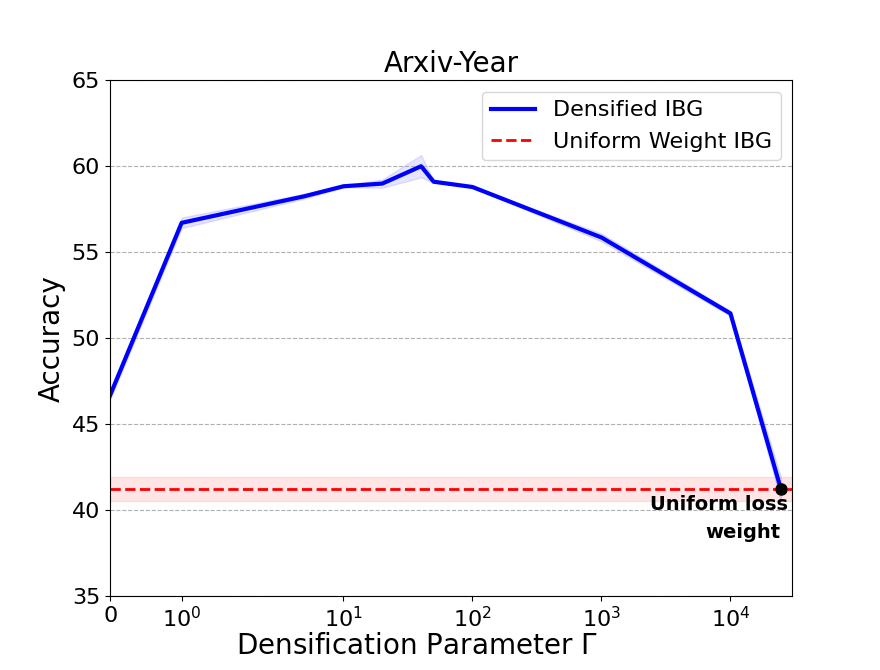}
    \caption{\ibgnn accuracy over the sparse dataset Arxiv-Year as a function of $\Gamma$.
    The dotted line is \ibgnn accuracy when using standard uniformly weighted loss for the \ibg approximation. The rightmost point is the value of $\Gamma$ which results in a uniform cut norm under the densifying loss.}
    \label{fig:gamma_plot}
\end{figure}

\subsubsection{Densification for large graphs}

\textbf{Setup.} We repeat the node-sampling subgraph SGD experiment from \cref{subsec:node_sgd_exp}. We compare \ibgnn with and without densification to \icgnn, evaluating the importance of densification for \ibgnn on the large, undirected graphs Flickr and Reddit.

\medskip
\textbf{Results.} \Cref{tab:densification_compare} reveals that densification significantly impacts the performance of \ibgnn for large graphs. Here, \ibgnn without densification achieves results comparable to \icgnn. When adding densification, \ibgnn significantly outperforms \icgnn, further demonstrating the importance of our contribution.

\vspace{0.4cm}
\subsubsection{Effect of densification on approximation quality}

\paragraph{Setup.}
We study the approximation error of \icg and \ibg on their target metrics, cut metric and densifying cut similarity, when approximating \emph{Erd\H{o}s-R{\'e}nyi} graphs with $1000$ nodes on a different range of edge probabilities. We set the densification parameter $\Gamma=1/2p$ when approximating $ER(1000, p)$.

\paragraph{Results.}
\cref{fig:approx_quality} clearly demonstrates that the approximation quality of \ibg remains consistent across different sparsity levels, while \icg's deteriorates as the graphs grow sparser. This greatly supports our claims that \ibgs learn a densified version of the original graph while still sharing the same structure with the original graph.

\begin{figure}[h]
    \centering
    \includegraphics[width=0.5\linewidth]{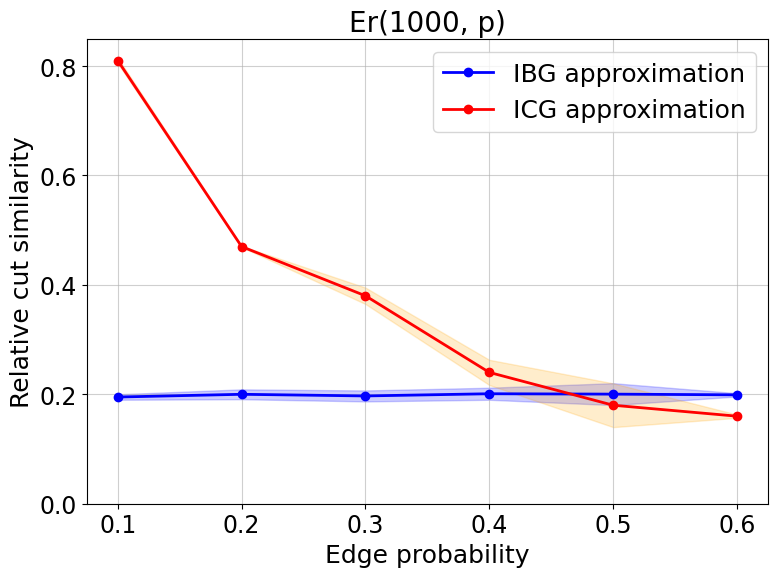}
    \caption{Cut norm and densifying cut similarity of \icg and \ibg approximations on $ER(1000, p)$ for different values of $p$.}
    \label{fig:approx_quality}
\end{figure}

\subsection{Efficiency analysis}

\subsubsection{Memory Analysis}
\label{app:memory}

\paragraph{Setup.} We compare the \ibg approximation and \ibgnn forward pass memory complexity to that of GCN. We use \emph{Erd\H{o}s-R{\'e}nyi} $\ER(n, p=0.5)$ graphs with up to $7k$ nodes and sample node features uniformly from $U[0, 1]$ with dimension $128$. We test \ibgnn and GCN with 3 layers, and hidden and output dimensions of 128.
\paragraph{Results.} 
\cref{fig:memory_complexity} clearly shows \ibg's memory complexity scales linearly with GCN, while \ibgnn's memory complexity has a square root relationship with that of GCN. This result stands strongly in line with the results in \cref{fig:ibg_vs_dirgnn_sparse_and_dense}, as both the time and memory complexity of \ibg approximation and \ibgnns are $\gO(E)$ and $\gO(N)$ respectively.

\vspace{0.2cm}
\begin{figure*}[h]
\centering
\begin{subfigure}{0.48\textwidth}
   \centering
   \includegraphics[width=\linewidth]{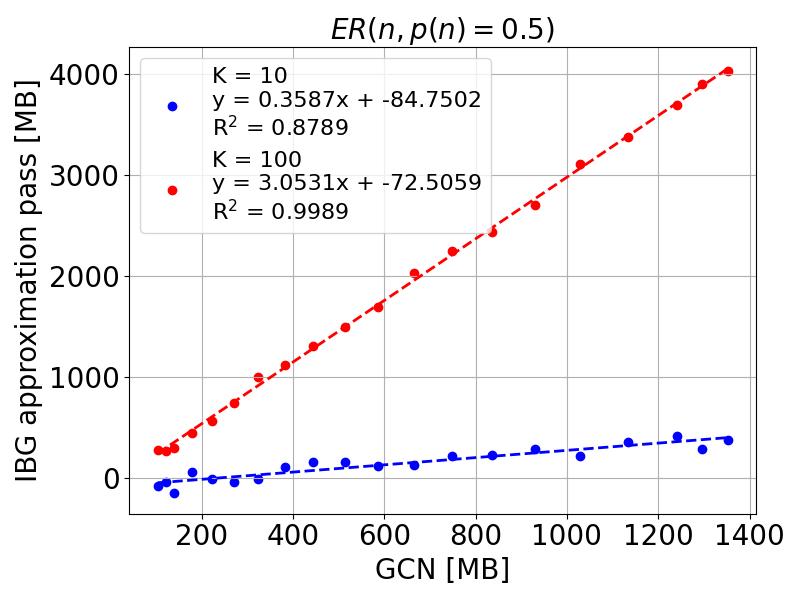}
\end{subfigure}
\begin{subfigure}{0.48\textwidth}
   \centering
   \includegraphics[width=\linewidth]{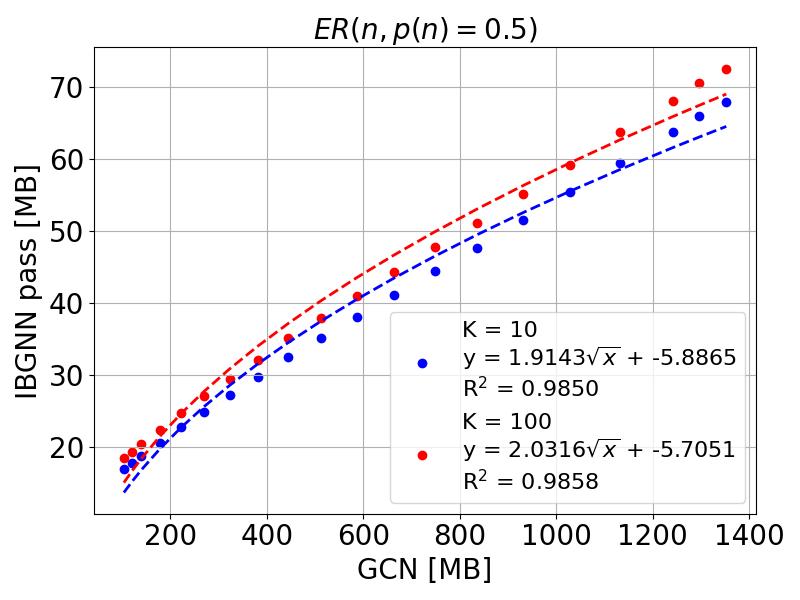}
\end{subfigure}
\caption{Memory complexity of K-\ibg (left) and K-\ibgnn (right) as a function of GCN forward pass on $\ER(n, p=0.5)$ graphs for K=10, 100.}
\label{fig:memory_complexity}
\end{figure*}

\subsubsection{Time until convergence of \ibg approximation}
\label{exp:convergence}

\paragraph{Setup.} We test the time until convergence (in seconds) of the \ibg approximation process when using our proposed SVD initialization method (\ref{app:init}) vs. when using a random initialization. We consider the \ibg representation to have converged if the loss in (\ref{eq:dir_inefficient_loss}) has not improved by $0.5\%$ for over $500$ epochs. We perform our comparison on the directed graphs Squirrel, Chameleon, and Tolokers, reporting mean time until convergence and standard deviation over $5$ different seeds.

\vspace{0.3cm}
\begin{table}[h]
    \caption{Time until convergence in seconds on directed graphs.}
    \label{tab:initialization}
    \centering
    \begin{tabular}{lccc}
        \toprule
         & Squirrel & Chameleon & Tolokers\\
        \# nodes      & 5201 & 2277 & 11758 \\
        \# edges      & 217073 & 36101 & 519000 \\
        avg. degree   & 41.71 & 15.85 & 88.28 \\
        \midrule
        random init. & 139.63 \stdfont{$\pm$ 10.58} & 101.69 \stdfont{$\pm$ 6.56} & 184.54 \stdfont{$\pm$ 13.94}\\
        eigenvector init. & 107.20 \stdfont{$\pm$ 2.73} & 99.22 \stdfont{$\pm$ 7.24} & 65.89  \stdfont{$\pm$ 0.39}\\
        \bottomrule
    \end{tabular}
\end{table}

\vspace{0.4cm}
\paragraph{Results.} \cref{tab:initialization} indicates that SVD initialization consistently converges faster than random initialization across all datasets. Notably, for Tolokers, using SVD initialization results in nearly $3\times$ faster convergence, highlighting the benefits of the method.

\subsection{Additional experiments}

\subsubsection{Hyperparameter sensitivity}

\textbf{Setup.} We evaluate the sensitivity of IBG-NN to the number of communities $K$ and the densification parameter $\Gamma$ on three datasets: Squirrel, Chameleon \citep{pei2020geom}, and Arxiv-Year \citep{lim2021large}. We train IBGs with varying values of $K$ and $\Gamma$, with $K \in \{25, 50, 100, 200, 250, 400\}$ and $e_{E,\Gamma} \in \{0, 0.01, 0.05, 0.1, 0.25, 1\}$, where $e_{E,\Gamma}=1$ implies no densification was used. We report test accuracy averaged over the 10 of \cite{pei2020geom} splits for Squirrel and Chameleon, and over 5 seeds for Arxiv-Year.

\vspace{0.2cm}
\begin{figure}[h]
    \centering
    \includegraphics[width=0.9\linewidth]{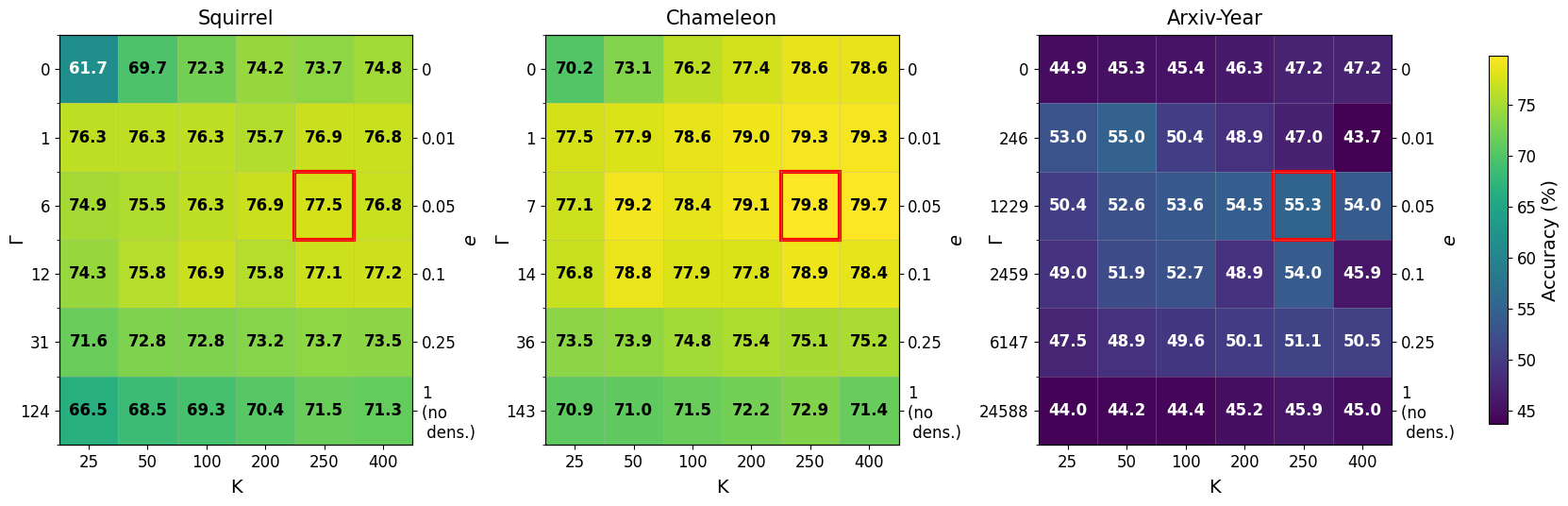}
    \caption{Accuracy of \ibgnn across different node classification benchmarks under varying values of $K$ and $e_{E,\Gamma}$. Top score is marked by a red boarder.}
    \label{fig:param_hitmap}
\end{figure}

\textbf{Results.} \cref{fig:param_hitmap} shows test accuracy heatmaps for different $(K, \Gamma)$ combinations, with red boxes indicating optimal configurations. The results reveal that $K=250$ and $e_{E,\Gamma}=0.05$ consistently achieve top performance across all three datasets. Moreover, a general trend seems to be that ($K$, $e_{E,\Gamma}$) points that are further away from ($K=250$, $e_{E,\Gamma}=0.05$) have worse performance than closer ones, suggesting ($K=250$, $e_{E,\Gamma}=0.05$) as a promising starting point.

The table also showcases ($K=25$, $e_{E,\Gamma}=0.01$) as a more efficient alternative which uses less communities, while still remaining close in accuracy to the top-performing choice.

\subsubsection{SVD initialization}

\paragraph{Setup.} We test the IBG approximation quality and downstream \ibgnn performance when using random initalization compared to SVD initialization.
We report results on the directed node classification benchmarks Squirrel, Chamelon, and Tolokers.
To evaluate the IBG approximation we report the \ibg loss (\cref{eq:dir_inefficient_loss}), with standard deviation over 5 seeds. We report average ROC AUC and standard deviation for Tolokers, and average accuracy and standard deviation for Squirrel and Chameleon, following the 10 splits of \cite{platonov2023critical,pei2020geom}.

\vspace{0.3cm}
\begin{table}[h]
    \caption{Effect of initialization on \ibg loss and downstream \ibgnn accuracy.}
    \label{tab:initialization_rebuttle}
    \centering
    \resizebox{\columnwidth}{!}{%
    \begin{tabular}{lcccccc}
        \toprule
         & \multicolumn{2}{c}{Squirrel} & \multicolumn{2}{c}{Chameleon} & \multicolumn{2}{c}{Tolokers}\\
        \cmidrule(lr){2-3} \cmidrule(lr){4-5} \cmidrule(lr){6-7}
         & IBG loss & IBGNN acc. & IBG loss & IBGNN acc. & IBG loss & IBGNN acc.\\
        \midrule
        random init.      & 0.22 \stdfont{$\pm$ 0.01} & 77.36 \stdfont{$\pm$ 1.62} &
                            0.19 \stdfont{$\pm$ 0.01} & 79.59 \stdfont{$\pm$ 1.02} &
                            0.40 \stdfont{$\pm$ 0.01} & 83.41 \stdfont{$\pm$ 0.87}\\
        eigenvector init. & 0.23 \stdfont{$\pm$ 0.01} & 77.41 \stdfont{$\pm$ 1.79} &
                            0.17 \stdfont{$\pm$ 0.01} & 79.65 \stdfont{$\pm$ 1.13} &
                            0.40 \stdfont{$\pm$ 0.01} & 83.40 \stdfont{$\pm$ 0.75}\\
        \bottomrule
    \end{tabular}%
    }
\end{table}

\paragraph{Results.} As seen in \cref{tab:initialization_rebuttle}, SVD initialization results in little to no change in IBG approximation quality and IBGNN performance. This further demonstrates the practicality of our initialization, as it offers improved approximation runtime efficiency with no performance drawback.

\subsubsection{Convergence of IBG approximation}

\paragraph{Setup.} To validate the stability and efficiency of our proposed method, we analyze the progression of the IBG approximation loss throughout the optimization process. We conduct experiments on three benchmark datasets: Squirrel, Chameleon, and Arxiv-Year. For each dataset, we track the IBG loss over 5,000 epochs, reporting average loss and standard deviation over 5 different seeds.

\begin{figure}[h]
    \centering
    \includegraphics[width=0.75\linewidth]{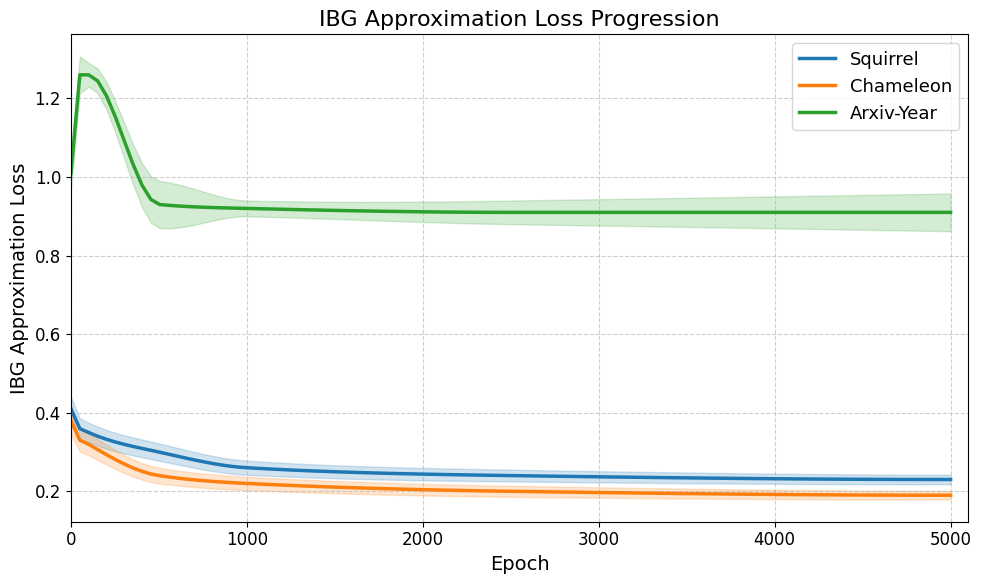}
    \caption{IBG approximation loss during IBG approximation process for Squirrel, Chameleon, and Arxiv-Year.}
    \label{fig:loss_curve}
\end{figure}

\paragraph{Results.} As seen in \cref{fig:loss_curve} across all datasets, the loss converges rapidly in the early stages of training, with minimal improvement observed after 1000 epochs. While the loss continues to decrease slowly with additional training, we observe that extending training beyond 10000 epochs rarely leads to additional improvement in downstream IBGNN performance.

\subsubsection{Topological analysis}

\paragraph{Setup.} To evaluate how graph structure may affect \ibg approximation quality and \ibgnn performance, we analyze several graph properties (Homophily level, Density, Cheeger constant, and Spectral gap) across six datasets, and examine their relationship to the \ibg approximation quality and downstream \ibgnn performance.

\vspace{0.4cm}
\begin{table}[h]
\caption{IBG approximation error and downstream performance over .}
\label{tab:graph_properties}
\begin{center}
\begin{small}
\setlength{\tabcolsep}{4pt} 
\begin{tabular}{lcccccc}
\toprule
& \multicolumn{3}{c}{Homophilic} & \multicolumn{3}{c}{Heterophilic} \\
\cmidrule(lr){2-4} \cmidrule(lr){5-7}
Metric & Citeseer-full & Cora-ML & Ogbn-Arxiv & Chameleon & Squirrel & Arxiv-year \\
\midrule
Homophily & 0.949 & 0.792 & 0.655 & 0.235 & 0.223 & 0.221 \\
Density & 0.0006 & 0.0009 & 0.0001 & 0.0121 & 0.0147 & 0.0001 \\
Cheeger Const. & $\sim 0$ & $\sim 0$ & 0.0004 & 0.0032 & 0.0218 & 0.0004 \\
Spectral Gap & 2.6 & 1.7 & 58.9 & 25.1 & 194.9 & 58.9 \\
\midrule
\ibg Approx. Error $\downarrow$ & 0.56 & 0.75 & 0.89 & 0.17 & 0.23 & 0.91 \\
\ibgnn Performance $\uparrow$ & 92.40 & 84.20 & 65.30 & 80.15 & 77.63 & 60.14 \\
\bottomrule
\end{tabular}
\end{small}
\end{center}
\end{table}

\paragraph{Results.} \cref{tab:graph_properties} reveals that, despite the differences across the homophily level, density, Cheeger constant, and spectral gap, \ibgnn performance shows no clear correlation with these properties. While \ibg approximation error is lower for denser graphs such as Chameleon and Squirrel, this does not directly translate to superior \ibgnn performance. These findings may suggest that graph structure alone does not determine \ibgnn performance, as factors like node feature quality and task-specific characteristics also play crucial roles.

\section{Choice of datasets}
We present \ibgnns, a method best suited for large graphs, capable of effectively approximating directional graphs. Consequently, our primary experiments in \cref{subsec:exp_sgd} follow the standard graph coarsening and condensation benchmarks Reddit \citep{GraphSAGE}, Flickr \citep{zeng2019graphsaint}, Ogbn-Arxiv and products \citep{hu2020open}, where \ibgnns demonstrates state-of-the-art performance. We note that the large-graph benchmarks we use are undirected, as there are no commonly adopted large-scale directed graph datasets.

Additionally, \ibgnns are used for handling directed graphs. To evaluate its performance in this domain, we conduct experiments on standard directed graph datasets Squirrel, Chameleon \citep{pei2020geom}, and Tolokers \citep{platonov2023critical}, where it also achieves state-of-the-art results. Finally, our graph approximation method proves particularly efficient for spatio-temporal datasets, where a fixed topology supports time-varying signals. We validate this by experimenting with \ibgnns on the METR-LA and PEMS-BAY \cite{li2017diffusion} datasets, where it matches the effectiveness of methods that are specifically designed for spatio-temporal data.

\section{Dataset statistics}
\label{app:statistics}

The dataset statistics of the real-world directed graphs, spatio-temporal, graph coarsening and knowledge graph benchmarks used are presented below in \cref{tab:node-classification-stats,tab:spatio-temporal-stats,tab:coarsen-stats,tab:kg-stats}.
\vspace{0.2cm}

\begin{table}[H]
\caption{Non-sparse node classification dataset statistics.}
  \centering
  \begin{tabular}{l@{\hspace{6pt}}c@{\hspace{6pt}}c@{\hspace{6pt}}c@{\hspace{6pt}}}
    \toprule
      & Squirrel & Chameleon & Tolokers \\
      \midrule
      \# nodes (N) & 5,201 & 2,277 & 11,758 \\
      \# edges (E) & 217,073 & 36,101 & 519,000 \\
      Avg. degree ($\frac{E}{N}$) & 41.71 & 15.85 & 88.28 \\
      \# node features & 2089 & 2325 & 10 \\
      \# classes & 5 & 5 & 2 \\
      Metrics & Accuracy & Accuracy & ROC AUC \\
    \bottomrule
  \end{tabular}
  \label{tab:node-classification-stats}
\end{table}

\begin{table}[H]
\caption{Spatio-temporal dataset statistics.}
  \centering
  \begin{tabular}{l@{\hspace{6pt}}c@{\hspace{6pt}}c@{\hspace{6pt}}}
    \toprule
      & METR-LA & PEMS-BAY\\
      \midrule
      \# nodes (N) & 207 & 325\\
      \# edges (E) & 1,515 & 2,369 \\
      Avg. degree ($\frac{E}{N}$) & 7.32 & 7.29\\
      \# node features & 34272 & 52128\\
      Metrics & MAE & MAE \\
    \bottomrule
  \end{tabular}
  \label{tab:spatio-temporal-stats}
\end{table}

\begin{table}[H]
\caption{Graph coarsening dataset statistics.}
\centering
\begin{tabular}{l@{\hspace{6pt}}c@{\hspace{6pt}}c@{\hspace{6pt}}c@{\hspace{6pt}}c}
\toprule
& Flickr & Reddit & Ogbn-Arxiv & Products \\
\midrule
\# nodes (N)         & 89{,}250   & 232{,}965   & 169{,}343   & 2{,}449{,}029 \\
\# edges (E)         & 899{,}756  & 114{,}615{,}892 & 1{,}166{,}243 & 61{,}859{,}140 \\
Avg. degree ($E/N$)  & 10.08      & 491.99      & 6.89        & 25.26 \\
\# node features     & 500        & 602         & 128         & 100 \\
\# classes           & 7          & 41          & 40          & 47 \\
Metrics              & Accuracy   & Micro-F1    & Accuracy    & Accuracy \\
\bottomrule
\end{tabular}
\label{tab:coarsen-stats}
\end{table}

\begin{table}[H]
\caption{Knowledge graph completion dataset statistics.}
    \centering
    \begin{tabular}{l@{\hspace{6pt}}c@{\hspace{6pt}}c@{\hspace{6pt}}}
        \toprule
        & Kinship & UMLS\\
        \midrule
        \# entities & 104 & 135\\
        \# relations  & 25 & 46\\
        \# training triples & 8,544 & 5,216\\
        \# validation triples & 1,068 & 652\\
        \# testing triples & 1,074 & 661\\
        Avg. train. degree & 82.15 & 38.64\\       
        \bottomrule
    \end{tabular}
    \label{tab:kg-stats}
\end{table}
\vspace{-0.5cm}
\section{Hyperparameters}
\label{app:hyperparameters}
All experiments are conducted on a single NVIDIA L40 GPU, using the Adam optimizer.

In \cref{tab:node-classification-hps,tab:spatio-temporal-hps,tab:coarsen-hps,tab:kg-hps}, we report the hyper-parameters used in our real-world directed graphs, spatio-temporal, graph coarsening and knowledge graph completion benchmarks.

For our spatio-temporal experiments, we utilize the time of the day and the one-hot encoding of the day of the week as additional features, following \cite{cini2024taming}.
Additionaly, for spatio-temporal graphs we usually ignore the signal when fitting the \ibg to the graph, but one can also concatenate random training signals and reduce their dimension to obtain one signal with low dimension $D$ to be used as the target signal for the \ibg optimization.
\vspace{0.5cm}

\begin{table}[h]
\caption{Non-sparse node classification hyperparameters.}
  \centering
  \begin{tabular}{l@{\hspace{6pt}}c@{\hspace{6pt}}c@{\hspace{6pt}}c@{\hspace{6pt}}}
    \toprule
      & Squirrel & Chameleon & Tolokers\\
      \midrule
      \# communities & 250 & 250 & 50 \\
      Encoded dim & 128 & 128 & -\\
      $\beta/\alpha$ & 1 & 1 & 1\\
      $\Gamma$ & 20 & 5 & 5\\
      Approx. lr & 0.03 & 0.03 & 0.03\\
      Approx. epochs & 10000 & 10000 & 10000\\
      \midrule
      \# layers & 7 & 6 & 4\\
      Hidden dim & 128 & 128 & 128 \\
      Dropout & 0.2 & 0.2 & 0.2\\
      Residual connection & - & \checkmark & \checkmark\\
      Jumping knowledge & Cat & Cat & Max\\
      Normalization & True & True & True\\
      Fit lr & 0.003 &  0.003 &  0.003\\
      Fit epochs & 1500 & 1500 & 1500\\
    \bottomrule
  \end{tabular}
  \label{tab:node-classification-hps}
\end{table}

\vspace{0.7cm}
\begin{table}[h]
\caption{Spatio-temporal node regression hyperparameters.}
  \centering
  \begin{tabular}{l@{\hspace{6pt}}c@{\hspace{6pt}}c@{\hspace{6pt}}}
    \toprule
      & METR-LA & PEMS-BAY\\
      \midrule
      \# communities & 50 & 100\\
      Encoded dim & - & -\\
      $\beta/\alpha$ & 0 & 0\\
      $\Gamma$ & 0 & 0\\
      Approx. lr & 0.01, 0.05 & 0.01, 0.05\\
      Approx. epochs & 10000 & 10000\\
      \midrule
      \# layers & 6 & 6\\
      Hidden dim & 128 & 128\\
      Dropout & 0.0 & 0.0 \\
      Residual connection & \checkmark & \checkmark\\
      Jumping knowledge & Max & Cat\\
      Normalization & True & False \\
      Fit lr & 0.003 & 0.001\\
      Fit epochs & 300 & 300\\
    \bottomrule
  \end{tabular}
  \label{tab:spatio-temporal-hps}
\end{table}

\vspace{0.4cm}
\begin{table}[H]
\caption{Graph coarsening node classification hyperparameters. On Flickr and Reddit we use $750$ communities for a condensation ratio of $100\%$, and $50$ communities for the other settings.}
\centering
\begin{tabular}{l@{\hspace{6pt}}c@{\hspace{6pt}}c@{\hspace{6pt}}c@{\hspace{6pt}}c}
\toprule
& Reddit & Flickr & Ogbn-Arxiv & Products \\
\midrule
\# communities     & 50, 750 & 50, 750 & 50 & 50 \\
Encoded dim        & -- & -- & -- & -- \\
$\beta/\alpha$     & 0  & 0  & 0  & 0  \\
$\Gamma$           & 5  & 5  & 5  & 5  \\
Approx. lr         & 0.05 & 0.05 & 0.05 & 0.05 \\
Approx. epochs     & 1000  & 1000  & 2500  & 2500  \\
\midrule
\# layers          & 4 & 4 & 3 & 4 \\
Hidden dim         & 128 & 256 & 256 & 128 \\
Dropout            & 0  & 0  & 0  & 0  \\
Residual connection& \checkmark & -- & \checkmark & \checkmark \\
Jumping knowledge  & Max & Cat & Max & Max \\
Normalization      & True & True & True & True \\
Fit lr             & 0.003 & 0.003 & 0.003 & 0.003, 0.005 \\
Fit epochs         & 1500 & 1500 & 1500 & 1500 \\
\bottomrule
\end{tabular}
\label{tab:coarsen-hps}
\end{table}
\vspace{0.5cm}

\begin{table}[H]
    \caption{Knowledge graph completion hyperparameters.}
    \centering
    \begin{tabular}{l@{\hspace{6pt}}c@{\hspace{6pt}}c@{\hspace{6pt}}}
    \toprule
      & Kinship & UMLS\\
      \midrule
      \# communities & 20 & 15 \\
      Encoded dim & 24 & 48 \\
      Approx. lr & 0.05 & 0.003 \\
      Approx. epochs & 250 & 750 \\
      \# negative samples & 64 & 128\\      
    \bottomrule
    \end{tabular}
    \label{tab:kg-hps}
\end{table}

\stopcontents[appendices]

\end{document}